\newtheorem{thm}{Theorem}
\newtheorem*{thmm}{Theorem}
\newtheorem{theorem}{Theorem}[section]
\newtheorem{proposition}{Proposition}[section]
\newtheorem{lemma}{Lemma}[section]
\newtheorem*{corollary}{Corollary}
\theoremstyle{definition}
\newtheorem{de}{Definition}[section]
\newtheorem{definition}{Definition}[section]
\newtheorem*{transflaw}{Transformation Law}
\newtheorem{ex}{Example}[section]
\newtheorem{example}{Example}[section]
\newtheorem*{remark}{Remark}
\newcommand{\co}{\colon\thinspace}
\newcommand{\mathcall}{\EuScript}
\renewcommand{\leq}{\leqslant}
\renewcommand{\geq}{\geqslant}
\renewcommand{\L}{{\Lambda}}
\DeclareMathOperator{\ad}{ad}
\DeclareMathOperator{\GL}{GL}
\DeclareMathOperator{\Ker}{Ker}
\renewcommand{\Im}{\mathop{\mathrm{Im}}}
\DeclareMathOperator{\Dim}{dim}
\DeclareMathOperator{\dimt}{\widetilde{dim}}
\DeclareMathOperator{\gh}{gh}
\DeclareMathOperator{\w}{{{w}}}  
\DeclareMathOperator{\pr}{{{\mathrm{\varepsilon}}}} 
\DeclareMathOperator{\Vect}{\mathrm{Vect}}
\DeclareMathOperator{\Vectn}{\Vect^{{\displaystyle{}-{}}}}
\DeclareMathOperator{\Vectp}{\Vect^{+}}
\DeclareMathOperator{\Mat}{Mat}
\DeclareMathOperator{\Hom}{Hom}
\DeclareMathOperator{\Fun}{\mathit{C}^{\infty}}
\DeclareMathOperator{\Map}{\mathrm{Map}}
\DeclareMathOperator{\Mapp}{\mathbf{Map}}
\DeclareMathOperator{\Fune}{\mathscr{E}}
\newcommand{\Funes}{\mathscr{E}}
\newcommand{\der}[2]{{\frac{\partial {#1}}{\partial {#2}}}}
\newcommand{\lder}[2]{{\partial {#1}/\partial {#2}}}
\newcommand{\dder}[3]{{\frac{\partial^2 {#1}}{\partial {#2}\partial {#3}}}}
\newcommand{\var}[2]{{\frac{\delta {#1}}{\delta {#2}}}}
\newcommand{\R}[1]{{\mathbbm R}^{#1}}
\newcommand{\RR}{\mathbbm R}
\newcommand{\Z}{{\mathbbm Z_{2}}}
\newcommand{\ZZ}{{\mathbbm Z}}
\newcommand{\ZP}{{\mathbbm Z}^{\geq 0}}
\newcommand{\SZ}{\hat{\mathbbm Z}}
\newcommand{\NN}{{\mathbbm N}}
\newcommand{\p}{\partial}
\newcommand{\fun}{C^{\infty}}
\newcommand{\frakg}{\mathfrak{g}}
\renewcommand{\a}{{\alpha}}
\renewcommand{\b}{{\beta}}
\newcommand{\F}{{\Phi}}
\newcommand{\h}{{\eta}}
\newcommand{\f}{{\varphi}}
\newcommand{\la}{{\lambda}}
\newcommand{\x}{{\xi}}
\newcommand{\e}{{\varepsilon}}
\newcommand{\s}{{\sigma}}
\newcommand{\xp}{{\boldsymbol{x}}}
\newcommand{\up}{{\boldsymbol{u}}}
\newcommand{\att}{{\tilde \alpha}}
\newcommand{\ft}{{\tilde f}}
\newcommand{\itt}{{\tilde i}}
\newcommand{\at}{{\tilde a}}
\newcommand{\bt}{{\tilde b}}
\newcommand{\ut}{{\tilde u}}
\newcommand{\vt}{{\tilde v}}
\newcommand{\xt}{{\tilde x}}
\newcommand{\funn}{\mathbf{C}^{\infty}}
\DeclareMathOperator{\funnh}{\mathbf{C^{\infty}_{\hbar}}}
\DeclareMathOperator{\pfunn}{\boldsymbol{\Pi}\mathbf{\!C^{\,\infty}}}
\newcommand{\tto}{{\linethickness{1.7pt}
		  \,\begin{picture}(1,0)
                   \put(0.075,0.296){\line(1,0){0.75}}
                   \put(0,0){$\boldsymbol{\Rightarrow}$}
                  \end{picture}
                  }\,
}
\newcommand{\oto}{{\linethickness{0.5pt}
		  \,\begin{picture}(1,0)
		  \put(0.07,0.175){\line(0,1){0.2}}
                   \put(-0.01,0){$\boldsymbol{\Rightarrow}$}
                  \end{picture}
                  }\,
}
\DeclareMathOperator{\ofun}{\mathit{OC_{\hbar}^{\infty}}}
\DeclareMathOperator{\SMan}{{\mathcall{SM}an}}
\newcommand{\ttoq}{\tto_{\hbar}}
\newcommand{\Sinf}{S_{\infty}}
\newcommand{\Pinf}{P_{\infty}}
\newcommand{\Linf}{L_{\infty}}
\newcommand{\Sinfh}{S_{\infty,\hbar}}
\newcommand{\Dbar}{{\mathchar '26 \mkern-11 mu D}}
\newcommand{\vp}{\boldsymbol{v}}
\newcommand{\infto}{\rightsquigarrow}
\DeclareMathOperator{\EThick}{{{\mathcall{ET}hick}}}
\DeclareMathOperator{\OThick}{{{\mathcall{OT}hick}}}
\newcommand{\kir}{\boldsymbol{\kappa}}
\newcommand{\okir}{\boldsymbol{\chi}}
\newcommand{\D}{\Delta}
\DeclareMathOperator{\ord}{ord}
\DeclareMathOperator{\graph}{graph}
\DeclareMathOperator{\sign}{sgn}
\newcommand{\0}{{00}}
\newcommand{\zeroloc}{\mathfrak{Z}}
\newcommand{\blocq}{\mathcal{B}}
\keywords{Graded manifold, homotopy bracket, microformal morphism, thick morphism}
\title{Graded Geometry, $Q$-Manifolds, and Microformal Geometry}
\author[Th.~Th. Voronov]{Theodore~Th. Voronov\inst{a,b}\footnote{Corresponding author e-mail:~\href{mailto:theodore.voronov@gmail.com}{\textsf{theodore.voronov@gmail.com}}}}
\address[1]{School of Mathematics,  University of Manchester,    Manchester,   M13 9PL,  United Kingdom}
\address[2]{Faculty of Physics, Tomsk State University, Tomsk, 634050, Russia}
\begin{abstract}
We give an exposition of graded   and microformal geometry, and the language of $Q$-manifolds. $Q$-manifolds are supermanifolds endowed with an odd vector field of square zero. They can be seen as a non-linear analogue of Lie algebras (in parallel with even and odd Poisson manifolds),   a basis of ``non-linear homological algebra'', and    a powerful tool for describing algebraic and geometric structures. This language goes together with that of
graded manifolds, which are supermanifolds with an extra $\mathbbm{Z}$-grading in the structure sheaf. ``Microformal geometry'' is a new notion referring to   ``thick'' or ``microformal''  morphisms, which generalize ordinary smooth maps, but whose crucial feature is that the corresponding pullbacks of functions are   nonlinear. In particular,  ``Poisson thick morphisms'' of homotopy Poisson supermanifolds induce  $L_{\infty}$-morphisms of homotopy Poisson brackets. There is a quantum version based on special type Fourier integral operators and applicable to Batalin--Vilkovisky geometry.
Though the text is mainly expository, some results   are new or   not   published previously.
\end{abstract}
\begin{document}
\maketitle

\section{Introduction}

The purpose of this text is to give an overview of graded geometry, i.e. the theory of graded manifolds, which are a version of supermanifolds (namely, supermanifolds endowed with an extra grading by integers in the algebra of functions) that have attracted much attention in recent years, and   an introduction to the new   area of microformal geometry, whose main feature is the new notion of ``microformal'' or ``thick'' morphisms generalizing ordinary smooth maps. These two topics are related by the type of applications, which are structures such as homotopy algebras   ultimately motivated by physics, in particular by ``gauge symmetries'' in broad sense. Key for description of homotopy structures is the language of $Q$-manifolds (see below), to which we  give a brief introduction as well.

``Thick morphisms'' (defined for ordinary manifolds, supermanifolds or graded manifolds) generalize ordinary maps or supermanifold morphisms, but are not maps themselves. They are defined as special type canonical relations or correspondences between the cotangent bundles. Canonical relations have long been a standard tool in symplectic or Poisson geometry, perceived as an extension of the notion of a canonical transformation (symplectomorphism) or a Poisson map. In the context of microformal geometry they play a different role as they are used for replacing ordinary maps of manifolds (the bases of the cotangent bundles). We define pullbacks of functions by thick morphisms, with the crucial new property of being (in general) non-linear. More precisely, the pullback by a thick morphism $\F\co M_1\tto M_2$ is a formal non-linear differential operator $\F^*\co \funn(M_2)\to \funn(M_1)$, which is a formal perturbation of an ordinary pullback (by some map that ``sits inside'' any thick morphism). Such a non-linear transformation has a remarkable feature that its derivative or variation at every function is the ordinary pullback by an ordinary map (more precisely, a formal perturbation of an ordinary map) and hence an algebra homomorphism. It remains an open question whether the non-linear pullbacks by thick morphisms can be characterized by this property.

The discovery of thick morphisms resulted from our search of a natural differential-geometric construction that would give non-linear maps of spaces of functions regarded as infinite-dimensional (super)manifolds. This was necessary for $\Linf$-morphisms of bracket structures.
Indeed, the most efficient way of describing various bracket structures, particularly  homotopy bracket structures, is the language of $Q$-manifolds, i.e. supermanifolds, possibly   graded manifolds, endowed with an odd vector field $Q$ satisfying $Q^2=0$.
The superiority of this geometric language  is proved when morphisms are considered: complicated and non-obvious algebraic definitions e.g. for $\Linf$-algebras or Lie algebroids over different bases are described with great simplification and uniformity as nothing but $Q$-maps of $Q$-manifolds, i.e. maps of the underlying supermanifolds that intertwine the corresponding vector fields $Q_1$ and $Q_2$. Non-linearity in such a map  is responsible for ``higher homotopies'' in the algebraic language.

(In physics parlance,  a homological vector field $Q$ is an infinitesimal ``BRST transformation''. In mathematics, particular instances of homological vector field have been known as various  differentials, e.g. the de Rham differential or Chevalley--Eilenberg differential. The power of the $Q$-manifold language was demonstrated by Kontsevich's formulation and proof of the formality theorem implying the existence of deformation quantization of arbitrary Poisson structures, which would be impossible without it.)

Therefore, in the case when a bracket structure is defined on functions, there is the need for a construction naturally leading to non-linear maps between spaces of functions. Clearly, ordinary pullbacks cannot serve this purpose as they are algebra homomorphisms and in particular linear. We came to the new ``non-linear pullbacks'' of functions and   the underlying ``thick  morphisms'' of (super)manifolds by solving a very concrete problem concerning the higher analogue of Koszul bracket on differential forms (introduced earlier by H.~Khudaverdian and the author) corresponding to a homotopy  Poisson structure. In the classical case of a usual Poisson bracket and the induced by it odd  Koszul bracket on forms, the classical fact in Poisson geometry was that raising indices with the help of the Poisson tensor maps the Koszul bracket on forms to the canonical Schouten bracket (the ``antibracket'') on multivectors. In the homotopy case, an analogue of that posed a big problem, since there is only one antibracket on multivectors and a whole infinite sequence of ``higher Koszul brackets'' on forms. Hence only an $\Linf$-morphism linking them would be possible, i.e. a non-linear transformation of forms to multivectors. This has been indeed achieved with the help of thick morphisms and pullbacks by them. It is absolutely certain now that any theory of homotopy brackets on manifolds (super or graded) should use thick morphisms and will be incomplete otherwise.

Although as explained microformal geometry, i.e. the theory of (super-, graded) manifolds with thick morphisms, owes its birth to Poisson and other bracket structures and their homotopy versions,  we wish to stress that in itself it does not assume on manifolds any additional structure and as such is an extension of differential topology with a larger class of morphisms. Also, the applications require considering graded or super case and this is the most natural framework for us, but the construction of thick morphisms  has nothing particularly super  as such and makes perfect sense in an entirely even context.

In the super case, there are two parallel versions of thick morphisms, adapted for pullback of even and odd functions respectively (``bosonic'' and ``fermionic''). Indeed, a non-linear transformation  cannot be applied indiscriminately to elements of an algebra satisfying different commutation rules, hence the need to distinguish between even and odd functions. While the ``bosonic'' version of thick morphism uses the symplectic geometry of cotangent bundles $T^*M$, the ``fermionic'' version uses the odd symplectic structure on anticotangent bundles $\Pi T^*M$. Also, the bosonic case can be further lifted on a quantum level. There are ``quantum thick morphisms'', which are (up to reversion of arrows) particular type Fourier integral operators. The ``classical'' thick morphisms are recovered in the limit $\hbar\to 0$ (similarly with Hamilton--Jacobi equation and Schr\"{o}dinger equation).

To put  the topics of this paper in a broader context, recall that there is a general philosophical principle of a certain ``duality'' between algebraic and geometric languages. More specifically, there is  a duality between commutative algebras and ``spaces'' (understood in the broadest sense). With every ``space'' (such as a topological space or a manifold or an algebraic variety) we can associate an algebra, which is an appropriate algebra of functions, and with a map of such spaces we can associate an algebra homomorphism in the opposite direction, given by the pullback. Conversely, every commutative algebra can be morally regarded as an algebra of functions and algebra homomorphisms as morally corresponding to maps of spaces, with the reversion of arrows. This heuristic principle can be traced back to the results of Stone and Kolmogorov--Gelfand in 1930s and Gelfand's duality between compact Hausdorff spaces and Banach algebras, and is fully realized in   Grothendieck's theory of schemes. Application of this principle of algebraic-geometric duality to graded algebras leads to supergeometry and the theory of graded manifolds considered in this paper. Applying it to differential graded algebras gives   $Q$-manifolds. Therefore the theory of the latter is a ``non-linear homological algebra''. If one further applies to it  the central idea of modern  homological algebra, i.e. that of derived category (considering complexes up to quasi-isomorphism), the outcome will be ``derived geometry'' (which we do not consider here; see e.g.~\cite{pridham:outline2018}). Now, the emergence of thick morphisms and non-linear pullbacks indicates at a non-linear extension of algebraic-geometric duality, which needs to be understood. Combination of our ``microformal geometry'' with homological and homotopical ideas seems to us  a fruitful  direction of possible future study.

Note that both graded geometry and microformal geometry were ultimately motivated by structures coming from physical problems e.g.  gauge theory. One of the purposes of this text is to be read by physicists and I tried to make the paper readable.

The structure of the paper is as follows. We begin from standard algebraic preliminaries and then pass to the definition and constructions of graded manifolds (Section~\ref{sec.gradedgeom}). For those familiar with the subject, we can say that we stress the distinction between grading responsible for signs ($\Z$-grading or parity) and $\ZZ$-grading, which we call ``weight'' (in physics it can be e.g. ghost number). They can be related, but do not have to, as serving different purposes. We also distinguish between   general $\ZZ$-graded case and the case of non-negatively graded manifolds. The latter have a natural structure of a fiber bundle with particular polynomial transformations as the structure group. They can be seen as a generalization of vector bundles. (Our general thesis is that a $\ZZ$-grading is a replacement of a missing linear structure.) We show that to every such non-linear fiber bundle there  corresponds canonically a graded vector bundle of a larger dimension containing all the information about the transition functions of the original bundle. (We call that ``canonical linearization''.) In Section~\ref{sec.qman}, we introduce $Q$-manifolds and explain how they can be used for describing various  structures, in particular homotopy bracket structures such as $\Linf$-algebras and $\Linf$-algebroids. This language is applied in the next two sections. In Section~\ref{sec.microclass}, we define thick morphisms of supermanifolds and give their main properties (the most important of which is the formula for the derivative of pullback). Then we show that a homotopy Poisson thick morphism induces an $\Linf$-morphism of the corresponding homotopy Poisson algebras. On the way we recall $\Sinf$- and $\Pinf$-structures. We also introduce a ``non-linear adjoint'' (an analogue of adjoint for a non-linear operator) as a thick morphism and give an application to $\Linf$-algebroids. In Section~\ref{sec.microquant}, we describe ``quantum microformal geometry''.  In particular, we show how quantum thick morphisms give $\Linf$-morphisms for ``quantum brackets'' generated by a higher-order BV operator.


There are many other things that we wanted to include in the paper, but could not because of time and space limitations. Also, we do not claim any completeness of the given bibliography, though we tried to provide accurate historic references.

A note about terminology. We often, but not always, drop the prefix `super-' (as well as the adjective `graded') and can speak (for example) about `manifolds' or `Lie algebras' meaning `supermanifolds' and `Lie superalgebras'. After we will have explained the `graded' notions, we will be assuming by default that grading can be included in all our constructions and will make that explicit   only when specifically need it.

\section{Graded geometry}
\label{sec.gradedgeom}

\subsection{Graded notions. Algebraic preliminaries}

\subsubsection{Basic definitions}
Recall some general algebraic notions. Let $G$ be an abelian group   written additively. A ``grading'' with values in $G$ is attaching labels $\la\in G$ to elements of some object; an element  to which such a label is attached is called ``homogeneous''. In such a generality, grading makes sense for sets. A set $S$ is \emph{$G$-graded}   if $S=\cup_{\la}  S_{\la}$ (disjoint union). Fix $G$ and in the future say ``graded'' for ``$G$-graded'' unless we need to clarify $G$. Presentation of $S$ in such a form is a ``graded structure''. More often the notion of grading is applied in the additive situation: to abelian groups, vector spaces and rings.
A vector space or a module $M$ over some ring     is called \emph{graded} if it has a form $M=\oplus M_{\la}$, where $M_{\la}\subset M$ are submodules. In the sequel we always assume that such a presentation is fixed as part of structure. Notation-wise, the index $\la$ can be written as a lower index or as an upper index depending on convenience and the typical convention is that $M_{\la}=M^{-\la}$. Some sources promote the idea of defining a graded module just as a family of modules $(M_{\la})$ instead of a direct sum. This is more or less the same and amounts to considering only homogeneous elements instead of  sums. The standard notions concerning direct sums, homomorphisms and tensor products in the graded situation are as follows. The   {direct sum}  of two graded modules (defined as usual) is naturally graded by $(M\oplus N)_{\la}=M_{\la}\oplus N_{\la}$.  A \emph{homomorphism} of graded modules $f\co M\to N$ \emph{of degree} $\mu$ is a collection of homomorphisms $f\co M_{\la}\to N_{\la+\mu}$ for all $\la$. (Sometimes notation $f=(f_{\la})$, where $f_{\la}\co M_{\la}\to N_{\la+\mu}$, is used.) Hence there are the family of modules $\Hom_{\mu}(M,N)$ of  homomorphisms of degree $\mu$ and the graded module $\Hom(M,N)=\oplus_{\mu} \Hom_{\mu}(M,N)$. We shall often refer to homomorphisms   simply as ``linear maps''. One may note that homomorphisms between graded modules can be seen as having naturally a bi-graded structure, i.e. $\Hom(M_{\la}, N_{\nu})$, which is then converted into a single ``total'' grading by considering $\Hom_{\mu}(M,N):=\prod_{-\la+\nu=\mu}\Hom(M_{\la}, N_{\nu})$. Likewise, the \emph{tensor product} $M\otimes N$ appears first as a bi-graded object, $M_{\la}\otimes N_{\nu}$, and then the {total degree}   is defined as the sum of the two degrees, so that $(M\otimes N)_{\mu}:= \oplus_{\la+\nu=\mu}M_{\la}\otimes N_{\nu}$. In the same way grading is defined for tensor products with any finite number of factors.

If there is a bilinear multiplication of any kind, it is naturally translated as a homomorphism $M\otimes N\to L$ (of some degree). Such are, in particular, multiplications in graded associative algebras and a left module structure over such an algebra. By default such  product structures are assumed of degree zero, i.e. $A_{\la}A_{\mu}\subset A_{\la+\mu}$ and $A_{\la}M_{\mu}\subset M_{\la+\mu}$.

The \emph{tensor algebra} of a graded module is naturally bi-graded by $(\ZZ,G)$: $T(M)=\oplus_{n=0}^{+\infty} T^n(M)$, where $T^n(M)=M^{\otimes n}= \oplus_{\la}(M^{\otimes n})_{\la}=
\oplus_{\la_1+\ldots+\la_n=\la}M_{\la_1}\otimes \ldots \otimes M_{\la_n}$. Definitions of the symmetric and exterior algebras depend on an extra piece of structure and will be discussed below.

\subsubsection{Dimension}

\label{subsubsec.dimens}

Suppose a graded module $M$ is \emph{free}, i.e. there is a basis consisting of free homogeneous generators. More precisely,   a basis is $E=\cup_{\la}E_{\la}$, where $E_{\la}\subset M_{\la}$, so that the  basis elements can be written as $e_{\la i_{\la}}\in E_{\la}$ where   $i_{\la}\in I_{\la}$ for some chosen set of indices $I_{\la}$, $|I_{\la}|=|E_{\la}|$. A free module is \emph{locally finite-dimensional}  if $|E_{\la}|<\infty$ for all $\la\in G$ and finite-dimensional if it is locally finite-dimensional and \emph{bounded} (i.e. $M_{\la}\neq 0$ only for a finite number of $\la$). (The forgetful functor maps free and finite-dimensional modules to the same type ungraded objects.) For a locally finite-dimensional module $M$, there are numbers $m_{\la}=\dim M_{\la}\in \ZP$, so there is a function $G\to \ZZ$, $\la\mapsto m_{\la}$, which we shall denote $\dimt(M)$  for the reasons which will be clear soon, $\dimt(M)(\la)=m_{\la}$. Then
\begin{equation}
    \dimt(M\oplus N)=\dimt(M)+\dimt(N)
\end{equation}
and
\begin{equation}
    \dimt(M\otimes N)=\dimt(M)*\dimt(N)\,,
\end{equation}
the convolution of functions, i.e.
\begin{equation}
    \dimt(M\otimes N)(\la)=\sum_{\mu}\dimt(M)(\mu)\dimt(N)(\la-\mu)\,.
\end{equation}
(The formula for the tensor product makes sense if one of the modules is finite-dimensional.) It is convenient to introduce \emph{formal exponentials} $e(\la)$ as symbols satisfying $e(\la+\mu)=e(\la)e(\mu)$ and define the (formal) \emph{dimension} of a locally finite-dimensional module as the formal sum
\begin{equation}
    \Dim(M)=\sum_{\la\in G} m_{\la} e(\la)\,.
\end{equation}
It is an element of the formal group ring $\ZZ[[G]]$ and the function
$\dimt(M)$ is the ``Fourier transform'' of $\dim(M)$. We have consequently
\begin{equation}
    \Dim (M\oplus N)=\Dim (M)+\Dim (N)
\end{equation}
and
\begin{equation}
    \Dim (M\otimes N)=\Dim (M) \Dim (N)\,,
\end{equation}
where the product is in the formal group ring. 

(Dimension of a graded module is very close to the notion of a formal character used in representation  theory, where they consider  formal exponents, see e.g. Kac~\cite{kac:bookinf}.)

\subsubsection{Grading and commutativity}

Grading in mathematics plays two different roles. One is a counting device, e.g. for ``vanishing by dimensional considerations''  type arguments  and   arguments by induction. Another is about the form of ``commutation rules''.
This concerns     linearity if the ground ring has a non-trivial grading,   relation between left and right modules, and   identification of $M\otimes N$ with $N\otimes M$. In all these cases there are elements   moving past each other and one needs rules  as to   what happens,  referred to as  a ``commutativity constraint''. (Abstract algebraic theory   is ``braidings in tensor categories'', but we do not have to go that far.) As it was discovered experimentally in topology, differential geometry and algebra, in the graded situation there   arises a non-trivial  ``commutativity constraint'' expressed by the following ``sign rule''. Fix a group homomorphism $\pr\co G\to \Z$, called \emph{parity}. For homogeneous elements $x$ of degree $\la$ we write $\pr(x)=\pr(\la)$. Then the \textbf{sign rule} says that if in the ungraded setting in a formula there is a swap of adjacent elements $x$ and $y$, then in the graded case it should be modified by inserting the sign $(-1)^{\pr(x)\pr(y)}$ (and any change of order is reduced to   swapping of neighbors). The observation is that if this rule is applied in the definitions, it will  then appear in the theorems. As Manin rightly notes, this cannot be made a meta-theorem because it fails for the graded analog of determinant; but it works in many cases. We quote some examples, which should be undoubtedly familiar, but we need them for reference.

Note that there is no change in the notions of a left or right module structures for the graded case (i.e. $a(bx)=(ab)x$ or $(xa)b=x(ab)$ is required) since there is no change in associativity laws.
If there are left module structures over a graded algebra $A$, a homomorphism of graded modules $f\co M\to N$ of degree $\la$ is (graded) \emph{$A$-linear} or \emph{$A$-linear from the left} or is an \emph{$A$-homomorphism} (a homomorphism over $A$)  if
\begin{equation}
    f(ax)=(-1)^{\pr(a)\pr(\la)} af(x)
\end{equation}
for a homogeneous $a\in A$. (In the sequel we shall always write formulas for homogeneous objects.) If there is a left module structure over $A$, then the \emph{multiplication from the right} is defined by
\begin{equation}
    xa:=(-1)^{\pr(x)\pr(a)} ax\,.
\end{equation}
One can immediately see that this formula gives a right module structure if $A$ is considered with the new product
\begin{equation}
    a\bullet b:= (-1)^{\pr(a)\pr(b)}ba\,.
\end{equation}
This is called  the \emph{opposite  algebra} $A^{\text{op}}$\,. Further, if a homomorphism $f$ is $A$-linear for left $A$-modules as defined above, this converts into
\begin{equation}
    f(xa)=f(x)a
\end{equation}
(no sign!), which should be taken as the definition of (graded) \emph{linearity from the right}. The \emph{commutator} of two elements in an algebra is defined by
\begin{equation}
    [a,b]:= ab - (-1)^{\pr(a)\pr(b)}ba\,.
\end{equation}
If if vanishes, the elements \emph{commute}. If all elements in an associate algebra commute, it is called \emph{commutative}. (We suppress any adjectives and do not say ``graded commutative''.) If an algebra is commutative, it coincides with the opposite algebra defined as above and hence left modules over a commutative algebra are also right modules, and conversely.  Properties of the commutator in an associative algebra  turned into axioms define what is traditionally called a ``graded Lie algebra''. The name may raise  some objections and we shall come back to that later (when we also elaborate the definition). Finally, a \emph{derivation}  of a graded algebra $A$ is defined as a homomorphism (linear map) $D\co A\to A$ of degree $\la$ (called the degree of a derivation) satisfying
\begin{equation}
    D(ab)=D(a)b +(-1)^{\pr(\la)\pr(a)} aD(b)\,.
\end{equation}
Similarly are defined derivations for other possible settings (e.g. over an algebra homomorphism $A\to B$). The guiding principle in all cases is that derivations are perturbations of maps respecting multiplication  (e.g algebra homomorphisms) and the rest follows from the formula for linearity.

The \emph{exterior algebra} $\L(M)$ and the \emph{symmetric algebra} $S(M)$ are defined as the quotients of the tensor algebra $T(M)$ by the ideals  generated by  $x\otimes y + (-1)^{\pr(x)\pr(y)} y\otimes x$ and $x\otimes y - (-1)^{\pr(x)\pr(y)} y\otimes x$ respectively (where $x,y\in M$). Recall that $T(M)$ is bi-graded by tensor degree and degree induced from $M$. Since these ideals are homogeneous in the ``bi-'' sense, both $\L(M)$ and $S(M)$ inherit bi-grading:   $\L(M)=\sum_n \L^n(M)=\sum_{n,\la} \L^n(M)_{\la}$ and $S(M)=\sum_n S^n(M)=\sum_{n,\la} S^n(M)_{\la}$. The symmetric algebra $S(M)$ is commutative with respect to the induced grading,
\begin{equation}
    a\cdot b = (-1)^{\pr(a)\pr(b)}\,b\cdot a\,,
\end{equation}
(tensor degree playing no role), while the exterior algebra  $\L(M)$ is not commutative. The multiplication in $\L(M)$ (called the exterior product) satisfies
\begin{equation}
    a\wedge b = (-1)^{\pr(a)\pr(b)+pq}\,b\wedge  a\,,
\end{equation}
where $a\in \L^p(M)$ and $b\in \L^q(M)$. Such a property is called \emph{skew-commutativity}.

\subsubsection{\mathversion{bold}Choice of grading group: $G=\ZZ\times \Z$}

So far we have worked with a general group $G$ endowed with a parity $\pr$. Favorite choices are: $G=\ZZ$ with $\pr(n)=n\! \mod 2$ and $G=\Z$ where $\pr$ is the identity. The former was a classical choice   in algebraic topology. The latter is the choice in superalgebra and supergeometry. We will use the following   combination: $G=\ZZ\times \Z$ and $\pr\co \ZZ\times \Z \to \Z$  is the projection on the second factor. (This includes other mentioned choices as special cases.) For the first factor $\ZZ$ we use the term \emph{weight}.  Notation: $\w(x)$ and $\pr(x)$ for the weight and parity of a homogeneous object $x$. We shall also use the tilde notation for parity, $\tilde x:=\pr(x)$. Objects of parity $0$ are referred to as \emph{even} and of parity $1$, \emph{odd}. We stress that  $\ZZ$ and $\Z$ gradings are in general independent; this does not exclude particular cases where parity of a given object  coincides with its weight modulo $2$. It may also happen that weights take values in $\ZZ^{+}$ only (i.e., there are no objects with negative weights). We have chosen the  term `weight' as generic; in particular situations the  $\ZZ$-grading may be called    `degree', `ghost number', etc.



Specifying the notion of graded dimension from~\ref{subsubsec.dimens} for the group $G=\ZZ\times \Z$, we arrive at \emph{graded dimensions} of the form
\begin{equation}
    \Dim M= \sum_{w\in \ZZ, \e=0,1} n_{w,\e}q^w \Pi^{\e} \in \ZZ[q,q^{-1}](\Pi)
\end{equation}
where $\Pi^2=1$. Recall that dimensions of $\Z$-graded objects take values in $\ZZ[\Pi]/(\Pi^2-1)=\ZZ+\ZZ\Pi$. Denote for convenience  the latter ring by  $\SZ$. Its elements   are written as $p+q\Pi$ or $p|q$, where $\Pi=0|1$ and $1=1|0$.  So we can re-write
\begin{equation}
    \Dim M= \sum_{w\in \ZZ} \hat n_{w}q^w \in \SZ[q,q^{-1}]\,.
\end{equation}

\subsubsection{Weight as   generalization of linear structure}

If we start from a vector space $V$ and consider an algebra generated by $V$ such as $T(V)$, $S(V)$ or $\L(V)$, they all have a natural $\ZZ$-grading by powers of $V$: $n$-fold products have degree $n$, etc. Linear maps of vector spaces induce algebra homomorphisms preserving degrees. Conversely, any algebra homomorphism of one of these algebras preserving degrees is naturally induced by a linear map of vector spaces. The key moment here is that all free generators of an algebra have the same (non-zero) grading, hence they cannot ``mix'' under grading-preserving homomorphisms. At the same time, allowing algebra homomorphisms of $T(V)$, $S(V)$ or $\L(V)$ without preservation of natural grading effectively destroys a memory of the original vector space $V$. Suppose however that free generators of one of these algebras are assigned weights not necessarily equal to each other. Then a homomorphism preserving such a weight is in general non-linear and a geometric object associated with such an algebra is no longer a vector space, but some ``graded manifold''.

Such a situation materializes, for example, for multiple vector bundles, say, double vector bundles  such as $TE$ or $T^*E$ for a given vector bundle $E\to M$. They are not vector bundles over the original base $M$ (no linear structure) and considering `total weight' for them leads to coordinates having weights $0, 1$, and $2$.

\subsection{Graded manifolds. Definition and constructions}

\subsubsection{Local models}

In the sequel, `polynomials' will refer to a free commutative algebra, which is an ordinary polynomial algebra if all generators are even,  a Grassmann algebra if all generators are odd, and the tensor product of an ordinary polynomial algebra and a Grassmann algebra in general.

To describe a local model of a graded manifold, consider a finite number of variables $x^a$ (practical needs my require considering the infinite-dimensional case as well, but here we confine ourselves to finite dimensions)  to which are assigned both parities and weights. Notation: $\w^a=\w(x^a)$, $\at=\pr(x^a)$. The variables $x^a$ are assumed to be commuting: $x^ax^b=(-1)^{\at\bt} x^bx^a$. We also assume that they are real, since we are going to define real graded manifolds. (Necessary modifications can be made for the complex or mixed complex-real cases.) The following classes of functions of   variables $x^a$ are natural to consider:
\begin{enumerate}[i)]
  \item polynomial  in all variables;
  \item smooth ($C^{\infty}$)  in the variables of weight $0$ and polynomial in the variables of weights $\neq 0$;
  \item smooth  ($C^{\infty}$)  in the variables of weight $0$ and formal power series in the variables of weights $\neq 0$.
\end{enumerate}
(Since the expansion  in a finite number of odd variables always terminates due to their nilpotence, and polynomials and smooth functions in odd variables are the same, the difference   arises only for even variables.)

Why one may need formal power expansions, is clear from the following example.

\begin{ex} Let $\w(x)=0$, $\w(y)=-1$ and $\w(z)=+1$. The element $x+yz$ is of weight $0$ and the substitution of it into any smooth function, e.g. $\sin x$, must be legitimate:
\begin{equation}
    \sin(x+yz)= \sin x + (yz)\cos x -\frac{1}{2}\,(yz)^2\sin x  +\cdots
\end{equation}
It transforms a smooth function of a variable of weight $0$ into a power series with respect to the variables $y, z$. (This should be contrasted with the case of odd variables $\x$ and $\h$, e.g.  $\sin (x+\x\h)=\sin x +  (\x\h) \cos x$, where the Taylor expansion always terminates.)
\end{ex}

Therefore, we are forced to consider formal power series in the variables of non-zero weights if we wish to use arbitrary smooth functions, not only polynomials, in the variables of weight zero, and if general transformations of variables preserving weights and parities are allowed. Hence the  two exceptions.

\begin{ex}[Restriction of admissible transformations] If we agree to restrict admissible transformations of variables    so that: (1) the variables of weight $0$ are allowed to transform only between themselves (no admixture of variables of weights $\neq 0$) and (2) within the variables of non-zero weights any homogeneous polynomial transformations are allowed, then the class of functions of $x^a$ that are arbitrary smooth in $x^a$ of $\w^a=0$ and polynomial in $x^a$ of $\w^a\neq 0$ will be stable under the corresponding substitutions. (Geometrically this corresponds to  considering fiber bundles where the variables of weight zero are coordinates on the base and the variables of non-zero weights are coordinates in the fibers.)
\end{ex}

\begin{ex}[Non-negative weights only]
\label{ex.nonnegweights}
Suppose for all variables $x^a$, $\w(x^a)\geq 0$. Then there is only a finite number of homogeneous monomials in $x^a$, $\w^a\neq 0$, of any given total weight $n\in \NN$ and there are no monomials of weight $0$. Hence all homogeneous formal power series in the  variables of non-zero weights are polynomials, and there can be no transformations admixing variables of non-zero weights to the variables of weight zero. (Geometrically  this corresponds to a fiber bundle  structure that comes about automatically.)
\end{ex}

Every algebra homomorphism from functions of $x^a$ to real numbers sends the variables that are of non-zero weight or odd, to zero. (They  cannot `take values' apart from zero.) On the other hand, all even $x^a$ such that $\w(x^a)=0$ can be considered as coordinates in the ordinary sense on some open domain (open set) $U_{\0}\subset \R{n_{\0}}$ that we can choose ($n_{\0}$ is the number of $x^a$ with $\w(x^a)=\pr(x^a)=0$).
We interpret $U_{\0}$ taken together with all our variables $x^a$, of all weights and parities, as a \emph{graded coordinate domain}, and use the same letter $U$ but without the subscript to denote this new object. The (graded) dimension of a graded coordinate domain $U$ is
\begin{equation}
    \dim U=\sum_w \hat n_wq^w \in \SZ[q,q^{-1}]\,,
\end{equation}
where $\hat n_w=n_{w,0}|n_{w,1}$ and $n_{w,0}$ (resp., $n_{w,1}$) is the number of even (resp., odd) coordinates $x^a$ of weight $w\in \ZZ$.
Formally, a graded coordinate domain $U$ can be seen as a pair consisting of an open domain $U_{\0}$ and the algebra $\Fune(U)$,
\begin{equation}
    \Fune(U):=\Fun(U_{\0})[[x^a\,|\, \w^a\neq 0 \  \text{or} \  \at=1]]
\end{equation}
(formal power series in variables of non-zero weights). By a slight modification, we can replace a single algebra $\Fune(U)$ by  a sheaf of algebras $\Funes_U$ on $U_{\0}$ (by taking $\Funes_U(V_{\0}):=\Fune(V)$ for all open subsets $V_{\0}\subset U_{\0}$) and define, finally,
\begin{equation}
    U:=(U_{\0},\Funes_U)\,.
\end{equation}
The sheaf $\Funes_U$ is  a sheaf of $\ZZ\times \Z$-graded commutative algebras over $\RR$ with a unit and there is a natural augmentation
\begin{equation}
    \e\co \Funes_U\to \mathscr{C}^{\infty}_{U_{\0}}
\end{equation}
given by sending all odd variables and all even variables of non-zero weight to $0$. On the stalk  $\Funes_{x_{\0}}$ at each $x_{\0}\in U_{\0}$ it gives a homomorphism  $\e\co \Funes_{x_{\0}}\to \RR$ and the kernel  $\Ker\e\subset \Funes_{x_{\0}}$ is a unique maximal ideal. We shall consider algebra homomorphisms $\Fune(W)\to \Fune(U)$ for graded coordinate domains that can be expressed by substitutions in coordinates, $y^i=\f^i(x)$, with the right-hand sides being functions of the same class (i.e. smooth in the even variables of zero weight and formal power series in the rest). (Possibly such are all the algebra homomorphisms for these algebras, but we do not want to dwell on that.) By augmentation, they induce algebra homomorphisms $\Fun(W_{\0})\to \Fun(U_{\0})$ and hence the usual smooth maps of the underlying coordinate domains
$\f_{\0}\co U_{\0}\to W_{\0}$. We define a \emph{morphism} (also called a \emph{smooth map}) between graded coordinate domains $\f\co U\to W$ as a morphism of local ringed spaces over $\RR$,
\begin{equation}
    \f=(\f_{\0},\f^*)\co (U_{\0},\Funes_U)\to (W_{\0},\Funes_W)\,,
\end{equation}
with the algebra homomorphisms
\begin{equation}
    \f^*\co \Funes_W(V_{\0}) \to \Funes_U(\f^{-1}(V_{\0}))
\end{equation}
for all open subsets $V_{\0}\subset W_{\0}$ being of described type.
Such morphisms are in a one-one correspondence with homomorphisms of algebras of ``global'' functions
\begin{equation}
    \f^*\co \Fune(W)\to \Fune(U)
\end{equation}
(The role of locality in the definition of a morphism is to ensure that the map $\f_{\0}$ of the underlying topological spaces is exactly the one obtained from the homomorphism of algebras with the help of   the augmentation.)

\subsubsection{Definition of a graded manifold}

The definition of  graded manifold is now completely straightforward; it mimics definitions of ordinary smooth manifolds and supermanifolds.  Recall that for a ringed space $X=(X_{\0},\mathscr{A}_X)$, an \emph{open subset} $U\subset X$  is the ringed space $U=(U_{\0}, \mathscr{A}_X|U_{\0})$, for an open  $U_{\0}\subset X_{\0}$. In the same sense we understand preimages and intersections of open subsets. An \emph{open cover} consists of open subsets $(U_{\a})$ such that $\bigcup U_{\a\0}=X_{\0}$. Fix a collection of numbers of variables with prescribed parities and weights, i.e., a  graded dimension. Consider graded coordinate domains of this graded dimension. Suppose $X$ is a local ringed space over $\RR$, so in particular $\mathscr{A}_X$ is a sheaf of $\RR$-algebras.  A local \emph{chart} for $X$ is an isomorphism  $\f\co V\to U$ of local ringed spaces over $\RR$, where $U\subset X$ is n open subset and $V$ is a graded coordinate domain. An \emph{atlas} for $X$ is a collection of charts $\f_{\a}\co V_{\a}\to U_{\a}$ such that $(U_{\a})$ make an open cover. We require that the resulting transformations of coordinates $\f_{\a\b}:=\f_{\a}^{-1}\circ \f_{\b}\co \f_{\b}^{-1}(U_{\a}\cap U_{\b})\to \f_{\a}^{-1}(U_{\a}\cap U_{\b})$ are smooth maps of graded coordinate domains. We refer to such atlases as smooth.  Two smooth atlases for $X$ are equivalent if their union is a smooth atlas.

\begin{de}
A smooth \emph{graded manifold}  of a given graded dimension  is a local ringed space over $\RR$, $X=(X_{\0},\mathscr{A}_X)$, with a Hausdorff second-countable underlying topological space $X_{\0}$, endowed with an equivalence class of smooth atlases.  The structure sheaf $\mathscr{A}_X$ will  be denoted by $\Funes_X$ or $\mathscr{C}^{\infty}_X$ and is called the sheaf of \emph{smooth functions}.  A \emph{smooth map} $f\co X\to Y$ of smooth graded manifolds is a morphism in the category of local ringed spaces over $\RR$ represented in local charts by smooth maps of graded coordinate domains.
\end{de}

Speaking informally, a graded manifold is a supermanifold with a distinguished class of atlases where coordinates   are additionally assigned weights in $\ZZ$ and the transformations of coordinates preserve both weights and parities. Smooth maps between graded manifolds are expressed in coordinates in the same way as for ordinary manifolds and supermanifolds. They are  formal power series in coordinates of non-zero weight.

Similarly  defined are  categories of graded manifolds in the mixed (real-complex) smooth and complex-analy\-tic settings.

\subsubsection{Simple examples. ``Graded sphere''.  ``Graded groups''}

Many examples of graded manifolds in applications arise from auxiliary constructions for ordinary manifolds. However, they can also arise in their own right as graded analogs of familiar  differential-geometric objects. The following example is meant to illustrate this point.
\begin{ex} Consider $\R{n(q)+1}$, where $n(q)=q^{-1}+n+q$, with coordinates  $x^1,\ldots,x^n,x^{n+1}$, $y$, $z$, of weights $\w(x^a)=0$, $\w(y)=-1$ and $\w(z)=+1$. Consider the equation
\begin{equation}
    (x^1)^2+\cdots+(x^n)^2+(x^{n+1})^2  +2yz=1
\end{equation}
(the left-hand side is a quadratic form of weight $0$). It specifies a \emph{graded sphere} $S^{n(q)}$ as a closed subspace $S^{n(q)}\subset \R{n(q)+1}$. Acting as for the ordinary sphere, one can introduce two charts $\f_N\co \R{n(q)}\to S^{n(q)}\setminus N$ and $\f_S\co \R{n(q)}\to S^{n(q)}\setminus S$, so that
\begin{equation}
\begin{aligned}
   &\f_N\co \  \xp=\frac{2\up_N}{|\up_N|^2+1}\,, \  x^{n+1}= \frac{|\up_N|^2-1}{|\up_N|^2+1}\,, \\
   &\f_N^{-1}\co \  \up_N=\frac{\xp}{1-x^{n+1}}\,,
\end{aligned}
\end{equation}
where $\xp=(x^a,y,z)$, $\up_N=(u^a_N, p_N, q_N)$, where $a=1,\ldots,n$, $\w(u^a_N)=0$, $\w(p_N)=-1$, $\w(q_N)=+1$, and $|\up_N|^2=$\linebreak $\sum_a (u^a_N)^2+2p_Nq_N$, and similar formulas for $\f_S$ (with the opposite sign for $x^{n+1}$). This gives
\begin{equation}
    \up_S=\frac{\up_N}{|\up_N|^2}
\end{equation}
(exactly as for the ordinary sphere or the supersphere) as the change of coordinates. This shows that $S^{n(q)}$ is a smooth graded manifold of dimension $n(q)=q^{-1}+n+q$. The underlying topological space of $S^{n(q)}$  
is the ordinary sphere $S^n$ of dimension $n$.  The algebra of smooth functions on $S^{n(q)}$ can be described  as the  ``inverse limit'':
\begin{equation}
\begin{aligned}
    &\fun(S^{n(q)})= \left\{f=(f_N,f_S)\in \fun(\R{n(q)}) \times \fun(\R{n(q)}) \,\Bigl|\right.\,\\
   &\kern3cm\left. \Bigr. f_N(\up_N)=f_S\bigl(\frac{\up_N}{|\up_N|^2}\bigr)\right\}\,.
\end{aligned}
\end{equation}
Note that the transformations of coordinates,\linebreak $u^a_S=u^a_N|\up_N|^{-2}$, $p_S=p_N|\up_N|^{-2}$, $q_S=q_N|\up_N|^{-2}$, where\linebreak $|\up_N|^{-2}=\bigl(\sum_a(u^a_N)^2\bigr)^{-1}\Bigl(1- \dfrac{2p_Nq_N}{\sum_a(u^a_N)^2} \,+ \ \cdots\Bigr)$,   are  formal power series in the coordinates of non-zero weights $p_N,q_N$.
\end{ex}

One can construct more similar examples as ``graded analogs'' of classical (super)manifolds. It would be interesting to study them systematically together with graded analogs of classical differential-geometric structures.


Another collection of examples can be obtained from graded Lie algebras (not to be confused with Lie superalgebras!). It is well known that $\ZZ$-gradings play important role in the  theory of finite- and infinite-dimensional Lie algebras.  Such algebras can come with natural gradings, which  are forgotten when the corresponding Lie groups are constructed. By taking these gradings into account one can obtain  `graded versions' of these groups.

\begin{ex} The vector space $\Mat(n)$ of (real) square $n\times n$ matrices becomes  $\ZZ$-graded if we take  the matrix units $E_{i,i+r}$, for a given $r\in\ZZ$, is a basis of  the subspace $\Mat(n)_r$ (which consists of matrices with non-zero entries only on the $r$th diagonal). Non-trivial graded components exist only for $|r|\leq n-1$. For each $r$, there are exactly $n-|r|$ elements on the $r$th diagonal, hence the graded dimension
\begin{equation}
    \dim \Mat(n) = \sum_{r=-n+1}^{n-1} (n-|r|) q^r = n+ \sum_{r=1}^{n-1} (n-r)(q^r+q^{-r})\,.
\end{equation}
This is a graded Lie algebra with respect to the matrix commutator:
\begin{equation}
    [\Mat(n)_r, \Mat(n)_s]\subset \Mat(n)_{r+s}\,.
\end{equation}
By multiplying the generators $e_{i_r}^{(r)}:=E_{i,i+r}$ of weight $r$ by parameters $t^{i_r}_{(r)}$ of weight  $-r$ (so to obtain an expression of weight zero) and taking the exponential,  we obtain an invertible matrix that can be regarded as a ``point'' of the  \emph{graded group}  (a group object in the category of graded manifolds) $\GL(n)_{\text{grad}}$ corresponding to the graded Lie algebra $\Mat(n)$,
\begin{equation}
    g=\exp \sum t^{i_r}_{(r)}e_{i_r}^{(r)}\,.
\end{equation}
Parameters $t^{i_r}_{(r)}$ are global coordinates on this graded manifold. Obviously, the underlying ordinary manifold is just the group of diagonal matrices 
with positive entries and the graded group $\GL(n)_{\text{grad}}$ can be regarded as its formal neighborhood in the  Lie group $\GL(n)$.
\end{ex}

\subsubsection{Constructions with graded manifolds}
\label{subsec.constrgrad}

There are obvious analogs of constructions for ordinary manifolds and supermanifolds, such as submanifolds, products, etc. Closed  submanifolds are locally specified by systems of equations of constant rank. It is required that the equations be homogeneous both in parity and weight, where the notion of rank is understood as `graded rank'. Then the dimension of $S\subset X$ is $\dim S=\dim X -r$, where $r=r(q)\in \SZ[q,q^{-1}]$ is the   rank of the system of equations.  As for the product $X\times Y$ of graded manifolds, it is most natural to consider it as bi-graded. (``Bi-grading'' refers to two $\ZZ$ gradings, with a single parity.))

Every vector bundle by default can be considered as a graded manifold so that linear coordinates in the fibers are assigned weight $+1$. Then fiberwise linear maps are the same as weight-preserving.

All objects on a graded manifold assume weights, e.g. tangent vectors, covectors, vector fields, etc.
Tangent and cotangent bundles for graded manifolds   carry a bi-grading. One $\ZZ$-grading (actually, $\ZZ^{\geq 0}$) is the vector bundle grading by degree in   fiber coordinates. Another is the induced weight.

\begin{example}
If $x^a$ are coordinates of weights $w^a$, the partial derivatives $\lder{}{x^a}$ have weights $-w^a$. Hence we assign weights $-w^a$ to the momentum variables $p_a$ canonically conjugate to $x^a$. We arrive at the cotangent bundle $T^*M$ for  a graded manifold $M$ as a bi-graded manifold. The first grading induced from $M$ we continue to call \emph{weight} and it is given by $\w(x^a)=w^a, \w(p_a)=-w^a$. The second grading we call \emph{degree}  and it just expresses the vector bundle structure: $\deg x^a=0, \deg p_a=+1$.
\end{example}

\begin{example} For a vector bundle $E\to M$ regarded as a graded manifold in the usual way, its cotangent bundle $T^*E$ is a double vector bundle~\cite{mackenzie:book2005}, with the side bundles $E\to M$ and $E^*\to M$:
\begin{equation}
    \begin{CD} T^*E  @>>>   E^*\\
                @VVV         @VVV\\
                  E @>>>  M
    \end{CD}
\end{equation}
(this is related with the Mackenzie--Xu theorem,   see~\ref{subsubsec.adjoint}). The double vector bundle structure gives two gradings on $T^* E$\,: $\w_1=\#p_a+\#p_i$ and $\w_2=\#p_a+\#u^i$. Here we denote by $x^a$ coordinates on the base, by $u^i$ coordinates in the fibers of $E$, and by $p_a$, $p_i$ the conjugate momenta. Compared to out previous analysis, we have $\w=\#u^i-\#p_i=\w_2-\w_1$ as   induced weight and $\deg=\#p_a+\#p_i=\w_1$ as   degree.
\end{example}

(In physics, the above grading $\w=\#u^i-\#p_i$ appears under the name ``ghost number'', see~\cite{henneaux:teitelboimbook}.)

Another example is provided by differential forms on a vector bundle. Recall that in supergeometry, pseudodifferential forms, which we with an abuse of language will call simply ``forms'', are functions on the antitangent bundle.

\begin{example}
Consider for a vector bundle $E$, its antitangent (parity reversed  tangent) $\Pi TE$. It is again a double vector bundle
\begin{equation}
    \begin{CD} \Pi TE  @>>>   \Pi TM\\
                @VVV         @VVV\\
                  E @>>>  M
    \end{CD}
\end{equation}
It has two weights corresponding to the two vector bundle structures: $\w_1=\#{\rm d}x^a +\#{\rm d}u^i$ and $\w_2=\#u^i+\#{\rm d}u^i$. Here     induced weight is $\w=\#u^i+\#{\rm d}u^i=\w_2$ and degree is $\deg =\#{\rm d}x^a +\#{\rm d}u^i=\w_1$. In~\cite{tv:ivb,tv:class} we discovered and used grading $\w_1 -\w_2=\#{\rm d}x^a-\#u^i$ on forms on $E$. Note that the de Rham differential has degree $+1$ in this grading.
\end{example}

Besides grading, manifolds can be endowed with a filtration. For example, for a bi-graded manifold, one of the gradings can become a filtration if more general transformation are considered. Such is the case of ``resolution degree'' in~\cite{henneaux:teitelboimbook}, which is preserved only as a filtration under canonical transformations.  We do not formalize \emph{filtered manifolds} here, since this notion should be clear.

\subsubsection{Structure of a graded manifold}
Let $X=(X_{\0}, \Funes_X)$ be a graded manifold. Denote by $\mathscr{J}_X:=(\Funes_X)_{\neq 0}+(\Funes_X)_{\neq 0}^2$ the   ideal  generated by all functions of non-zero weight. Its zero locus is a closed submanifold $X_0=(X_{\0}, \Funes_{X_0})$, $\Funes_{X_0}=\Funes_X/\mathscr{J}_X$, with the same underlying topological space $X_{\0}$. Note that in general $X_0$ is a supermanifold.  It should not be confused with $X_{\0}$, which has a natural   structure of an ordinary manifold, $X_{\0}=(X_{\0}, \Funes_{X_{\0}})$, where $\Funes_{X_{\0}}=\Funes_X/((\Funes_X)_{\neq 0}+(\Funes_X)_{\text{odd}})$\,. In general, $X_{\0}\neq X_0$, only  $X_{\0}\subset X_0$.  Only if there are no odd coordinates of zero weight, then $X_0$ is an ordinary manifold and $X_0=X_{\0}$. We will be more concerned with $X_0$. Powers of the ideal $\mathscr{J}_X$ define infinitesimal neighborhoods $X_k=(X_{\0},   \Funes_X/\mathscr{J}_X^{k+1})$, of the closed   submanifold $X_0\subset X$. This is an infinite sequence and $X$ is its direct limit:
\begin{equation}
    X_0\subset X_1 \subset \ldots X_k\subset X_{k+1} \subset \ldots \ldots X\,.
\end{equation}

Consider the normal bundle to $X_0$ in $X$, defined as usual as the quotient $(TX|_{X_0})/TX_0$. Denote it $N$. To see its structure, denote local coordinates of weight zero on $X$ by $x^a$ and local coordinates of non-zero weights, by $y^i$. Transformation of coordinates has the form
\begin{equation}
\begin{aligned}
    x^a&=x^{a}(x', y')\,,\\
    y^i&=y^i(x',y')\,,
\end{aligned}
\end{equation}
where $x'$ and $y'$ denote `new' coordinates of weights $0$ and $\neq 0$, respectively, $x^{a'}$ and $y^{i'}$. The right-hand sides are formal power series in coordinates of non-zero weights. For the induced transformation of fiber coordinates in the tangent bundle we obtain
\begin{equation}
\begin{aligned}
    \dot x^a&=\dot x^{a'}\der{x^{a}}{x^{a'}}(x', y')+ \dot y^{i'}\der{x^{a}}{y^{i'}}(x', y')\,,\\
    \dot y^i&=\dot x^{a'}\der{y^{i}}{x^{a'}}(x', y')+ \dot y^{i'}\der{y^{i}}{y^{i'}}(x', y')\,.
\end{aligned}
\end{equation}
Note that the Jacobi matrix for transformation of coordinates on $X$ is a block matrix with blocks numbered  by weights. In the formulas above, the matrix $(\der{x^{a}}{x^{a'}})$ is the zero-zero block of the Jacobi matrix, the matrix $(\der{y^{i}}{y^{i'}})$ consists of possibly several other diagonal blocks, while the matrices $(\der{x^{a}}{y^{i'}})$ and $(\der{y^{i}}{x^{a'}})$ consist of off-diagonal blocks. The entries in the diagonal blocks have weight zero; the entries in the off-diagonal blocks are of corresponding non-zero weights. The diagonal blocks are invertible and, as formal power expansions in   variables of no-zero weights,   will remain invertible is all such variables are set to zero.  (One may say that the Jacobi matrix for transformation of coordinates on a graded manifold takes values in the \emph{graded general linear group}.) Upon restriction to $X_0$, all coordinates of non-zero weights $y^i$ become zero and, in particular, all elements of off-diagonal blocks of the Jacobi matrix will vanish. Hence the transformation of coordinates on the normal bundle $N$ as a vector bundle over $X_0$ will be
\begin{equation}
    x^a_0=x^{a}(x'_0, 0)
\end{equation}
for the coordinates on the base $X_0$, which we have marked with the subscript $0$ to distinguish them from those on $X$, and
\begin{equation}
    \dot y^i_0 = \dot y^{i'}_0\,\der{y^{i}}{y^{i'}}(x'_0, 0)
\end{equation}
for the fiber coordinates, where likewise we have attached the subscript to distinguish them from (a part of) coordinates on the tangent bundle $TX$. The normal bundle $N$ is a $\ZZ$-graded vector bundle over a non-graded base $X_0$, so it is a direct sum of ordinary vector bundles (with assigned weights). (Everything is in the category of supermanifolds, which makes no real difference here.) If we treat $N$ as a graded manifold itself, it has the same graded dimension as $X$.

Note now that the graded manifold $X$ is formal in the directions normal to the submanifold $X_0$. (Changes of coordinates are given by formal power series in coordinates of non-zero weights.) In the same way as for supermanifolds and their underlying ordinary manifolds,  there are no actual (non-infinitesimal) `intermediate' neighborhoods between $X_0$ and $X$. Therefore, in the smooth case, exactly as for smooth supermanifolds, an analog of the tubular neighborhood theorem gives a (non-canonical) \emph{diffeomorphism}
\begin{equation}
    X\cong N
\end{equation}
\emph{as graded manifolds}. This is the \textbf{classification theorem for smooth graded manifolds}.

In the complex-analytic case, one should expect an \textbf{analog of Vaintrob's theorem}~\cite{vaintrob:viniti} for complex-analytic supermanifolds: namely, that a complex-analytic graded manifold  $X$ is a deformation of the respective normal bundle $N$.

In the same way as for smooth supermanifolds, the possibility to describe a smooth graded manifold as a graded vector bundle over an ordinary base (non-canonically), does not make smooth graded manifold not interesting. The key difference between graded vector bundles and graded manifolds is that the latter have more morphisms as transformations mixing variables of different weights with the only condition that the total weight\,---\,as well as parity\,---\,be preserved . Note also that, even more, such morphisms can themselves depend on parameters of non-zero weight leading to \emph{graded manifolds of maps}, in particular already mentioned \emph{graded groups}, etc. etc.

A remark giving a different perspective is that, in some cases, the formal power series defining transformations of variables in a graded manifold $X$ can happen to be the Taylor series of genuine smooth transformations of even coordinates in an ordinary manifold $\tilde X$. (Such are the above examples of the ``graded sphere'' $S^{n(q)}$ and the graded group $\GL(n)_{\text{grad}}$.) In general, one can see a graded manifold as a formal germ of an ordinary (super)manifold of the ``ordinary'' dimension $n$ obtained as $n=n(1)$ for $\dim X=n(q)$.

\subsubsection{Graded manifolds of maps. Functor of points}

One may wish to consider ``graded  manifolds of mappings''. First of all, they have to be (in general) infinite-dimensional, hence strictly speaking   outside the scope of the definition above. Difficulty with introducing such objects comes from  two separate but entangled causes. One is their infinite-dimensionality, and fundamentally this is the same difficulty that we have for ordinary manifolds
when we want to define a manifold of maps.  
The other cause of the difficulty is of `graded' nature and
has to be overcome already  for supermanifolds.
It is resolved by allowing for odd as well as non-zero-weight parameters  in the formulas for the mappings (possibly, infinite number of them). In brief, the \emph{graded manifold of maps} $\Mapp(X,Y)$ for finite-dimensional graded manifolds $X$ and $Y$ is defined ``in the weak sense'' by the formula
\begin{equation}
    \Map(Z, \Mapp(X,Y))=\Map(Z\times X, Y)\,,
\end{equation}
where $\Map$ stands for the set of morphisms in the category of graded manifolds and $Z$ is an arbitrary graded manifold. The equality should be understood as an isomorphism of functors. The right-hand-side serves as the definition of the left-hand-side, i.e., $\Mapp(X,Y)$ is defined as the representing object for the functor $Z\mapsto \Map(Z\times X, Y)$ (if existed). In other words, the functor is known and we work with it as if it were representable (this is what is meant by ``weak sense''). The meaning of the above formula is that, for a given graded manifold $Z$, we consider all maps $X\to Y$ depending on coordinates on $Z$ as external parameters; then  $\Mapp(X,Y)$, if one can define it, serves as the ``universal family'' of maps and coordinates on it are ``universal'' parameters.

Acting naively, we can describe the graded manifold $\Mapp(X,Y)$ as follows. If $x^a$ and $y^i$ are, respectively, local coordinates on $X$ and $Y$, then ``coordinates'' on $\Mapp(X,Y)$ are functions $y^i=\f^i(x)$, $x=(x^a)$, defined by expansions over odd variables and variables of non-zero weights, where the coefficients of the expansions, which should be ordinary smooth functions of the coordinates $x^a$ of weight zero, are treated formally as having the required  parities and weights (possibly, non-zero).

In the following two examples   we can avoid, or partly avoid, the problem arising from infinite-dimensionality.

\begin{ex} For any graded manifold $X$,
\begin{equation}
    \Mapp(\R{0|1}, X)=\Pi TX\,.
\end{equation}
This is well known (at least in the non-graded case). Indeed, if $x^a$ are local coordinates on $X$ and $\tau$ is the single coordinate on $\R{0|1}$, $\pr(\tau)=1$, $\w(\tau)=0$, then ``coordinates'' on $\Mapp(\R{0|1}, X)$ are functions of $\tau$,
\begin{equation}
    x^a=\f(\tau)=\f^a_0 + \tau \f_1^a\,,
\end{equation}
where $\w(\f^a_0)=\w(\f^a_1)=\w(x^a)$, $\pr(\f^a_0)=\pr(x^a)$, and $\pr(\f^a_1)=\pr(x^a)+1$\,. By checking the transformation law, one can immediately identify the variables $\f^a_0$ and $\f_1^a$ with $x^a$ and $dx^a$, respectively, the latter considered as coordinates on $\Pi TM$.
\end{ex}

\begin{ex} Consider an even variable $t$ of weight $-1$ as a coordinate on $\R{q^{-1}}$. Find the graded manifold $\Mapp(\R{q^{-1}}, X)$, for an arbitrary graded manifold $X$. The ``coordinates'' on $\Mapp(\R{q^{-1}}, X)$ will be the power series
\begin{equation}
    x^a=\f^a(t)= \sum_{n=0}^{+\infty} \frac{1}{n!}\,t^n \f^a_n\,,
\end{equation}
with indeterminate coefficients $\f^a_n$, where $\pr(\f^a_n)=\pr(x^a)$  and $\w(\f^a_n)=\w(x^a)+n$, for all $n=0,1, 2, \ldots \ $. Although in this case the graded manifold of maps is infinite-dimensional (unlike the previous example), its  infinite-dimensionality is easily controllable. The transformation law for the variables $\f^a_n$ follows from the  expansion in $t$ of  $x^a=x^a(\f_0'+ t\f_1' + \ldots)$, where $x^a=x^a(x')$ is a change of coordinates on $X$. One gets
\begin{equation}
\begin{aligned}
    \f_0^a&= x^a(\f_0')\\
    \f^a_1 &= \f^{a'}_1\, \der{x^a}{x^{a'}}(\f'_0)\\
    \f^a_2&= \f^{a'}_1\f^{b'}_1\,\dder{x^a}{x^{b'}}{x^{a'}}(\f'_0) + \f^{a'}_2\, \der{x^a}{x^{a'}}(\f'_0)\\
    \dots
\end{aligned}
\end{equation}
The transformation law for the variables of weight $n$ involves only variables of weights $\leq n$, so can be truncated at any $n$. The graded manifold $\Mapp(\R{q^{-1}}, X)$ is the inverse limit of finite-dimensional graded manifolds. We recognize in  $\Mapp(\R{q^{-1}}, X)$ the infinite-order tangent bundle $T^{(\infty)}X$, which is the limit of higher tangent bundles $T^{(N)}X$, taken with their natural graded structures.  (Or spaces of   jets of parameterized curves in $X$.)
\end{ex}

What about functions on a graded manifold $X$? Can they be fit into the above?

The graded manifold $\Mapp(X,\RR)$ is the manifold of all even functions of weight zero. To odd functions or functions of non-zero weight, one needs to consider $\Mapp(X,\Pi \RR)$, $\Mapp(X,\RR[n])$ or $\Mapp(X,\Pi \RR[n])$. These are linear graded manifolds and we have $\Mapp(X,\Pi \RR)=\Pi \Mapp(X,\RR)$, etc.

The way how graded manifold of maps is introduced is an example of  the idea the ``functor of points''. Its origins are in algebraic geometry and it is well known for supermanifolds. Namely, every graded manifold $X$ defines a contravariant functor on the category of graded manifolds, $Z\mapsto \Map (Z,X)$ (to the category of sets). Elements of the set $\Map (Z,X)$ are called \emph{$Z$-points} of $X$. In particular, the points of the underlying topological space $X_{\0}$ are exactly  $\R{0}$-points of $X$. A graded manifold is completely defined by its functor of points. Hence the general idea when a graded manifold is being looked for in  some problem, to find it first ``in the weak sense'', i.e. introduce first a functor that should serve as it functor of points for   graded manifold in question and then see if it is indeed representable. (We borrowed the analogy with a ``weak solution'' of a  differential equation  from K.~Fukaya.)

\subsection{Non-negatively graded manifolds}

\subsubsection{Non-negatively graded manifold as a fiber bundle}
\label{subsubsec.nonegasbundle}
Suppose all local coordinates $x^a$ for a graded manifold $E$ have non-negative weights. Then: transformations of coordinates cannot have infinite power series and are necessarily polynomial (see Example~\ref{ex.nonnegweights}). We can arrange   coordinates by increasing weights. Then the coordinates of zero weight transform  between themselves, coordinates of weight $+1$ undergo linear transformation, coordinates of weight $+2$ transform linearly between themselves but can have a  term quadratic in coordinates of weight $+1$, etc. We arrive at a canonical tower of fibrations:
\begin{equation}\label{eq.tower}
    E=E_N\to E_{N-1}\to \cdots\to E_2\to E_1\to E_0=M
\end{equation}
Here $N$ is the top weight of local coordinates on $E$. The subscript for $E_k$ means the top weight for $E_k$. The first fibration $E_1\to E_0=M$ is a vector bundle, the rest are affine bundles. Altogether this assembles into a fiber bundle $E\to M$ with special form polynomial transition functions. This picture was introduced in~\cite{tv:graded}.

The standard fiber for $E\to M$ is some $\R{D}$ where $D=\sum_{w>0} n_wq^w$, i.e., an affine space with coordinates assigned with  some positive weights. Denote this model graded space by $V$. Denote by $GG(V)$ the group of graded polynomial transformations of $V$ (``general  graded''). It depends only on dimension of $V$.

Non-negatively graded manifolds because of restrictions posed by their bundle structure are particularly useful for encoding various differential-geometric information. The method is placing a bound on ``height'' (top weight of local coordinates) combined with ``component analysis'' of some graded quantity. The simplest but still very useful case is to treat vector bundles as graded manifolds. This helps e.g for description of Lie algebroids and multiple Lie algebroids (see below in \ref{subsubsec.lalg}). (Also the description of Courant algebroids by Roytenberg~\cite{roytenberg:thesis}, \cite{roytenberg:graded}.)

 \subsubsection{Canonical  linear model}

Non-negatively graded manifolds  are the most direct  generalization of    vector bundles.   There is one problem related with the fact that, unlike vector bundles, sections of such a nonlinear bundle cannot be added or multiplied by numbers, so we seem to lose an algebraic arena  where algebraic structures such as brackets can be defined.  We shall show here that for   a non-negatively graded manifold $E$ regarded as a fiber bundle $E\to M$, its  structure group  $GG(V)$
possesses  a  natural faithful finite-dimensional linear representation $\rho$. It plays the role of  the standard representation  of the general linear group, and reduces to it in the linear case.
The associated vector bundle $\rho(E)\to M$
can be seen as a canonical ``linearization'' of the graded manifold $E$. It can be defined directly as corresponding to the projective module $\Vectn(E)$  over $\fun(M)$ consisting of vector fields on $E$ of negative weight.

Consider the standard fiber $V$, which is positively graded, and the space of vector fields. It naturally expands by weights as $\Vect (V)=\Vectn(V)\oplus \Vectp(V)$,
\begin{equation}
    \Vectn(V)=\Vect_{-N}(V)\oplus\cdots \oplus \Vect_{-1}(V)\,.
\end{equation}
The group $GG(V)$ acts on $\Vectn(V)$.
\begin{thm} The representation of $GG(V)$   on $\Vectn(V)$ is faithful.
\end{thm}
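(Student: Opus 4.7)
My plan is to exhibit a small explicit collection of vector fields in $\Vectn(V)$ whose simultaneous stabilizer under the action of $GG(V)$ is trivial. The natural choice is the set of coordinate partial derivatives $\partial/\partial y^i$, where $y^i$ are the coordinates on the standard fiber $V$ with weights $w_i=\w(y^i)>0$. Since every $y^i$ has strictly positive weight, each $\partial/\partial y^i$ has weight $-w_i<0$ and therefore lies in $\Vectn(V)$.

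Let $g\in GG(V)$ be a graded polynomial diffeomorphism, written in coordinates as $y^i\mapsto \phi^i(y)$, where each $\phi^i$ is a polynomial of weight $w_i$ and of the same parity as $y^i$. The pushforward $g_*$ of a vector field is computed from the Jacobian of $g$ (with appropriate graded signs, which play no role for coordinate partials). Applying $g_*$ to $\partial/\partial y^k$ and using the chain rule, the condition $g_*(\partial/\partial y^k)=\partial/\partial y^k$ translates into the identity $\partial \phi^j/\partial y^k=\delta^j_k$ as polynomials on $V$. Since this must hold simultaneously for every $k$, the Jacobian matrix of $\phi$ equals the identity everywhere.

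Integrating, $\phi^i(y)=y^i+c^i$ for some constants $c^i$. But each $\phi^i$ is required to be homogeneous of positive weight $w_i$, while real constants carry weight $0$; hence $c^i=0$ and $\phi=\mathrm{id}$. Equivalently, every element of $GG(V)$ automatically fixes the origin, since a weight-$w_i$ polynomial in coordinates of positive weight has no constant term, so no integration constant can appear. Thus the only $g\in GG(V)$ whose pushforward trivializes every $\partial/\partial y^i$ is the identity, and a fortiori the only $g$ acting trivially on all of $\Vectn(V)$ is the identity. The only delicate point is the bookkeeping for $\Z$-parities and graded-commuting generators when writing out the pushforward, but this presents no real obstacle, as the whole argument reduces to the standard chain rule applied to weight- and parity-preserving polynomial substitutions.
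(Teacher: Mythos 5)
Your proof is correct, and it rests on exactly the same key observation as the paper's: the coordinate partials $\partial/\partial y^i$ all lie in $\Vectn(V)$ because every coordinate on $V$ has strictly positive weight, so an element acting trivially must in particular fix them, which forces constant coefficients, and the constants are then killed because weight-$w_i$ homogeneity with $w_i>0$ leaves no room for a weight-$0$ constant term. The difference is the level at which the argument is run. The paper works infinitesimally: a weight-zero vector field whose adjoint action annihilates all of $\Vectn(V)$ commutes with each $\partial/\partial y^i$, hence has constant coefficients, hence vanishes. That is literally a statement about the Lie algebra of $GG(V)$, and turning it into faithfulness of the group representation requires an extra (unstated) step, e.g.\ exponentiation or a connectedness-type argument — note that $GG(V)$ contains a copy of $\GL$ of the weight-one part, which is not connected. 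Your version runs the same computation directly on a group element $\phi$: fixing each $\partial/\partial y^k$ forces the Jacobian of $\phi$ to be the identity, so $\phi^i=y^i+c^i$, and homogeneity kills $c^i$. This is a complete proof of the stated group-level claim with no passage between group and algebra, at the cost of a slightly more careful bookkeeping of pushforwards (the coefficient of $g_*(\partial/\partial y^k)$ is $\partial\phi^j/\partial y^k$ precomposed with $\phi^{-1}$, which is harmless since $\phi^{-1}$ is invertible). The parity signs you flag are indeed immaterial: the vanishing of all partials of $\phi^j-y^j$ implies constancy in the super setting just as in the even one.
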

\begin{proof} On the infinitesimal level, if a vector field of zero weight commutes with all vector fields of negative weights, then in particular it commutes with all partial derivatives $\lder{}{y^i}$, therefore it has constant coefficients, hence is zero.
\end{proof}

We call the representation of $GG(V)$   on $\Vectn(V)$ the \emph{fundamental representation}.

It is finite-dimensional. In the   case of $GL(n)$  it is  the standard    representation on $\R{n}$.

The associated   bundle $F\to M$ corresponding to the fundamental representation of the group $GG(V)$   is called the \emph{fundamental vector bundle} of the graded manifold $E$.

Its sections can be identified with $\Vectn(E)$. Since the representation $\rho$ is faithful, the bundle $E\to M$ (its transition functions) can be recovered from the vector bundle $F\to M$. We shall use this vector bundle when considering ``non-linear Lie algebroids'' in~\ref{subsubsec.nonlinlalg}.

\subsection{Historical remarks about  graded  notions}

Graded notions have long played important role in different areas of mathematics, from gradings appearing in the theory of Lie algebras where it was used as a tool in classification and e.g. for measuring growth of infinite-dimensional algebras, to graded objects in topology and differential geometry, where grading was used for induction and ``dimensional'' arguments and as a source of ``sign rule''.  Nijenhuis--Richardson~\cite{nijen:deformm} developed the basics of graded algebras using  grading by an arbitrary abelian group endowed with a parity homomorphism. (They also anticipated Lie supergroups.)

Looking at other important works, we may notice that probably until the physics works related with BRST quantization, $\ZZ$-grading for algebras was almost always  either $\ZZ^{\geq 0}$ or $\ZZ^{\leq 0}$. Tate~\cite{tate:homology1958} uses  non-negatively graded algebras with homological differential. Milnor and Moore~\cite{milnor:andmoore} by ``graded'' mean  $\ZZ^{\geq 0}$-graded. Deligne--Griffiths--Morgan--Sullivan~\cite{deligne-ea:realhomotopy1975}   and Sullivan~\cite{sullivan:infinites1977} saw graded algebras as non-negatively $\ZZ$-graded.
Quillen~\cite{quillen:hoalgebra67} uses non-negative grading.
Boardman  says quite explicitly that ``graded'' means $\ZZ^{\geq 0}$-graded~\cite{boardman:signs1966} and proceeds to establishing the sign rule as a precise theorem.

Berezin, working on implementation of his program of supermathematics (before the name) made a decisive step in   separating   $\Z$-grading responsible for signs from $\ZZ$-grading. In particular, he studied automorphisms of Grassmann algebra as a $\Z$-graded algebra~\cite{berezin:autgrass} and on these paths discovered Berezinian. Without that, there would be no supermanifolds. (Supermanifolds were\linebreak briefly known for some as ``graded manifolds'' following Kostant, but this usage has now gone.) Re-introduction of $\ZZ$-grading into supergeometry, in a different way, is a new turn of Hegel's dialectic  spiral.

Schlessinger and Stasheff in their famous long-secret work~\cite{stasheff:schless} use graded as $\ZZ$-graded while noting that in many cases it will be either $\geq 0$ (as cochains in topology) or $\leq 0$ (as in algebraic geometry); they however consider a bi-graded case for the ``Tate--J\'{o}zefiak resolution''.  (J\'{o}zefiak~\cite{jozefiak:tateresol1972} generalized Tate's resolution to graded case.)

(Working in a different area, the present author found a non-standard $\ZZ$-grading for pseudodifferential forms on a vector bundle and used it for a study of integral transforms~\cite{tv:ivb},~\cite{tv:class}.)

Supermanifolds graded additionally by $\ZZ$ (and sometimes endowed with several gradings and/or filtration) appeared without  any particular name or mathematical formalization in Henneaux--Teitelboim~\cite{henneaux:teitelboimbook}. They however were quite explicit that parity and $\ZZ$-grading  (e.g. ``ghost number'') are independent and the latter can be positive and negative.

Kontsevich in~\cite{Kontsevich:1997vb} introduces the tensor category of ``graded vector spaces'' as a full subcategory of $\ZZ$-graded super vector spaces for which parity equals degree mod $2$ and also ``graded manifolds'' as supermanifolds with extra  $\ZZ$-grading in the structure sheaf with the same restriction. (He commented also that many of his constructions are valid just for supermanifolds and do not require $\ZZ$-grading.) \v{S}evera~\cite{severa:some2005} introduces a version of $\ZZ^{\geq 0}$-graded manifolds where parity equals degree mod $2$ and they become popular especially combined with a $Q$-structure under the name $NQ$-manifolds ($N$ presumably for $\NN$). Graded manifolds as defined here (with  $\ZZ\times \Z$-grading) were introduced and studied in~\cite{tv:graded}. In particular, the tower of fibrations~\eqref{eq.tower} for non-negatively graded manifolds appeared there. We have used them as a standard language ever since, see e.g.~\cite{tv:qman-esi},~\cite{tv:higherpoisson},~\cite{tv:napl},~\cite{tv:qman}, ~\cite{tv:qman-mack}.


\section{Language of $Q$-manifolds. Description of algebraic and geometric structures}
\label{sec.qman}

In this section we will introduce the language of $Q$-manifolds, which are supermanifolds endowed with an odd vector field  of square zero. They provide a  powerful tool  for describing differential-geometric and algebraic structures. From the viewpoint of algebraic-geometric duality, $Q$-manifolds are the geometric counterpart of differential $\Z$-graded algebras and can be seen as a basis of a ``non-linear homological algebra''.

\subsection{Definition of a $Q$-manifold. Main notions}

\subsubsection{Definition and model examples}
\begin{definition}
A \emph{$Q$-manifold} is a supermanifold endowed with an odd vector field $Q$ such that $Q^2=0$. Such a vector field is called \emph{homological}. A homological vector field is also referred to as a \emph{$Q$-structure}.
\end{definition}

Note that for an odd $Q$, $Q^2=\frac{1}{2}[Q,Q]$. We may sometimes write a $Q$-manifold as a pair $(M, Q)$.

If $M$ is a $Q$-manifold and $x^a$ are local coordinates on $M$, so that $Q=Q^a(x)\lder{}{x^a}$, the condition $Q^2=0$ is expressed by
\begin{equation}\label{eq.qsquare}
    Q^a\p_aQ^b=0\,.
\end{equation}

\begin{remark}
The notion of a $Q$-manifold was introduced by A.~S.~Schwarz, see~\cite{schwarz:semiclassical}. The notation $Q$, can be traced  back to the earlier study of supersymmetry in physics, where the letter $Q$ was a standard notation for a supercharge, i.e. an odd operator such that $Q^2=H$, where $H$ is the (quantum) Hamiltonian or more generally an even symmetry generator. If such an even symmetry vanishes for whatever reason, we arrive at the situation when $Q^2=0$ (see e.g.~\cite{Witten:1988xj}). Homological vector fields were studied by Vaintrob~\cite{vaintrob:darboux, vaintrob:normforms}. Seminal role was played by the work of Alexandrov--Kontsevich--Schwarz--Zaboronsky (AKSZ)~\cite{schwarz:aksz} and the application of $Q$-mani\-folds by Kontsevich in~\cite{Kontsevich:1997vb}. But before $Q$-manifolds were formalized as a mathematical notion, homological vector fields had existed in physics as \emph{BRST symmetries} (for Becchi--Rouet--Stora and I.~Tyutin), see monograph~\cite{henneaux:teitelboimbook}. The physicists' approach for a long time was only half-geometrical, as they were mainly drawing from known algebraic methods of homological algebra (e.g. Tate resolution).
\end{remark}

\begin{example} For any (super)manifold $M$, the supermanifold $\Pi TM$ is a $Q$-manifold. The $Q$-structure is given by the de Rham differential:
\begin{equation}
    Q={\rm d}={\rm d}x^a\der{}{x^a}\,.
\end{equation}
This example from many viewpoints plays the same role for $Q$-manifolds as $T^*M$ with the canonical symplectic structure plays for symplectic manifolds. (Also, we shall see that from some abstract viewpoint, the $Q$-structure on $\Pi TM$ as well as the even and odd symplectic structures on $T^*M$ and $\Pi T^*M$ are manifestations of ``one and the same structure''.)
\end{example}

\begin{example}
Let $V$ be a $\Z$-graded vector space, which we treat as a supermanifold and actually as a graded manifold (in the usual way). An odd differential on $V$, i.e. an odd linear operator ${\rm d}\co V\to V$ such that ${\rm d}^2=0$ defines a ``linear vector field''
\begin{equation}
    Q=x^a{\rm d}_a{}^b \der{}{x^b}\,,
\end{equation}
(where $({\rm d}_a{}^b)$ is the matrix of the linear operator ${\rm d}$), which is a $Q$-structure. Note that $w(Q)=0$ for the natural $\ZZ$-grading. If $V$ is a cochain complex, i.e. is itself endowed with a $\ZZ$-grading so that $\deg {\rm d}=+1$, the corresponding supermanifold becomes bi-graded (by weight and by  degree), and $\deg Q=+1$.
\end{example}

\begin{example}
Let $\frakg$ be a Lie algebra (we will shortly generalize to Lie superalgebras). Consider the supermanifold $\Pi \frakg$. Let $\x^i$ be linear coordinates on $\Pi \frakg$ corresponding to a basis $e_i$ in $\frakg$. (Because $\frakg$ is purely even, all coordinates $\x^i$ are odd.) Consider a vector field on $\Pi \frakg$
\begin{equation}
    Q=\frac{1}{2}\x^i\x^jc_{ij}^k\der{}{\x^k}\,,
\end{equation}
where $c_{ij}^k$ are the structure constants of $\frakg$ in the basis $e_i$. The vector field $Q$ is odd and $w(Q)=+1$ w.r.t. grading given by the linear structure. One can check that $Q^2=0$ due to the Jacobi identity for $c_{ij}^k$. Moreover, the condition $Q^2=0$ is exactly equivalent to the Jacobi identity in $\frakg$.
\end{example}

\begin{remark} The previous example is classical. Functions on $\Pi \frakg$ can be identified with the ``standard cochain complex'' $C^*(\frakg)$ of a Lie algebra $\frakg$ and the vector field $Q$ is the Chevalley--Eilenberg differential in this complex. We shall use this example as a model for describing other structures. One should also compare the formula for the vector field $Q$ on $\Pi \frakg$ with the formulas
    $\{y_i,y_j\}=c_{ij}^k y_k$
for the Lie--Poisson bracket (=Berezin--Kirillov bracket) on $\frakg^*$ and
   $ \{\h_i,\h_j\}=c_{ij}^k \h_k$
for the Lie--Schouten bracket (odd analog of Lie--Poisson) on $\Pi\frakg^*$. For a Lie algebra, these three structures are different equivalent manifestations of a Lie algebra structure itself. If we drop the restrictions e.g. the linearity for the brackets, we will arrive at $Q$-manifolds, Poisson manifolds and odd Poisson manifolds as three different non-linear generalizations of Lie algebras.
\end{remark}

The general philosophy is that a $Q$-manifold  is a non-linear analog  of a (co)chain complex.  Respectively, we will introduce now the analogs for chain maps and for cohomology. We will also give the analog of the complex of homomorphisms.

\subsubsection{$Q$-morphisms}

\begin{definition} A \emph{morphism} of $Q$-manifolds $(M_1, Q_1)$ to $(M_2, Q_2)$ or a \emph{$Q$-morphism} or a \emph{$Q$-map} is a supermanifold map $\f\co M_1\to M_2$ that intertwines $Q_1$ and $Q_2$, i.e. such that the vector fields $Q_1$ and $Q_2$ are $\f$-related.
\end{definition}

Recall that in general for vector fields $Q_1$ and $Q_2$ (that for this purpose do not have to be homological) the condition of being intertwined by a map $\f$ or being  $\f$-related can be formulated in two equivalent ways: either as the commutativity of the diagram
\begin{equation}
    \begin{CD} (\Pi)TM_1 @>{T\f}>> (\Pi)TM_2\\
                @A{Q_1}AA         @AA{Q_2}A\\
                M_1 @>{\f}>>  M_2
    \end{CD}
\end{equation}
(if vector fields are seen as sections of the tangent bundles; if a vector field is even, it is a section of $TN\to M$ and if it is odd, it is a section of $\Pi TM\to M$, so we have or not have $\Pi$ in the above diagram). Or as the equality
\begin{equation}
    Q_1\circ \f^*=\f^*\circ Q_2\,,
\end{equation}
where $\f^*$ is the pullback of functions and vector fields are regarded as operators on functions.

If $x^a$ and $y^i$ are local coordinates on $Q$-manifolds $M_1$ and $M_2$, the condition that a map $\f\co M_1\to M_2$ is a $Q$-morphism is expressed by
\begin{equation}\label{eq.qmorphcoord}
    Q_1^a(x) \der{\f^i}{x^a}(x)= Q_2^i(\f(x))\,,
\end{equation}
where $\f^*(y^i)=\f^i(x)$.

\begin{proposition}
In the examples above, i.e. $\Pi TM$, $\Pi \frakg$ for  a Lie algebra $\frakg$, and the $Q$-manifold corresponding to a complex $(V,{\rm d})$, --- $Q$-morphisms preserving grading are equivalent to, respectively: arbitrary maps $M_1\to M_2$; Lie algebra homomorphisms $\frakg\to \mathfrak{h}$;   chain maps $V\to W$.
\end{proposition}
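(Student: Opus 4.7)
The plan is to treat the three cases uniformly by a two-step strategy: first use weight-preservation to reduce an otherwise arbitrary morphism to polynomial data of very low degree; then decode the $Q$-morphism equation~\eqref{eq.qmorphcoord} for this restricted data as the claimed structure-preserving condition. In each of the three model examples the relevant $\ZZ$-grading (form degree on $\Pi TM$, Grassmann weight on $\Pi\frakg$, cochain degree on $V$) makes both $Q$'s of fixed weight ($+1$ in the last two, of weight $+1$ in the first if we weight $dx^a$ by $+1$), and the target constraint is so rigid that only linear/first-order polynomial data survive.

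For $\Pi TM_i$ with $\w(x^a)=0$, $\w({\rm d}x^a)=+1$, weight-preservation forces the coordinate form $\f^*(y^i)=\f^i(x)$ and $\f^*({\rm d}y^i)=A^i_a(x)\,{\rm d}x^a$. Applying~\eqref{eq.qmorphcoord} to $y^i$ with $Q_k={\rm d}$ yields ${\rm d}x^a\,\partial_a\f^i=A^i_a\,{\rm d}x^a$, hence $A^i_a=\partial_a\f^i$, so $\f$ is the antitangent $\Pi Tg$ of the ordinary smooth map $g\co M_1\to M_2$ whose coordinate expression is $g^i=\f^i$. The remaining equation applied to ${\rm d}y^i$ is automatic, since $Q_2({\rm d}y^i)=0$ and $Q_1(\partial_a g^i\,{\rm d}x^a)={\rm d}x^b\,\partial_b\partial_a g^i\,{\rm d}x^a$ vanishes by symmetry of second derivatives against the antisymmetry of products of ${\rm d}x$'s. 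Conversely $\Pi Tg$ is plainly a $Q$-morphism for every $g$.

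For $\Pi\frakg\to\Pi\mathfrak{h}$, all coordinates $\x^i$, $\h^\alpha$ are odd of weight $+1$, so a weight-preserving morphism must have $\f^*(\h^\alpha)=\Phi^\alpha_i\x^i$ with scalar coefficients, i.e.\ $\f=\Pi\Phi$ for a linear map $\Phi\co\frakg\to\mathfrak{h}$. Inserting into~\eqref{eq.qmorphcoord} and using $Q_1\x^i=\tfrac12 c_{jk}^i\x^j\x^k$, $Q_2\h^\alpha=\tfrac12 d_{\beta\gamma}^\alpha\h^\beta\h^\gamma$ gives $c_{jk}^i\Phi^\alpha_i\,\x^j\x^k=d_{\beta\gamma}^\alpha\Phi^\beta_j\Phi^\gamma_k\,\x^j\x^k$; antisymmetrizing against $\x^j\x^k=-\x^k\x^j$ this is exactly the homomorphism condition $\Phi([e_j,e_k])=[\Phi(e_j),\Phi(e_k)]$. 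For a cochain complex $(V,{\rm d}_V)\to(W,{\rm d}_W)$, weight-preservation again forces $\f^*(y^i)=A^i_a x^a$ with scalar $A^i_a$, representing a linear map $A\co V\to W$ (of degree $0$ if the cochain grading is also preserved). Substituting into~\eqref{eq.qmorphcoord} and equating coefficients of $x^c$ yields $A^i_b({\rm d}_V)_c^b=A^j_c({\rm d}_W)_j^i$, which is the matrix form of $A\circ{\rm d}_V={\rm d}_W\circ A$.

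The only subtle point is sign-bookkeeping when pulling back products of odd coordinates and when reordering them; once this is handled, the three cases are direct decodings of~\eqref{eq.qmorphcoord} following the same template — weight-preservation forces polynomial data of small degree, and $Q$-intertwining promotes it to the classical structure-preserving condition — and the converse is in each case immediate by reading the computation backwards.
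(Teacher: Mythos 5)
Your proof is correct: the paper states this proposition without proof (as a standard fact), and your direct coordinate computation — weight-preservation pinning down the morphism to linear/low-degree data, then reading the intertwining condition $Q_1\circ\f^*=\f^*\circ Q_2$ as naturality of $d$, the structure-constant identity $c_{jk}^i\Phi^\alpha_i=d_{\beta\gamma}^\alpha\Phi^\beta_j\Phi^\gamma_k$, and the chain-map condition respectively — is exactly the intended argument. The only caveat worth noting is that in the $\Pi TM$ case for genuine supermanifolds the vanishing of $Q_1(\partial_a g^i\,{\rm d}x^a)$ is the graded identity $d^2=0$ (symmetric second derivatives against graded-antisymmetric ${\rm d}x^b{\rm d}x^a$), which you correctly flag under sign-bookkeeping.
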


We see that for maps $\Pi TM_1\to \Pi TM_2$ the condition that a map is a $Q$-morphism is an ``integrability condition''. If we relax preservation of grading, more $Q$-maps appear. For example, general  $Q$-maps $\Pi TM_1\to \Pi TM_2$ in local coordinates are specified by formulas $y^i=\f^i(x,dx)$ (where the r.h.s. is arbitrary), instead of $y^i=\f^i(x)$.

\subsubsection{Zero locus and its involutive distribution.  ``Non-linear homological algebra''}
\begin{definition}
The \emph{zero locus} of a $Q$-manifold $M$ is the zero locus (the set of zeros) of the vector field $Q$. Notation: $\zeroloc(M)$ or $\zeroloc(Q)$.
\end{definition}

In coordinates, if $Q=Q^a(x)\lder{}{x^a}$, then $\zeroloc(M)$ is specified by the equation
\begin{equation}\label{eq.zeroloc}
    Q^a(x)=0\,.
\end{equation}

If $\f\co M_1\to M_2$ is a $Q$-map, it maps $\zeroloc(M_1)$ to $\zeroloc(M_2)$.

In the above examples, we obtain the following subsets as zero loci.

For $\Pi TM$ with $d$, it is specified by the equation $dx^a=0$, hence   $\zeroloc(\Pi TM)=M$.

For a complex $V=(V,{\rm d})$, we obtain  $\zeroloc(V)=\Ker {\rm d}$ (the usual subspace $Z(V,{\rm d})$ of cocycles).

For a Lie algebra $\frakg$, the zero locus  $\zeroloc(\Pi \frakg)\subset \Pi \frakg$ is a conic subspace   given by the quadric equations $c_{ij}^k\xi^i\xi^j=0$.

The zero locus $\zeroloc(M)$ comes equipped with a canonically defined distribution, as follows.
The vector field $Q$ induces a linear transformation $Q_x:=TQ(x)$ in the tangent space $T_xM$ for all $x\in \zeroloc(M)$, and $Q_x^2=0$. It is easy to see (e.g. by using local coordinates) that $\Ker Q_x=T_x\zeroloc(M)$. Hence $\Im Q_x\subset \Ker Q_x$ give a distribution on $\zeroloc(M)$. Denote it $\blocq$, so that $\blocq_x=\Im Q_x$.
\begin{proposition}[\cite{schwarz:semiclassical}, \cite{schwarz:aksz}]
The distribution $\blocq$ on $\zeroloc(M)$ is involutive, $[\blocq,\blocq]\subset \blocq$.
\end{proposition}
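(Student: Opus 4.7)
The strategy is to deduce the statement from the graded Jacobi identity and the relation $[Q,Q]=0$ (which is equivalent to $Q^2=0$ since $Q$ is odd). The central elementary observation is that at any zero $x_0\in\zeroloc(M)$ and for any vector field $X$ on $M$, the commutator specialises as
\begin{equation*}
    [Q,X](x_0)=Q_{x_0}\bigl(X(x_0)\bigr),
\end{equation*}
because in local coordinates $[Q,X]=Q^a\p_a X^b\,\p_b-(-1)^{\tilde X}X^a\p_aQ^b\,\p_b$ and the first summand vanishes at $x_0$ since $Q^a(x_0)=0$. Two consequences: each such $[Q,X]$ is automatically tangent to $\zeroloc(M)$ along $\zeroloc(M)$, and every vector in $\blocq_{x_0}=\Im Q_{x_0}$ is realised as the value at $x_0$ of some $[Q,X]|_{\zeroloc(M)}$. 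Hence the restrictions $[Q,X]|_{\zeroloc(M)}$ generate $\blocq$ pointwise, and to check $[\blocq,\blocq]\subset\blocq$ it is enough to work with a pair of local generators $v_i=[Q,X_i]|_{\zeroloc(M)}$.

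The computation is then a short graded Jacobi manipulation. From $[Q,Q]=0$ one first obtains $[Q,[Q,X_2]]=0$ by applying super-Jacobi to $Q,Q,X_2$. A second application of super-Jacobi to $Q,X_1,[Q,X_2]$, combined with this vanishing, rearranges to the key identity
\begin{equation*}
    [[Q,X_1],[Q,X_2]]=[Q,[X_1,[Q,X_2]]],
\end{equation*}
exhibiting the bracket of two $\ad_Q$-images as again an $\ad_Q$-image.

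To finish, since $[Q,X_i]$ is tangent to $\zeroloc(M)$, the bracket $[v_1,v_2]$ computed on $\zeroloc(M)$ equals the restriction of the bracket on $M$; applying the first-paragraph observation to $Y=[X_1,[Q,X_2]]$ yields
\begin{equation*}
    [v_1,v_2](x_0)=[Q,Y](x_0)=Q_{x_0}\bigl(Y(x_0)\bigr)\in\Im Q_{x_0}=\blocq_{x_0},
\end{equation*}
which is the desired involutivity. The main obstacle I anticipate is not the two-line Jacobi computation but the sign bookkeeping (with $\ad_Q$ an odd derivation of the Lie superalgebra of vector fields) and a careful reading of ``involutive'' when $\blocq$ fails to have constant rank: the statement is meant as closure of local sections under Lie brackets rather than a classical Frobenius-type condition, and the generation argument in the first paragraph should be invoked with this sheaf-theoretic interpretation in mind.
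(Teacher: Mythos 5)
Your proof is correct and follows essentially the same route as the paper's: the paper identifies the sections of $\blocq$ with the restriction to $\zeroloc(M)$ of the ideal $\Im(\ad Q)\subset\Ker(\ad Q)$ and concludes involutivity from that algebraic structure, while your pointwise computation $[Q,X](x_0)=Q_{x_0}(X(x_0))$ and the Jacobi identity $[[Q,X_1],[Q,X_2]]=[Q,[X_1,[Q,X_2]]]$ are precisely the details the paper leaves to the reader. No gaps; the sign bookkeeping you flag is harmless since only the image of $Q_{x_0}$ matters.
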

\begin{proof}
Observe (e.g. in local coordinates) that the vector fields tangent to $\zeroloc(M)$ can be described as elements of the Lie subalgebra $\Ker(\ad Q)\subset \Vect(M)$ restricted to $\zeroloc(M)$, and the distribution $\blocq$ can be similarly described by the ideal $\Im(\ad Q)\subset \Ker(\ad Q)$, and thus the involutivity follows.
\end{proof}

Hence one may wish to explore the  space of leaves  $\zeroloc(M)/\blocq$, which is a kind of ``non-linear homology''~\cite{schwarz:aksz}. Functions on $\zeroloc(M)/\blocq$ are those functions on $\zeroloc(M)$ that are constant in the directions of $\blocq$.

In the model examples we obtain the following. For a $Q$-manifold corresponding to a complex $(V,d)$, $\zeroloc(M)/\blocq$ coincides with the usual cohomology $Z(V,d)/B(V,d)$. For the rest, the answers are less obvious. One can see that for $\Pi TM$ and for any $x\in M=\zeroloc(\Pi TM)$, $\blocq_x=\Im d_x=\Ker d_x=T_xM$. (There is no homology in the tangent spaces, which is in a certain sense the condition of \emph{non-degeneracy} of a $Q$-structure, see~\cite{schwarz:semiclassical}.) Hence functions on  $\zeroloc(\Pi TM)/\blocq$ are locally constant functions on $M$, i.e. $H^0(M)$, and  $\zeroloc(\Pi TM)/\blocq\cong \pi_0(M)$. This does not feel very satisfying and one may wish to modify the interpretation of $\zeroloc(M)/\blocq$ (e.g. by considering ``points'' that are more general than ordinary $\RR$-valued points, so to be able to detect more information such as the whole cohomology algebra $H^*(M)$).

For  the case of $\Pi \frakg$, one can identify $\zeroloc(\Pi \frakg)/\blocq$ with  the space of orbits of the adjoint action of a Lie group associated with $\frakg$ (note that the adjoint action preserves $\zeroloc(\Pi \frakg)$).
We can elaborate this as follows. It makes sense to consider a slightly more generalize setting.

\begin{example}\label{ex.deffunctor}
Let $\frakg$ be  a \emph{differential Lie superalgebra}, i.e.   besides the Lie bracket it is equipped also with an odd operator $d$ such that $d$ is a derivation of the bracket and $d^2=0$. This is described by a field $Q$ on $\Pi \frakg$ of the form
\begin{equation}
    Q=\left(\x^iQ_i^k + \frac{1}{2}\x^i\x^jQ_{ji}^k \right)\der{}{\x^k}
\end{equation}
(the first term is responsible for $d$, the second for the bracket). In a coordinate-free form,
\begin{equation}
    Q(\x)={\rm d}\x -\frac{1}{2}[\x,\x]
\end{equation}
(the minus sign has some explanation, compare \ref{subsubsec.linfalgebras}).
Hence the equation of the zero locus is
\begin{equation}\label{eq.mc}
    {\rm d}\x -\frac{1}{2}[\x,\x]=0\,.
\end{equation}
Note that points of $\Pi \frakg$ are the same as odd elements of $\frakg$. We look for   solutions of~\eqref{eq.mc} that can depend on arbitrary external parameters  some of which may be odd.
The infinitesimal transformation defined by $Q$ on $\Pi \frakg$ is $\x \mapsto \x +\e Q(\x)$ (where $\e$ is odd). It lifts to the action on arbitrary tangent vectors $\dot \x$ by
\begin{equation}
    \dot\x  \mapsto \dot\x  +\e ({\rm d}\dot \x-[\x,\dot \x])
\end{equation}
(where $\dot \x$ can be of any parity). In particular, if  $\x\in \zeroloc(Q)$,   the action preserves $T_{\x}\Pi \frakg$. So   the linear operator $Q_{\x}$ is the ``covariant derivative'':
\begin{equation}
   Q_{\x}(\h)= {\rm d}\h -[\x,\h]\,,
\end{equation}
where $\h\in \frakg$ (of arbitrary parity).  The equation of the zero locus is  the ``zero curvature'' condition  $({\rm d}-\ad \x)^2=0$.
Hence the tangent space  $T_{\x}\zeroloc(\Pi \frakg)=\Ker Q_{\x}$ consists of  ``covariantly constant'' vectors.
The infinitesimal shift of $\x\in \zeroloc(\Pi \frakg)$, by $\h\in T_{\x}\zeroloc(\Pi \frakg)$, $\x\mapsto \x+\e \h$, in the case of $\h\in \Im Q_{\x}$ is $\x\mapsto \x+\e ({\rm d}\h-[\x,\h])$. This can be viewed as an   ''infinitesimal gauge transformation'' of $\x$, i.e. the infinitesimal form of a transformation $\x\mapsto -{\rm d}g\,g^{-1}+g\x g^{-1}$ by elements of  a  \emph{differential  group}  integrating $\frakg$. This is a usual Lie group with a $Q$-structure coming from ${\rm d}$ on $\frakg$. Hence at least locally the leaves of the distribution $\blocq$ are the same as gauge orbits.
\end{example}

\begin{remark}
Introducing the zero locus of a given homological vector field $Q$ and then taking quotient of it by a distribution on it can be compared with the logic of BRST theory~\cite{henneaux:teitelboimbook} (see also~\cite{lyakhovich-sharapov:2005}, \cite{kazinski-lyakhovich-sharapov:2005}). In BRST theory it   goes in the opposite direction: a given ``constraint surface'' or ``shell'', which has to be factorized by   symmetries generating an involutive distribution, is first ``resolved'' by a version of Tate~\cite{tate:homology1958} or Tyurina~\cite{palamodov:deformations1976} resolutions, which means effectively replacing a submanifold by a non-positively  graded $Q$-manifold which as a fiber bundle over the original ambient manifold, then it is further enlarged  to a $\ZZ$-graded $Q$-manifold (with negative and positive weights) where the vector field $Q$ (denoted traditionally as $s$ and called ``BRST differential'') incorporates   information about constraints  and symmetries. The procedure is non-unique and the correct picture should take care of this non-uniqueness. If one recalls that in standard homological algebra    complexes are taken up to quasi-isomorphism to get the derived category, then  analogously one should expect appearance of some ``derived $Q$-manifolds''. Investigations in this direction coming from the side of  derived algebraic geometry (which has been around for some time)    are already on the way, see e.g. Pridham~\cite{pridham:outline2018}. (See also~\cite{Carchedi:1211.6134},\cite{carchedi-royt:homalg2012}.) The future theory should  to be able to incorporate also microformal morphisms    introduced below in Sections~\ref{sec.microclass},\ref{sec.microquant}.
\end{remark}

\subsubsection{Remark: on deformation of structures using $Q$-manifolds.}
Example~\ref{ex.deffunctor} above was a glance into the apparatus of deformation theory (which will remain  outside the scope of this text). The modern viewpoint is that every algebraic or geometric structure or, better, type of structure,  is controlled by a particular differential graded Lie superalgebra (or its generalization such as an   $\Linf$-algebra,  which we shall  define  in~\ref{subsubsec.linfalgebras}).     Basically, with any such an algebra is associated a \emph{deformation functor} which is roughly  $\zeroloc(M)/\blocq$ for the corresponding graded $Q$-manifold $M$. (Note that we did not consider a $\ZZ$-grading in the example; but in concrete situations it plays important role.) It is roughly a set whose points are moduli or deformations of structures of a considered type. ``Functor'' refers to   dependence on a base of deformations, i.e. a choice of an algebra from which parameters are taken.   The idea that deformations of geometric and algebraic structures are controlled by graded Lie algebras was put forward by Nijenhuis (see~Nijenhuis--Richardson\cite{nijen:deformm}), as an abstract framework modeled on the previous work  on deformations of complex structures (Fr\"{o}licher--Nijenhuis~\cite{nijen:stab}, Kodaira--Spencer~\cite{kodaira-spencer:deform1958} and Kuranishi~\cite{kuranishi:localcomp1962}) and associative algebras (Gerstenhaber~\cite{gerstenhaber:cohomology63}). It was Nijenhuis who brought to the fore the ``deformation equation'' ${\rm d}\x\pm\frac{1}{2}[\x,\x]=0$, called also the Maurer--Cartan equation or master equation. (Nijenhuis was very much ahead of his time, he   possessed for example a working replacement of Lie supergroups under the name ``analytic graded Lie algebras''.)
Then there followed  the work  of Schlessinger--Stasheff~\cite{stasheff:schless} of 1979 and  the work of Goldman--Millson~\cite{goldmanmillson:deftheory88}, who used Deligne's ideas  that quasi-isomorphic DG Lie algebras define the same deformation theory and that instead of taking the deformation functor as a set of equivalence classes, one should   consider it as the corresponding action groupoid (``Deligne's groupoid''). Then it was Kontsevich~\cite{Kontsevich:1997vb} who formulated everything in terms of  formal  $Q$-manifolds, identifying solutions of the deformation equation with points of the zero locus $\zeroloc(Q)$, and  established  invariance of the deformation functor under  $\Linf$  quasi-isomorphisms (much more than   DG Lie!), which was the crucial step for formulating and proving his celebrated formality theorem.

\subsubsection{$Q$-structure on the space of maps.}

This is an analog of the complex of homomorphisms. We will not use this construction in the rest of the paper, but wanted to include it because of its importance. Suppose $M_1$ and $M_2$ are $Q$-manifolds. Consider the infinite-dimensional supermanifold (or graded manifold, if $M_1$ and $M_2$ are graded) of all maps $\Mapp(M_1,M_2)$. Claim: it has a natural $Q$-structure (defined first in~\cite{schwarz:aksz}). It has numerous applications, in original paper~\cite{schwarz:aksz} as well as in many others, e.g.~\cite{cattaneo:sigmaaksz}.

The construction is as follows (we use the exposition given in~\cite{tv:invaksz}).

Consider first arbitrary vector fields $Q_i$ on $M_i$ (not assuming them homological). To them corresponds a vector field on $\Mapp(M_1,M_2)$ that we call the \emph{difference construction}:
\begin{equation}
    d(Q_1,Q_2)[\f]:= \f^*Q_{2}-\f_*Q_1\,,
\end{equation}
where the ``pullback'' $\f^*Q_{2}$ of a vector field $Q_2$ on the target $M_2$ and the ``pushforward'' $\f_*Q_{1}$ of a vector field $Q_1$ on the source $M_1$ are defined respectively as $\f^*Q_{2}:=Q_2\circ \f$ and $\f_*Q_{1}=T\f\circ Q_1$.
Here we treat vector fields as sections of the tangent bundles (not as operators on functions). Both $\f^*Q_{2}$ and $\f_*Q_{1}$ are vector fields along $\f$, i.e. can be perceived as infinitesimal variations of $\f$ or elements of the tangent space $T_{\f}\Mapp(M_1,M_2)$. So is the difference $d(Q_1,Q_2)[\f]$ for each $\f$. Hence we have    vector fields    on  $\Mapp(M_1,M_2)$, in particular, the vector field $d(Q_1,Q_2)$.
The zeros of the vector field $d(Q_1,Q_2)$ are precisely such $\f$ that $Q_1$ and $Q_2$ are $\f$-related.

It is convenient to use the notation  $Q_{2*}$ and $Q_1^*$ for the vector fields   induced on the space of maps, so that $Q_{2*}[\f]=\f^*Q_{2}$  and  $Q_1^*[\f]=\f_*Q_1$. (The position  of the star  corresponds to post- or pre-composition with the infinitesimal diffeomorphism  generated by the vector field.) In this notation,
\begin{equation}
    d(Q_1,Q_2) = Q_{2*}- Q_1^*\,.
\end{equation}
It immediately follows that under  both ``star'' operations, the commutator on $M_1$ or $M_2$ is mapped to th commutator on $\Mapp(M_1,M_2)$, and that any two vector fields   with the lower star and the upper star automatically commute. Hence the main result:

\begin{proposition} For homological vector fields $Q_i$ on $M_i$, the difference construction $d(Q_1,Q_2)$ is a homological vector field on $\Mapp(M_1,M_2)$\,.
\end{proposition}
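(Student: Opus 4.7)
The plan is straightforward: since $d(Q_1,Q_2)$ is odd, showing it is homological amounts to verifying $[d(Q_1,Q_2), d(Q_1,Q_2)] = 0$ in the super sense (where the bracket of two odd fields is symmetric in its two arguments). Expanding bilinearly from $d(Q_1,Q_2) = Q_{2*} - Q_1^*$ yields
\[
[d(Q_1,Q_2), d(Q_1,Q_2)] = [Q_{2*}, Q_{2*}] - 2\,[Q_{2*}, Q_1^*] + [Q_1^*, Q_1^*].
\]

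I would then invoke the three properties flagged immediately before the proposition: (a) $Q_2 \mapsto Q_{2*}$ is a homomorphism of super Lie algebras, so $[Q_{2*}, Q_{2*}] = [Q_2, Q_2]_* = (2Q_2^2)_* = 0$; (b) $Q_1 \mapsto Q_1^*$ is likewise a super Lie homomorphism, giving $[Q_1^*, Q_1^*] = 0$; (c) any lower-star field commutes with any upper-star field, so $[Q_{2*}, Q_1^*] = 0$. All three terms vanish, finishing the argument.

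The real content is therefore securing (a)--(c). The cleanest route is to argue via (formal, odd-time) flows. If $\Phi_t^{Q_i}$ denotes the flow of $Q_i$ on $M_i$, then $Q_{2*}$ and $Q_1^*$ integrate on $\Mapp(M_1,M_2)$ to post- and pre-composition, $\f \mapsto \Phi_t^{Q_2}\circ \f$ and $\f \mapsto \f\circ \Phi_s^{Q_1}$ respectively. Post- and pre-composition act on opposite sides of $\f$ and hence commute as operations on the space of maps, so these two one-parameter families commute for all $s,t$, yielding (c) upon differentiating at the origin; properties (a) and (b) follow because composing with $\f$ from one fixed side intertwines the commutator of flows on $M_i$ with the commutator of the induced flows on $\Mapp(M_1,M_2)$.

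The main obstacle is making this calculus of vector fields rigorous on the infinite-dimensional graded manifold $\Mapp(M_1,M_2)$. I would handle this via the functor-of-points description introduced earlier in the paper, checking (a)--(c) on $Z$-points: in coordinates $x^a$ on $M_1$ and $y^i$ on $M_2$, the tangent vector $d(Q_1,Q_2)[\f]$ has components $Q_2^i(\f(x)) - Q_1^a(x)\,\partial \f^i/\partial x^a$, and the required (super-)commutations reduce by the chain rule to a matching of pointwise brackets on $M_2$ (for two lower-star fields), on $M_1$ (for two upper-star fields), and to equality of mixed partial derivatives (killing the cross terms, i.e.\ giving (c)).
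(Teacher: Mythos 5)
Your argument is essentially the paper's own: the paper likewise reduces the claim to the three facts that both star operations send the commutator on $M_i$ to the commutator on $\Mapp(M_1,M_2)$ and that any lower-star field commutes with any upper-star field, and then concludes immediately by expanding $[Q_{2*}-Q_1^*,\,Q_{2*}-Q_1^*]$. Your additional justification of (a)--(c) via flows and $Z$-points only fills in details the paper leaves as ``it immediately follows,'' so the proof is correct and takes the same route.
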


Explicit formula:
\begin{multline}
   d(Q_1,Q_2) = Q_{2*} - Q_1^*=  \\
   =\int_{M_1}\!\! Dx \, \left(Q_2^i(\f(x)) - Q_1^a(x)\,\der{\f^i}{x^a}(x)\right) \var{}{\f^i(x)}\,.
 \end{multline}
(up to common sign  depending on conventions for the Berezin integral).

For three $Q$-manifolds and a composition of maps $\f_{21}\co M_1\to M_2$ and  $\f_{32}\co M_2\to M_32$ there is a formula~\cite{tv:invaksz}\,:
\begin{multline}\label{eq.composfordqq}
    d(Q_1,Q_3)[\f_{32}\circ \f_{21}]=\\
    =d(Q_2,Q_3)[\f_{32}]\circ \f_{21}+T\f_{32}\circ d(Q_1,Q_2)[\f_{21}]\
\end{multline}
(it is an analog of the Leibniz formula).

\subsection{Digression: derived brackets}
Recall (for reference purposes) the definition of a Lie superalgebra (we prefer not to use ``graded Lie algebras'' to avoid contradiction with the Lie algebras that are graded).

A $\Z$-graded vector space $L=L_0\oplus L_1$ with an even bilinear operation  which we denote  by $[-,-]$ is a \emph{Lie superalgebra} (and the operation is referred to as   `Lie bracket') if antisymmetry
\begin{equation}\label{eq.antisymla}
    [u,v]=-(-1)^{\ut\vt}[v,u]
\end{equation}
and Jacobi identity (which we write in the Leibniz form)
\begin{equation}\label{eq.jacla}
    [u,[v,w]]=[[u,v],w]+(-1)^{\ut\vt}[v,[u,w]]
\end{equation}
are satisfied.

If only~\eqref{eq.jacla} is satisfied (no antisymmetry assumed), then $L$ is called a \emph{Loday} or \emph{Leibniz algebra} and the bracket is referred to as `Loday bracket'.

One can modify these notions by shifting parity so that the bracket becomes odd (with respect to the new parity). Its properties differ by the shift of parities in all the signs. Such structures are called an \emph{odd Lie superalgebra} or an \emph{odd Loday algebra}.

Fix an odd linear operator $D$ on a Loday algebra $L$ which is a derivation of the bracket (for example, $D=\ad \D$ for an odd element $\D$). Define a new   operation of the opposite parity to the original:
\begin{equation}\label{eq.yvetteder}
    [u,v]_D:= \pm [D(u),v]
\end{equation}
(sign not essential and can be properly chosen).

\begin{theorem}[\cite{yvette:derived}]
Suppose $D^2=0$. Then the operation $[u,v]_D$ defines on
$L$ a new Loday algebra structure (of the opposite parity).
\end{theorem}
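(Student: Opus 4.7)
The plan is to verify the Loday/Leibniz identity for the new bracket by direct computation, reducing everything to the original bracket through the derivation property of $D$ and then invoking the original Jacobi identity. Since $D$ is odd, the new bracket $[u,v]_D$ has parity shifted by $1$, so what needs to be checked is the Leibniz identity with the parities $\tilde u, \tilde v$ everywhere replaced by the shifted parities $\tilde u + 1, \tilde v + 1$.

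First I would fix a convention, say $[u,v]_D := (-1)^{\tilde u}[Du, v]$, and write out the three required expressions in terms of the original bracket:
\begin{equation}
[u, [v,w]_D]_D = (-1)^{\tilde u + \tilde v}[Du, [Dv, w]],
\end{equation}
and similarly $[v,[u,w]_D]_D = (-1)^{\tilde u + \tilde v}[Dv, [Du, w]]$, while
\begin{equation}
[[u,v]_D, w]_D = (-1)^{\tilde v + 1}[D[Du, v], w].
\end{equation}
The key input is the two hypotheses on $D$: since $D$ is an odd derivation of $[-,-]$ and $D^2 = 0$,
\begin{equation}
D[Du, v] = [D^2 u, v] + (-1)^{\tilde u + 1}[Du, Dv] = (-1)^{\tilde u + 1}[Du, Dv],
\end{equation}
so $[[u,v]_D, w]_D = (-1)^{\tilde u + \tilde v}[[Du, Dv], w]$.

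Next I would apply the original Loday/Leibniz identity (\ref{eq.jacla}) to the innermost triple $[Du, [Dv, w]]$, using that $\widetilde{Du} = \tilde u + 1$ and $\widetilde{Dv} = \tilde v + 1$:
\begin{equation}
[Du, [Dv, w]] = [[Du, Dv], w] + (-1)^{(\tilde u + 1)(\tilde v + 1)}[Dv, [Du, w]].
\end{equation}
Multiplying through by $(-1)^{\tilde u + \tilde v}$ and substituting the expressions above, one obtains
\begin{equation}
[u,[v,w]_D]_D = [[u,v]_D, w]_D + (-1)^{(\tilde u + 1)(\tilde v + 1)}[v,[u,w]_D]_D,
\end{equation}
which is precisely the Leibniz identity for $[-,-]_D$ with the shifted parities, establishing that $[-,-]_D$ is a Loday bracket of the opposite parity. (Neither antisymmetry nor its failure was used, so the argument applies whether $L$ is Lie or only Loday; if the original bracket were Lie, one would not expect the derived bracket to be antisymmetric in general, which is consistent with the theorem's formulation for Loday structures.)

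The proof is essentially a mechanical sign computation, so the only real obstacle is bookkeeping: getting the sign conventions compatible so that the two pieces coming from the original Jacobi identity land correctly on the two terms in the new Leibniz identity. A slight rechoice of the overall sign in~(\ref{eq.yvetteder}) (for instance $[u,v]_D = (-1)^{\tilde u + 1}[Du,v]$ or omitting the sign) shifts the computation only by a global sign and does not affect the validity of the conclusion, which is the reason the theorem statement leaves the sign unspecified.
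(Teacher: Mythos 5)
Your verification is correct. The paper itself gives no proof of this theorem (it is quoted from Kosmann-Schwarzbach's work on derived brackets), and your direct computation is the standard argument: with the convention $[u,v]_D=(-1)^{\ut}[Du,v]$, the three double brackets reduce as you state, the derivation property together with $D^2=0$ gives $D[Du,v]=(-1)^{\ut+1}[Du,Dv]$, and the original Leibniz identity applied to $Du, Dv, w$ (with parities $\ut+1$, $\vt+1$) yields exactly the Leibniz identity for the odd bracket. All signs check out.

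One small caveat about your closing parenthetical: a \emph{constant} overall sign change (e.g.\ $(-1)^{\ut}\mapsto(-1)^{\ut+1}$) indeed cancels in every double bracket and is harmless, but \emph{omitting} the sign altogether, i.e.\ taking $[u,v]_D=[Du,v]$, is a parity-dependent rescaling: it leaves an uncancelled factor $(-1)^{\ut+1}$ on the term $[[u,v]_D,w]_D$, so the identity no longer takes the form~\eqref{eq.jacla} with shifted parities (one would have to adopt a correspondingly modified form of the odd Jacobi identity to absorb it). This is consistent with the paper's phrasing that the sign ``can be properly chosen'': some choices work for the stated form of the identity and some do not. This does not affect the validity of your proof, which uses a correct choice throughout.
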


See also~\cite{yvette:derived2}.    Operation~\eqref{eq.yvetteder} is called \emph{derived bracket}.  It has many   applications. Note that even if the original algebra is a Lie superalgebra, the new algebra does not generally satisfy antisymmetry. However, it may be satisfied (for some elements) for an additional reason.

There is a related construction of ``higher derived brackets'' that we will introduce shortly. They automatically satisfy (anti)symmetry, but at a price that one has to consider an infinite sequence of brackets instead of one.

\subsection{Lie algebroids and multiple Lie algebroids}
\label{subsubsec.lalg}

Recall that a \emph{Lie algebroid} over a manifold $M$ is a vector bundle $E\to M$ with a structure of a Lie (super)algebra on the space of sections and a fiberwise linear map $a\co E\to TM$ over $M$ called anchor, so that
the Leibniz rule is satisfied:
\begin{equation}\label{eq.liealg}
    [u,fv]=a(u)(f) \,v +(-1)^{\ft\ut} f[u,v]
\end{equation}
where $u,v$ are sections and $f$ a function on $M$. (We formulate everything in the super setting.) See~\cite{mackenzie:book2005} as a general source on Lie algebroids and Lie groupoids.

Consider the parity reversed vector bundle $\Pi E\to M$. Let $Q\in \Vect(\Pi E)$ of weight $+1$. If $x^a, \x^i$ are local coordinates on $\Pi E$ so that $\x^i$ of parity $\itt+1$ are linear coordinates on the fibers, the general form of $Q$ is then
\begin{equation}
    Q= \x^iQ_i^a(x)\der{}{x^a}+ \frac{1}{2}\xi^i\x^j Q_{ji}^(x)\der{}{\x^k}\,.
\end{equation}

\begin{theorem}[Vaintrob~\cite{vaintrob:algebroids}]
The structure of a Lie algebroid in $E$ is equivalent to the $Q$-structure on $\Pi E$ of weight $+1$.
\end{theorem}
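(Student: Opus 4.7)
The plan is to expand the equation $Q^2=0$ in local coordinates, match each weight-homogeneous component with a Lie algebroid axiom, and then give an intrinsic (basis-free) reformulation via the derived-bracket formalism of the previous subsection. In the local frame $(e_i)$ of $E$ with dual odd coordinates $\xi^i$ on $\Pi E$, I would put
\begin{equation*}
    a(e_i):=Q_i^a\,\p_a,\qquad [e_i,e_j]:=\pm Q_{ij}^k\, e_k,
\end{equation*}
with signs dictated by the parities, and extend the bracket to all sections by the Leibniz rule~\eqref{eq.liealg}. The coefficient $a$ is unambiguously the "symbol" of $Q$ on functions $f\in \fun(M)\subset \Fune(\Pi E)$, since $Q(f)=\xi^i Q_i^a\p_a f$.

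Next I would compute $Q^2=\tfrac12[Q,Q]$, which is a weight-$(+2)$ odd vector field, and separate its two types of components: a $\xi^i\xi^j\p_a$-part and a $\xi^i\xi^j\xi^k\p_l$-part. The first vanishes iff the anchor $a$ is a bracket homomorphism (a consequence, in the algebroid language, of Jacobi plus Leibniz); the second vanishes iff the structure functions $Q_{ij}^k$ satisfy the (super) Jacobi identity. Checking that the definitions of $a$ and $[\,,]$ transform correctly under a change of frame $(e_i)\to(e_{i'})$ follows from the fact that $Q$ is a globally defined vector field on $\Pi E$ and from the transformation law of $\xi^i$ as fiber coordinates on a vector bundle.

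For a cleaner, coordinate-free proof I would use the derived-bracket apparatus. Identify $\Gamma(E)$ with the abelian (super) Lie subalgebra of $\Vect(\Pi E)$ consisting of the vertical contractions $i_u:=u^i(x)\,\p/\p\xi^i$ for $u=u^ie_i\in\Gamma(E)$. The commutator $[Q,i_u]$ is a first-order operator on $\Fune(\Pi E)$ whose zero-order part encodes the anchor $a(u)\in\Vect(M)$, while the derived bracket
\begin{equation*}
    i_{[u,v]_E}:=\pm\bigl[[i_u,Q],i_v\bigr]
\end{equation*}
is again vertical (a contraction), and defines the bracket on $\Gamma(E)$. Then $Q^2=0$, by the Kosmann-Schwarzbach theorem quoted above, implies the (Loday) Jacobi identity for $[\,,]_E$; antisymmetry is immediate from $[i_u,i_v]=0$; and the Leibniz rule~\eqref{eq.liealg} is automatic since $[Q,i_u]=i_u\circ Q-(-1)^{\tilde u}Q\circ i_u$ acts on $f\in\fun(M)$ as the derivation $a(u)(f)$. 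The converse direction is a direct verification: given a Lie algebroid $(a,[\,,])$, set the local coefficients of $Q$ to the anchor and structure functions and check $Q^2=0$ by undoing the computation of step two.

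The main obstacle is bookkeeping of signs in the $\Z$-graded setting and the globalisation step—showing that the locally defined bracket and anchor are independent of trivialisation. Both are handled most transparently by the derived-bracket reformulation, which only uses the globally defined objects $Q$ and $i_u$ and reduces everything to the abstract identity $[[i_u,Q],i_v]$ being a vertical contraction, a fact that itself follows from a short computation using the explicit form of $Q$.
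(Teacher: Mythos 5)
Your proposal is correct and follows essentially the same route the paper indicates: the anchor and bracket are recovered exactly by the derived-bracket formulas~\eqref{eq.anchlalg} and~\eqref{eq.braclalg}, with the weight count (no vector fields of weight $<-1$ on $\Pi E$, and $\Vect_{-1}(\Pi E)\cong\fun(M,E)$ via $\iota$) guaranteeing that $[[Q,\iota_u],\iota_v]$ is again a contraction, and $Q^2=0$ yielding the Jacobi identity via the higher-derived-bracket mechanism. The only cosmetic difference is the sign convention in $\iota_u$ (the paper uses $\iota_u=(-1)^{\tilde u}u^i\,\lder{}{\x^i}$).
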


In other words, if $Q$ as above is odd (this is automatic if $E$ is purely even) and satisfies $Q^2=0$, it defines a Lie algebroid structure in $E$, and conversely. We can give explicit formulas:
\begin{equation}\label{eq.braclalg}
    \iota_{[u,v]}= (-1)^{\ut}[[Q,\iota_{u}],\iota_v]
\end{equation}
and
\begin{equation}\label{eq.anchlalg}
    a(u)(f)=  [Q,\iota_{u}](f)\,.
\end{equation}
Here $\iota_u$ is a vector field on $\Pi E$ of weight $-1$ defined by a section $u\in\fun(M,E)$ by $\iota_u=(-1)^{\ut}u^i(x)\lder{}{\x^i}$ if $u=u^i(x)e_i$. This is an odd isomorphism between $\Vect_{-1}(\Pi E)$ and
$\fun(M,E)$. (Note that there are no vector fields of weights less than $-1$ on $\Pi E$.)

If $E_1$ and $E_2$ are Lie algebroids over the same base $M$, it is not a problem to define a (fixed base) Lie algebroid morphism $E_1\to E_2$. This is just a fiberwise linear map over $M$ preserving brackets and anchors. In particular, $a\co E\to TM$ is itself a Lie algebroid morphism. However, there is no obvious way of defining a Lie algebroid morphism over different bases (because there is no mapping of sections). A highly non-trivial definition was found in~\cite{mackenzie_and_higgins:algebraic}.

\begin{theorem}[\cite{vaintrob:algebroids}]
A fiberwise linear map $E_1\to E_2$ over a map of bases $M_1\to M_2$ is a Lie algebroid morphism if and only if the induced fiberwise linear map $\Pi E_1\to \Pi E_2$ is a $Q$-morphism.
\end{theorem}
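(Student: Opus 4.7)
My plan is to verify the theorem by a direct computation in local coordinates, decomposing the $Q$-morphism condition by weight in the fiber variables and matching the resulting identities against the Higgins--Mackenzie conditions for a Lie algebroid morphism over different bases.

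First I would fix local coordinates $(x^a,\xi^i)$ on $\Pi E_1$ and $(y^\alpha,\eta^I)$ on $\Pi E_2$, where $\xi^i,\eta^I$ are the weight $+1$ fiber coordinates. A fiberwise linear map $\f\colon E_1\to E_2$ covering $f\colon M_1\to M_2$ lifts to $\F\colon \Pi E_1\to\Pi E_2$ with $\F^*(y^\alpha)=f^\alpha(x)$ and $\F^*(\eta^I)=\xi^i\f^I_i(x)$. Since $\F$ is weight-preserving and each $Q_k$ has weight $+1$, the $Q$-morphism condition $Q_1\circ\F^*=\F^*\circ Q_2$ splits cleanly by weight in $\xi$.

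Next I would apply this condition to the two natural families of generators on $\Pi E_2$. The equation on $y^\alpha$ is linear in $\xi$, and after cancelling the common $\xi^i$ factor reduces to $Q_{1,i}^a(x)\,\p_a f^\alpha(x)=\f^I_i(x)\,Q_{2,I}^\alpha(f(x))$; in view of the derived-bracket formula $a(u)(g)=[Q,\iota_u](g)$, this is precisely the anchor compatibility $Tf\circ a_1=a_2\circ\f$. The equation on $\eta^K$ is quadratic in $\xi$ and yields an identity among the structure functions $Q_{1,ji}^k$, $Q_{2,JI}^K$, $\f^K_i$, together with derivatives of $\f^K_i$ along the directions of the anchor $a_1$. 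Using the derived-bracket formula $\iota_{[u,v]}=(-1)^{\ut}[[Q,\iota_u],\iota_v]$, this identity is the coordinate expression of the bracket-compatibility clause of the Higgins--Mackenzie definition: first for the distinguished local sections dual to $\xi^i$, and then, by the Leibniz rule~\eqref{eq.liealg}, for arbitrary sections of $E_1$. The converse direction is immediate, since the coordinate functions $y^\alpha,\eta^I$ generate all functions on $\Pi E_2$ and both $Q_k$ are derivations, so checking the $Q$-morphism condition on these generators suffices.

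The main obstacle is the bookkeeping needed to translate the compact coordinate identity extracted from the weight-two equation into the Higgins--Mackenzie definition, whose standard phrasing in terms of pullback sections and decomposable tensors along $f$ is substantially less transparent than the supergeometric condition. This is essentially the same derived-bracket calculation that establishes Vaintrob's first theorem, now performed in the relative setting of a map of bases; no new conceptual ingredient is required beyond careful identification of each coordinate term with its abstract counterpart in the Higgins--Mackenzie formulation.
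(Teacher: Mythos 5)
The paper itself offers no proof of this statement: it is quoted from Vaintrob's work and used as a black box, so there is nothing internal to compare against. Judged on its own terms, your plan is the standard and correct argument. The weight decomposition is the right organizing principle: $\F^*$ preserves the fiber weight because the map is fiberwise linear, both $Q_k$ have weight $+1$, so the condition $Q_1\circ\F^*=\F^*\circ Q_2$ on the weight-$0$ generators $y^\alpha$ produces exactly the anchor compatibility $Tf\circ a_1=a_2\circ\f$, and on the weight-$1$ generators $\eta^K$ produces the identity
\begin{equation*}
\xi^i\xi^j\,Q_{1,i}^a\,\p_a\f_j^K+\tfrac12\,\xi^i\xi^j\,Q_{1,ji}^k\,\f_k^K=\tfrac12\,\xi^i\xi^j\,\f_i^I\f_j^J\,Q_{2,JI}^K(f(x))\,,
\end{equation*}
whose antisymmetrization in $i,j$ is the coordinate form of the Higgins--Mackenzie bracket condition for the local frame, with the anchor-derivative terms appearing precisely as the correction terms in that definition. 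Your converse direction (checking on generators suffices because $\F^*$ is an algebra homomorphism and the $Q_k$ are derivations) is also right. The one step you gloss over, and should spell out when writing this up, is the passage from the distinguished frame to arbitrary sections: the Higgins--Mackenzie condition is phrased for arbitrary decompositions $\f\circ u=\sum g_i\,(u_i\circ f)$, and its independence of the chosen decomposition is not automatic but follows from the anchor condition already established at weight $1$; only after that is the Leibniz-rule extension from the frame legitimate. That is a routine but genuinely necessary verification, and it is the only place where your "bookkeeping" could hide an actual argument rather than mere notation.
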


This is the most efficient way of dealing with morphisms of Lie algebroids.

Let us mention that a Lie algebroid structure in $E$ is also equivalent to a Poisson bracket on $E^*$ and a Schouten (= odd Poisson, Gerstenhaber) bracket on $\Pi E^*$, both brackets having to be of weights $-1$. This is analogous to the situation for Lie (super)algebras. Later we shall show constructively how all structures on $\Pi E$, $E^*$ and $\Pi E^*$ correspond to each other. (This will be done for the homotopy case, see~\ref{subsubsec.manifestlinf}.)

There is a multiple analog of Lie algebroids: double Lie algebroids, triple Lie algebroids, etc. Double Lie algebroids were first introduced by K.~Mackenzie (see~\cite{mackenzie:crelle}) by using some nontrivial dualization process and then an equivalent simplifying formulation was found in~\cite{tv:qman-mack}.
It can be described as follows. Multiple Lie algebroids live on multiple vector bundles. The simplest way to define a $k$-fold vector bundle is to say that it is a $k$-fold graded manifold (e.g. bi-graded for double vector bundle) such that each of the weights of local coordinates is $0$ or $1$. This leads to a fiber bundle structure with multilinear transition functions (see~\cite{tv:qman-mack}). In particular, a double vector bundle is a commutative square of ordinary vector bundles (plus some extra conditions). Similarly fir the $k$-fold case. There are commuting parity reversions in each of the $k$ directions, and one can consider the total parity reversion. Then a $k$-fold Lie algebroid is specified by $k$ commuting homological vector fields $Q_1$, \ldots, $Q_k$ such that $\w_i(Q_j)=\delta_{ij}$ for the $k$ weights $\w_1, \ldots,\ w_k$. See~\cite{tv:qman-mack}. Double Lie algebroids in particular arise as Drinfeld doubles of Lie bialgebroids introduced in~\cite{mackenzie:bialg}. See~\cite{mackenzie:crelle}.

\subsection{$\Linf$-structure. Higher derived brackets.  ``Non-linear   Lie algebroids''} 

\subsubsection{$L_{\infty}$-algebras}
\label{subsubsec.linfalgebras}

$L_{\infty}$-algebras or ``strongly homotopy Lie algebras'' (SHLA) originated in physics and were mathematically first defined by Lada and Stasheff~\cite{Lada:1992wc}. They exist  in two parallel equivalent versions: ``symmetric'' and ``antisymmetric''. We shall define both. Below we work with  $\Z$-grading only. If necessary, a $\ZZ$-grading can also be taken into account (but it does not affect identities).

\begin{definition}[\emph{$\Linf$-algebra:  antisymmetric version}]
  A  vector space $L=L_0\oplus L_1$ with a collection of multilinear operations called brackets
  \begin{equation}
    [-,\ldots,-]\co \underbrace{L\times \cdots \times L}_{\text{$k$ times}} \to L \quad \text{(for $k=0,1,2,\ldots $)}
    \vspace{-0.5cm}
  \end{equation}
  such that
  \begin{enumerate}[i)]
    \item the parity of the $k$th bracket is $k\mod 2$;
    \item all brackets are antisymmetric (in $\Z$-graded sense);
    \item $\sum_{r+s=n}\sum_{\text{shuffles}}  (-1)^{\b}  [[x_{\s(1)},\ldots,x_{\s(r)}], \ldots , x_{\s(r+s)}]=0$,  for all~$n=0,1,2,3,...$
  \end{enumerate}
  (here $(-1)^{\b}=(-1)^{rs}\sign \s (-1)^{\a}$ and $(-1)^{\a}$  is the Koszul sign).
   \end{definition}

A parallel notion is as follows.

\begin{definition}[\emph{$\Linf$-algebra:  symmetric version}]
  A  vector space $V=V_0\oplus V_1$ with a collection of multilinear operations called brackets
  \begin{equation}
    \{-,\ldots,-\}\co \underbrace{V\times \ldots \times V}_{\text{$k$ times}} \to V \quad \text{(for $k=0,1,2,\ldots $)}
  \end{equation}
  such that
  \begin{enumerate}[i)]
    \item all brackets are odd;
    \item all brackets are symmetric (in $\Z$-graded sense);
    \item $\sum_{r+s=n} \sum_{\text{shuffles}} (-1)^{\a} \{\{v_{\s(1)},\ldots,v_{\s(r)}\}, \ldots , v_{\s(r+s)}\}=0$,  for all~$n=0,1,2,3,...$
  \end{enumerate}
  (here $(-1)^{\a}$ is the Koszul sign).
   \end{definition}
  (Note that here signs  come  from parities only!)

The two variants of an $\Linf$-algebra are related by a change of parity. Let $V=\Pi L$. Then the relation between brackets in $L$ and $V=\Pi L$  is given by the formula
\begin{equation}
 \{\Pi x_1,\ldots, \Pi x_n\}= (-1)^{\e}\Pi [x_1,\ldots,x_n],\,.
\end{equation}
where $\e=\sum \xt_k(n-k)$. Hence it is sufficient to consider just one variant, though in examples both can appear.

It is more convenient to analyze the symmetric version (with all odd brackets). Let $V$ be equipped a symmetric $\Linf$-algebra structure. Because of symmetry, all operations are determined by their values on coinciding even arguments: $\{\x,\ldots,\x\}$ for even $\x\in V$. (We use the letter $\x$ for an even vector in $V$ as a reminder of $V$ being $\Pi L$.) They can be assembled into a formal odd vector field $Q$ on $V$\,:
\begin{equation}\label{eq.assemblq}
    Q(\x)=\sum \frac{1}{n!} \underbrace{\{\x,\ldots,\x\}}_{\text{$n$ times}}\,.
\end{equation}
We can express back the bracket operations in $V$ and $L$ in terms of $Q$, as follows:
\begin{equation}\label{eq.bracklinfodd}
      \{u_1,\ldots,u_n\} =  [\ldots [Q, u_1],\ldots, u_n](0)
\end{equation}
and
\begin{equation}\label{eq.bracklinfeven}
    \iota([x_1,\ldots,x_n])=(-1)^{\e} [\ldots [Q,\iota(x_1)],\ldots,\iota(x_n)](0)\,.
\end{equation}
For elements of $L$, we   use the operation   $\iota$   similar to that   used above for Lie algebroids,
$\iota(x):=(-1)^{\xt}x^i\lder{}{\xi^i}\in \Vect(\Pi L)$ if  $x=x^ie_i\in L$. We denote by $\xi^i$   linear coordinates on $V$ and identify vectors from $V$ with vector fields with constant coefficients.

\begin{theorem} Formulas above define $\Linf$-algebra structures in $V$ and $L$ (in the respective version) if and only if $Q$ is homological, $Q^2=0$.
\end{theorem}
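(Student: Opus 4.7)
Since the parity shift $V=\Pi L$ turns the antisymmetric brackets on $L$ into the symmetric brackets on $V$ with matched signs, I would prove the statement in the symmetric version and then transfer to the antisymmetric one via $\Pi$. First I would check that \eqref{eq.assemblq} really defines a formal odd vector field: each term $\{\xi,\ldots,\xi\}$ with $n$ coinciding even arguments is odd by axiom (i) and takes values in $V$, so $Q(\xi)$ is a well-defined odd formal vector field on a neighbourhood of $0\in V$. Then I would verify the recovery formulas \eqref{eq.bracklinfodd} and \eqref{eq.bracklinfeven}: constant vector fields $u_i=u_i^j\,\partial/\partial\xi^j$ act as directional derivatives, and the iterated commutator $[\cdots[Q,u_1],\ldots,u_n]$ picks out the $n$-th Taylor coefficient of $Q$ at $0$, which is the $n$-ary symmetric bracket by graded polarization. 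Under $\iota$ this translates into \eqref{eq.bracklinfeven}.

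The heart of the argument is the computation of $Q^2=\tfrac12[Q,Q]$. I would apply the same recovery procedure to the (even) vector field $[Q,Q]$: iterating the graded Jacobi identity
$$[[Q,Q],v]=2[Q,[Q,v]]$$
(valid since $Q$ is odd), followed by Leibniz-style distribution of each remaining argument into either the outer or the inner $Q$, should yield
$$[\cdots[[Q,Q],v_1],\ldots,v_n](0)=2\!\!\!\sum_{r+s=n}\sum_{\text{shuffles}}(-1)^{\alpha}\{\{v_{\sigma(1)},\ldots,v_{\sigma(r)}\},v_{\sigma(r+1)},\ldots,v_{\sigma(r+s)}\}.$$
The shuffle sum arises because at each distribution step a given $v_i$ can be absorbed into either of the two $Q$'s; since the constant vector fields $v_i$ mutually commute in the graded sense, only the bipartition of indices between the two $Q$'s survives, up to Koszul signs.

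A formal vector field vanishes identically iff all of its iterated directional derivatives at $0$ vanish, so $[Q,Q]=0$ is equivalent to the vanishing of the right-hand side above for every $n$ and every tuple $(v_1,\ldots,v_n)$ — which is exactly axiom (iii) of a symmetric $\Linf$-algebra. The antisymmetric case then follows from the parity-shift correspondence between the two versions. The main obstacle, and the genuinely tedious part, will be the sign bookkeeping in the shuffle-sum display: tracking the Koszul signs coming both from the iteration of the Jacobi identity and from symmetrization over shuffles, and matching them against the $\sign\sigma$ together with the Koszul factors appearing in the higher Jacobi identities of each version. Conceptually, however, the entire content of the theorem is captured by the polarization identity for symmetric multilinear maps and by $Q^2=\tfrac12[Q,Q]$.
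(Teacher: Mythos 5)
Your route is the direct computation that the paper explicitly sets aside (``This can be shown directly, but we will give a more general framework''): the paper derives the theorem from the general higher derived brackets theorem of~\ref{subsubsec.higherder} --- a Lie superalgebra $L=K\oplus V$ with $V$ an abelian subalgebra, $K$ a subalgebra, $P$ the projection onto $V$ along $K$, and an odd $\D$ with $\D^2=0$ --- specialized to the universal example $L=\Vect(V)$, $V=$ constant vector fields, $K=$ fields vanishing at the origin, $P=$ evaluation at $0$. Your argument is in effect the proof of that general theorem unwound in this universal case. Your final displayed identity, combined with the observation that a formal vector field vanishes iff all its iterated directional derivatives at $0$ vanish, does deliver the ``if and only if'' (which the general theorem, as stated, does not: it gives only one direction; the equivalence is special to the universal example, precisely because evaluation at $0$ applied to all iterated brackets recovers the full Taylor expansion of $[Q,Q]$). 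What the paper's route buys is reusability of the same mechanism for $\Pinf$-, $\Sinf$- and BV-brackets later on; what yours buys is a self-contained verification and the converse direction made explicit.

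There is, however, one step where the justification you give is not sufficient. After distributing the $v_i$ over the two copies of $Q$ you are left with terms $[X_I,X_J](0)$, where $X_I=[\ldots[Q,v_{i_1}],\ldots,v_{i_r}]$ and $X_J=[\ldots[Q,v_{j_1}],\ldots,v_{j_s}]$ are \emph{non-constant} vector fields. The mutual commutativity of the constant fields $v_i$ explains why only bipartitions of the index set survive, but it does not by itself identify $[X_I,X_J](0)$ with a sum of nested derived brackets. For that you must additionally split $X_I=X_I(0)+X_I^{+}$, with $X_I(0)\in V$ the constant field equal to $\{v_{i_1},\ldots,v_{i_r}\}$ and $X_I^{+}$ vanishing at the origin, and use two facts: $[X_I(0),X_J(0)]=0$ ($V$ abelian) and $[X_I^{+},X_J^{+}](0)=0$ (the fields vanishing at the origin form a subalgebra --- the counterpart of ``$K$ is a subalgebra'' in the paper's general theorem). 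Only then do the two surviving cross terms become the nested brackets $\{\{v_{i_1},\ldots,v_{i_r}\},v_{j_1},\ldots,v_{j_s}\}$ and its mirror. With that step inserted, and the sign bookkeeping carried out as you anticipate, the argument closes.
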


A proof of the theorem follows from a more general construction producing $\Linf$-algebras that we will consider in~\ref{subsubsec.higherder}.

The homological vector field $Q$ has an expansion
\begin{multline}\label{eq.expansq}
    Q=Q^k(\x)\der{}{\x^k}=\\
    \left(Q^k_0 + \x^iQ_i^k +\frac{1}{2}\x^i\x^jQ_{ji}^k+
    \frac{1}{3!}\x^i\x^j\x^{l}Q_{lji}^k+\cdots \right)\der{}{\x^k}\,.
\end{multline}
Up to signs, the Taylor coefficients $Q^k_0$, $Q_i^k$, $Q_{ji}^k$, $Q_{lji}^k$, etc., are structure constants of the $0$-ary, unary, binary, ternary, etc., brackets in $V$ (or $L$). Interpretation of the ``higher Jacobi identities'' in the definition of $\Linf$-algebras is simplified if $Q(0)$ is assumed to be zero. (In general, $Q(0)$ is known as ``curvature'' and the $\Linf$-algebras that we defined are called ``curved''.) Then the first identity says that the unary bracket (which is a linear operator) is a differential; the second identity says that it is a derivation of the binary bracket; the third identity says that the ``usual'' Jacobi identity for the binary bracket is satisfied up to a chain homotopy, the operator of chain homotopy being the ternary bracket. And then there is an infinite sequence of further identities satisfied by the ternary bracket and the ``higher homotopies'' that arise. (This explains ``strongly homotopy'', not just ``homotopy'' in the name.)

\subsubsection{$L_{\infty}$-morphisms}

Here again (after morphisms of Lie algebroids) the superiority of the $Q$-manifold language becomes compelling. Suppose $L$ and $K$ are $\Linf$-algebras in the antisymmetric version, and  $V=\Pi L$ and $W=\Pi K$ are $\Linf$-algebras in the symmetric version. What should be the ``correct'' notion of a morphism? Denote it by a special arrow, $L \rightsquigarrow K$. We have to establish what $L \rightsquigarrow K$ should be.

If we start from a linear map $L\to K$ and require it be a chain map (commute with the differentials), what should be required from it with respect to the binary brackets? It would be too restrictive (and in hindsight, of little use) to require that the binary bracket in $L$ is precisely mapped to the binary bracket in $K$. In view of the homotopy nature of an $\Linf$-structure, it is natural to expect preservation of binary brackets only up to homotopy, which should be considered part of structure. Hence there is an algebraic homotopy operator $\L^2 L\to K$  (equivalently, $S^2V\to W$).

By analogy with the brackets, one expects to have an infinite sequence of such ``higher homotopies'' $\L^kL \to K$ or $S^k(\Pi L)\to \Pi K$ that should be subject to an infinite sequence of identities involving   the higher brackets in $L$ and $K$. Handling such a sequence directly would be very complicated.  It is convenient to turn to the symmetric description. A sequence of linear maps $S^kV\to W$ meant to be ``higher homotopies'' (one can note that they all have to be even) assemble   similarly with what we did for the brackets   into  one formal non-linear map $\f\co V\to W$. (One cannot do the same directly in terms of $L$ and $K$.)

The language of $Q$-manifolds provides now  a one-line solution.

\begin{definition}
An \emph{$\Linf$-morphism} $V \rightsquigarrow W$   is an infinite sequence of even linear maps $S^kV\to W$ which are the Taylor coefficients of a formal non-linear $Q$-morphism  $\f\co V\to W$, where $V$ and $W$ are regarded as formal $Q$-manifolds.

An \emph{$\Linf$-morphism} $L \rightsquigarrow K$ is an infinite sequence of   linear maps $\L^kL\to K$ (of alternating parities; for $k=1$, even) such that the corresponding sequence $S^k(\Pi L)\to \Pi K$ is an $\Linf$-morphism $\Pi L\to \Pi K$.
\end{definition}

Shortly: an $\Linf$-morphism  $V \rightsquigarrow W$ is just a $Q$-mor\-phism $V\to W$; an $\Linf$-morphism  $L \rightsquigarrow K$ is a $Q$-morphism $\Pi L\to \Pi K$. (We do not have to use a special arrow $\rightsquigarrow$ for $V$  and $W$, since it is an ordinary map.)

If $\x^i$ and $\h^{\mu}$ are linear coordinates on $V$ and $W$ respectively, we can write $\f^*(\h^{\mu})=\f^{\mu}(\x)$ and expand as
\begin{equation}
    \f^{\mu}(\x)= \f^{\mu}_0+ \x^i \f^{\mu}_i + \frac{1}{2} \x^i\x^j\f^{\mu}_{ji} + \cdots
\end{equation}
For simplicity assume that $\f^{\mu}_0=0$, i.e. the origin is preserved, and that both algebras have no curvature. Then by expanding the equation of a $Q$-morphism
\begin{equation}
    Q^i_1(\x)\der{\f^{\mu}}{\x^i}= Q^{\mu}_2(\f(\x))
\end{equation}
we obtain, in the first order:
\begin{equation}
    Q_i^j\f_j^{\mu}=\f_i^{\la}Q_{\la}^{\mu}\,,
\end{equation}
and in the second order:
\begin{equation}
    \pm Q_{ij}^k\f_k^{\mu}\pm \f_{i}^{\la}\f_j^{\nu}Q_{\nu\la}^{\mu}= \pm Q_i^k\f_{jk}^{\mu}\pm Q_j^k\f_{ik}^{\mu}\pm \f_{ij}^{\la}Q_{\la}^{\mu}\,.
\end{equation}
up to signs. The first order condition means that the linear term $\f_1\co V\to W$ is a chain map. The second order condition means that $\f_1$ preserves the binary brackets up to a chain homotopy given by $\f_2$. (This is what we have started from  heuristically  above.)
One can obtain in this way the full set of identities that should be satisfied by the Taylor components $\f_k\co S^kV\to W$  (with the proper signs).

We shall give the general formula for even arguments only, hence without signs, but so that the correct signs can be obtained by linearity. (For any multilinear expression, by using auxiliary odd factors, one can make all arguments even and then take  the auxiliary constants out using the linearity, and this would give the desired formula for arguments with arbitrary parities.)

Let $\f\co V\to W$ be a formal map of vector spaces endowed with structures of $\Linf$-algebras. Define its Taylor components (symmetric multilinear maps) by the formulas
\begin{equation}
    \f_{n}(u_1,\ldots,u_n):= \p_{u_1}\ldots \p_{u_n} \f (0)\,,
\end{equation}
where $\p_u$ means the usual derivative along a vector. (We substantially use here the linear structure of $W$, otherwise it would make no invariant sense.) We shall also need the notion of an $\Linf$-structure ``shifted'' by a constant vector. If $\x_0$ is such a vector, we consider a vector field $Q^{\x_0}(\x)=Q(\x+\x_0)$. (Such shifts or ``twistings'' under more abstract guise were considered in~\cite{chuang-lazarev:twistings2011}.) Clearly, if $Q$ is a homological vector field, its shift $Q^{\x_0}$ is again homological vector field. We denote the brackets generated by $Q^{\x_0}$ as $\{-,\ldots,-\}^{\x_0}$. They effectively correspond to expanding $Q$ not at $0$, but at $\x_0$.

\begin{proposition}
The condition that $\f\co V\to W$ is an $\Linf$-morphism is equivalent to  the following infinite sequence of identities, for $n=0,1,2,3, \ldots$ with arguments $u_1, \ldots, u_n\in V$ which   are assumed to be even:
\begin{multline}\label{eq.linfmorphismalg}
    \sum_{k=0}^n \sum_{\substack{\emph{$(n-k,k)$-}\\\emph{shuffles}}}   \f_{k+1}\bigl(\{u_{\s(1)},\ldots,u_{\s(n-k)}\}, u_{\s(n-k+1)},\ldots,u_{\s(n)} \bigr)\\
    =\sum_{r=1}^n\sum_{\substack{i_1+\ldots+i_r=n \\ i_1>0,\ldots, i_r>0}}
     \sum_{\substack{\text{\emph{combinations  $\tau$}}\\\text{\emph{of $i_1,\ldots,i_r$}}\\\text{\emph{out of $n$}}}}
    \bigl\{\f_{i_1}(u_{\tau(1)}, \ldots,u_{\tau(i_1)}),
    \ldots,\\
    \f_{i_r}(u_{\tau(i_{r-1}+1)}, \ldots,u_{\tau(i_r)})\bigr\}^{\f_0}\,.
\end{multline}
\end{proposition}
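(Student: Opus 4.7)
By the very definition of an $\Linf$-morphism, the task is equivalent to showing that $\f\co V\to W$ being a $Q$-morphism between the formal $Q$-manifolds $(V,Q_1)$ and $(W,Q_2)$ is equivalent to the infinite family of identities~\eqref{eq.linfmorphismalg}. The plan is to write out the $Q$-morphism equation, which by~\eqref{eq.qmorphcoord} is the formal power series identity $T_\x\f(Q_1(\x)) = Q_2(\f(\x))$ in $\x\in V$, then to expand each side using the bracket assembly formula~\eqref{eq.assemblq} and the Taylor expansion of $\f$, and finally to equate $\p_{u_1}\cdots\p_{u_n}(\cdot)|_{\x=0}$ on both sides for every $n$ and every even $u_1,\ldots,u_n$. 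Since a formal map is determined by its Taylor coefficients at the origin, the formal equation and the infinite family of identities so obtained will be equivalent, and it then suffices to verify that these identities are precisely~\eqref{eq.linfmorphismalg}.

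Differentiating $\f(\x)=\sum_{n\geq 0}\tfrac{1}{n!}\f_n(\x^{\otimes n})$ once yields $T_\x\f(v)=\sum_{k\geq 0}\tfrac{1}{k!}\f_{k+1}(\x^{\otimes k},v)$; substituting $v=Q_1(\x)=\sum_{j\geq 0}\tfrac{1}{j!}\{\x^{\otimes j}\}_V$ gives
\begin{equation*}
    T_\x\f(Q_1(\x)) \;=\; \sum_{k,j\geq 0}\frac{1}{k!\,j!}\,\f_{k+1}\bigl(\x^{\otimes k},\{\x^{\otimes j}\}_V\bigr)\,.
\end{equation*}
Applying $\p_{u_1}\cdots\p_{u_n}|_{\x=0}$ retains only the contributions with $k+j=n$. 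Each of the $n$ derivatives hits either one of the $k$ outer $\x$-slots of $\f_{k+1}$ or enters the $V$-bracket of $n-k$ $\x$'s, and the symmetries of $\f_{k+1}$ and of $\{-,\ldots,-\}_V$ convert the sum over all such splittings of $\{1,\ldots,n\}$ into the sum over $(n-k,k)$-shuffles, while the Leibniz factors $k!\,(n-k)!$ coming from the derivatives exactly cancel the $\tfrac{1}{k!\,(n-k)!}$. This reproduces the left-hand side of~\eqref{eq.linfmorphismalg}.

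For the right-hand side I would decompose $\f(\x)=\f_0+\tilde\f(\x)$ with $\tilde\f(0)=0$ and Taylor-expand $Q_2$ about the base-point $\f_0$; this is precisely the operation that defines the shifted brackets $\{-,\ldots,-\}^{\f_0}$, and produces $Q_2(\f(\x))=\sum_{r\geq 0}\tfrac{1}{r!}\{\tilde\f(\x)^{\otimes r}\}^{\f_0}$. Inserting $\tilde\f(\x)=\sum_{i\geq 1}\tfrac{1}{i!}\f_i(\x^{\otimes i})$ and applying $\p_{u_1}\cdots\p_{u_n}|_{\x=0}$ keeps only the terms with $i_1+\cdots+i_r=n$ and $i_j\geq 1$; distributing the $n$ derivatives into blocks of sizes $i_1,\ldots,i_r$ among the $r$ slots of the shifted bracket, then using the symmetry of each $\f_{i_j}$ inside its slot (which cancels $\tfrac{1}{i_j!}$) together with the symmetry of the shifted bracket itself (which, together with $\tfrac{1}{r!}$, rearranges the sum over ordered block-splittings into the stated sum over combinations $\tau$), yields the right-hand side of~\eqref{eq.linfmorphismalg}.

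The hard part will be the combinatorial and sign bookkeeping, both of which are drastically simplified by the restriction to even arguments: all Koszul signs in the Leibniz rule and in the multilinear symmetries drop out, reducing the calculation to ordinary partition and shuffle counts. As remarked in the text, the signs for arguments of mixed parity are then recovered from the even case by the standard device of tensoring with auxiliary odd constants and using multilinearity, so no separate argument is needed for them. The converse implication comes for free, because the computations above identify the two sides of~\eqref{eq.linfmorphismalg} with the Taylor coefficients at $\x=0$ of the two sides of the $Q$-morphism equation $T\f\circ Q_1=Q_2\circ\f$, so validity of~\eqref{eq.linfmorphismalg} for every $n$ is equivalent to the two formal power series being equal, i.e.\ to $\f$ being a $Q$-map.
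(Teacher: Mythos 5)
Your proposal is correct and follows essentially the same route as the paper: the text derives the identities by expanding the $Q$-morphism equation $Q_1^i(\x)\,\lder{\f^{\mu}}{\x^i}=Q_2^{\mu}(\f(\x))$ order by order at $\x=0$ (displaying the $n=1,2$ cases and noting the general pattern), which is exactly your procedure of applying $\p_{u_1}\cdots\p_{u_n}|_{\x=0}$ to both sides, with the shifted brackets $\{-,\ldots,-\}^{\f_0}$ absorbing the base-point $\f_0$ and the restriction to even arguments disposing of the signs. Your combinatorial bookkeeping (cancellation of $k!\,(n-k)!$ against the shuffle sum on the left, and the Fa\`a di Bruno--type count on the right) is sound, and the converse is indeed automatic since the identities are precisely the Taylor coefficients of the formal $Q$-morphism equation.
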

Here combinations of $i_1,\ldots,i_r$ out of $n$ mean symmetric combinations in each group (order unimportant, e.g. increasing)

For example, we can write down the identities for $n=0,1,2$. \\
For $n=0$:
\begin{equation}
    \f_1(\Omega)=\Omega^{\f_0}\,.
\end{equation}
For $n=1$:
\begin{equation}
    \f_1(\{u\})+\f_2(\Omega,u)=\{\f_1(u)\}^{\f_0}\,.
\end{equation}
For $n=2$
\begin{multline}
    \f_1(\{u_1,u_2\})+  \f_2(\{u_1\},u_2)+ \f_2(\{u_2\},u_1)+  \f_3(\Omega,u_1,u_2)=\\
 =   \{f_2(u_1,u_2)\}^{\f_0}+ \{\f_1(u_1),\f_1(u_2)\}^{\f_0}\,.
\end{multline}
By $\Omega$ we denote  \emph{curvature}, i.e. $\{\varnothing\}$, in any $\Linf$-algebra  Compare with the identities obtain above under the simplifying assumptions that the origin is  fixed and there is no curvature.

\subsubsection{Higher derived brackets} \label{subsubsec.higherder}
We want to explain why the condition $Q^2=0$ for a formal homological vector field on a vector space $V$ encodes the higher Jacobi identities of an $\Linf$-algebra. This can be shown directly, but we will give a more general framework. An abstract setup is as follows. Let $L$ be a Lie superalgebra with a direct sum decomposition into two subalgebras: $L=K\oplus V$. Assume that $V$ is abelian (all brackets are zero). Let $\D$ be an odd element of $L$. Define a sequence of new odd brackets on $V$ by the formula:
\begin{equation}\label{eq.highder}
    \{u_1,\ldots,u_k\}:= P[\ldots [\D,u_1],\ldots,u_k]\,,
\end{equation}
where $P$ is the projection on $V$ parallel to $K$. They are called the \emph{higher derived brackets} generated by $\D$. One can see that they are symmetric (by the Jacobi identity in $L$ and the commutativity of $V$).

\begin{theorem}[\cite{tv:higherder}]
If $\D^2=0$, then the higher derived brackets generated by $\D$ define on $V$ a structure of an $\Linf$-algebra.
\end{theorem}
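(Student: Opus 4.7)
The strategy is to translate the theorem into the homological condition for the formal odd vector field $Q$ on $V$ whose Taylor coefficients at the origin are the higher derived brackets; by the theorem preceding this one (characterization of symmetric $\Linf$-algebras as formal $Q$-structures on the underlying vector space), it then suffices to show $Q^2 = 0$. Before assembling $Q$ one must check that the derived brackets are graded-symmetric, but for any adjacent pair of arguments this reduces to $[[\D, u], v] = (-1)^{\tilde u \tilde v}[[\D, v], u]$, which is immediate from the graded Jacobi identity together with $[V, V] = 0$ (the would-be term $[\D, [u, v]]$ vanishes because $V$ is abelian). These symmetric odd brackets then assemble, as in~\eqref{eq.assemblq}, into the vector field $Q(\xi) = \sum_{n \geq 0} \tfrac{1}{n!} \{\xi, \ldots, \xi\}$ for even formal $\xi \in V$.

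Next I would rewrite $Q$ in closed form. Iterating $[X, \xi] = -(-1)^{\tilde X \tilde \xi}[\xi, X]$ gives $[\ldots[\D, \xi], \ldots, \xi]_n = (-\ad\xi)^n \D$ for even $\xi$, hence
\begin{equation*}
    Q(\xi) \;=\; P(e^{-\ad\xi}\, \D), \qquad \D_\xi := e^{-\ad\xi}\D.
\end{equation*}
The conceptual input is the ``twisted master equation'' $[\D_\xi, \D_\xi] = 0$, which holds because $e^{-\ad\xi}$ is an automorphism of $L$ and $[\D, \D] = 2\D^2 = 0$. Decomposing $\D_\xi = Q(\xi) + R(\xi)$ with $R(\xi) := (1-P)\D_\xi \in K$, and using $[V, V] = 0$ to kill $[Q(\xi), Q(\xi)]$, we obtain the key relation $(\star)$:
\begin{equation*}
    2\,[Q(\xi), R(\xi)] + [R(\xi), R(\xi)] \;=\; 0.
\end{equation*}

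The crux is then computing $Q^2 \cdot \xi = DQ(\xi)\bigl(Q(\xi)\bigr)$. Applying Duhamel's formula to differentiate $\xi \mapsto e^{-\ad\xi}\D$, together with the automorphism identity $e^{-s\ad\xi} \circ \ad\eta = \ad(e^{-s\ad\xi}\eta) \circ e^{-s\ad\xi}$, gives
\begin{equation*}
    DQ(\xi)(\eta) \;=\; -P \int_0^1 \bigl[\,e^{-s\ad\xi}\eta,\; \D_\xi\,\bigr]\, ds.
\end{equation*}
The $s$-integral collapses when $\eta = Q(\xi) \in V$, because $[\xi, Q(\xi)] \in [V, V] = 0$ forces $e^{-s\ad\xi}Q(\xi) = Q(\xi)$. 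Therefore $DQ(\xi)(Q(\xi)) = -P[Q(\xi), \D_\xi] = -P[Q(\xi), R(\xi)]$, and by $(\star)$ this equals $\tfrac{1}{2} P[R(\xi), R(\xi)]$, which vanishes because $K$ is a subalgebra of $L$: $[R(\xi), R(\xi)] \in K = \Ker P$. Hence $Q^2 \cdot \xi = 0$, so $Q^2 = 0$, which is what we needed.

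The main obstacle I anticipate is the Duhamel step --- the whole argument hinges on the integral representation of $DQ$ collapsing to a single commutator. What makes this collapse work is the double role of the abelianness of $V$: once to eliminate $[Q(\xi), Q(\xi)]$ in~$(\star)$, and once to freeze $e^{-s\ad\xi}$ on the argument $Q(\xi)$ so that the $s$-dependence disappears. A more direct but far less transparent alternative would be to expand $P[\ldots[\D, \{u_{\sigma(1)},\ldots,u_{\sigma(r)}\}],\ldots]$ using $P = 1 - (1-P)$ and resum the resulting shuffle sums against the Jacobi identity for $L$; the generating-function approach above organizes exactly this combinatorics into two clean algebraic facts ($[\D_\xi, \D_\xi]=0$ and $[V,V]=0$).
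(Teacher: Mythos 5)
Your argument is correct, and it is essentially the proof of this theorem given in the reference~\cite{tv:higherder} to which the survey defers (the statement is quoted here without proof): the generating-function device $\xi\mapsto P(e^{-\ad\xi}\D)$, the conjugation-invariance of the master equation $[\D_\xi,\D_\xi]=0$, and the division of labour between ``$V$ is abelian'' (killing $[Q(\xi),Q(\xi)]$ and freezing $e^{-s\ad\xi}$ on $Q(\xi)$) and ``$K$ is a subalgebra'' (killing $P[R(\xi),R(\xi)]$) are exactly the ingredients of the original argument, and your symmetry check and Duhamel computation are sound. One small point of logical hygiene: the ``theorem preceding this one'' that you invoke to reduce everything to $Q^2=0$ is presented in the paper as \emph{following from} the present theorem via the universal example $L=\Vect(V)$, so to avoid circularity you should remark that the equivalence between $Q^2=0$ and the higher Jacobi identities for the Taylor coefficients of $Q$ is an elementary direct computation (matching the coefficient of $\xi^n$ in $Q^a\p_aQ^b=0$ against axiom (iii) of the symmetric $\Linf$ definition), independent of the derived-bracket construction.
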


(There are also   generalizations to arbitrary derivations and a homotopical algebra interpretation, see~\cite{tv:higherderarb}.)

\begin{example}[universal]
Take $L:=\Vect (V)$ for a vector space $V$ regarded as a supermanifold. Then $L=K\oplus V$ where elements of $V$ are treated as constant vector fields and $K$ is the space of vector fields vanishing at the origin. Clearly, these are subalgebras and $V$ is abelian. Projection $P$ is evaluation at zero. Then an arbitrary homological vector field $Q\in \Vect(V)$ defines on $V$ a structure of an $\Linf$-algebra and we arrive at the formulas~\eqref{eq.bracklinfodd}.
\end{example}

This example is universal, i.e. all $\Linf$-algebras arise this way and are specified by some $Q$. However, the advantage of the general construction is that $\Linf$-algebras can also arise from different (not necessarily universal) data $(L=K\oplus V, \D)$. We will meet many examples later.

\subsubsection{$L_{\infty}$-algebroids}
\label{subsubsec.linf-algd}
This notion combines properties of $L_{\infty}$-algebras and Lie algebroids. Let $E\to M$ be a (super) vector bundle.

\begin{de}
An \emph{$L_{\infty}$-algebroid} structure in   $E\to M$  consists of a sequence of brackets  on the space of sections $\fun(M, E)$ defining in it an  (antisymmetric) $L_{\infty}$-algebra structure and a sequence of fiberwise multilinear maps $E\times_M\ldots \times_M E\to TM$ called  anchors,  so that   the Leibniz identities hold:
\begin{multline}
 [u_1,\ldots,u_{n-1},fu_n]=\\
 a(u_1,\ldots,u_{n-1})(f)\,u_n +
(-1)^{\a}f\,[u_1,\ldots,u_n]\,,
\end{multline}
where $(-1)^{\a}= (-1)^{(\ut_1+\ldots +\ut_{n-1}+n)\ft}$.
\end{de}

Consider the parity reversed vector bundle $\Pi E\to M$. We can treat the total space $\Pi E$ as a formal neighborhood of the zero section.

\begin{enumerate}[i)]
  \item  An $L_{\infty}$-algebroid structure on  $E\to M$ is equivalent to a  formal  homological vector field on the supermanifold $\Pi E$.
  \item An  \emph{$L_{\infty}$-morphism} of $L_{\infty}$-algebroids $\Phi\co E_1 \infto E_2$ is   specified by a formal   (in general, nonlinear) $Q$-mor\-phism  $\Phi\co \Pi E_1\to \Pi E_2$.
\end{enumerate}

(We refer to $\Phi\co \Pi E_1\to \Pi E_2$ also as $L_{\infty}$-morphism.)

\begin{example} The collection of all anchors assembles into an $L_{\infty}$-morphism $\Pi E \to \Pi TM$\,.
\end{example}


\subsubsection{``Non-linear Lie algebroids''}
\label{subsubsec.nonlinlalg}
Consider a non-negatively graded manifold $E$. As we know, it is a fiber bundle $E\to M$, where $M=E_0$ (see~\ref{subsubsec.nonegasbundle}) with polynomial transition functions preserving weights. Suppose $N$ is the top weight of local coordinates. (If $N=1$, we come back to vector bundles.) It is possible to develop in such a setup an analog of the Lie algebroid theory~\cite{tv:qman}.

\begin{definition} A structure of a \emph{non-linear Lie algebroid} on a graded manifold $E$ is defined by a (formal) homological vector field $Q\in \Vect(E)$ of weight $+1$.
\end{definition}

What is an algebraic structure associated with such an object? Note that unlike vector bundles,  sections here are not additive, so   not suitable for algebraic operations. As we have found, the ``correct'' vector space is the space of all vector fields of negative weights $\Vectn(E)$. It is a nilpotent (but in general not abelian) Lie subalgebra in $\Vect(E)$. Higher derived brackets can be defined, but are not (anti)symmetric. Because of grading, everything reduces to a differential and a binary derived bracket, on top of the original commutator of vector fields. In~\cite{tv:qman} we have presented a list of identities satisfied by such a structure. Note that one can non-canonically identify $E$ with a graded vector bundle (the normal bundle to the zero section). Then the field $Q$ induces an $\Linf$-algebroid structure in this normal bundle (with an extra grading). It is non-canonical and is defined up to an $\Linf$-isomorphism. The algebraic structure in $\Vectn(E)$ is, on the other hand, canonical. In a sense, both structures contain the same information.

Note that for an non-linear Lie algebroid $E$ there is an anchor $a\co E\to \Pi TM$ defined as the composition $Tp\circ Q$, where $Q$ is regarded as a map $E\to \Pi TE$ and $Tp\co \Pi TE \to \Pi TM$ is tangent to the projection $E\to M$. If $a$ is a fibration, we call  $E$  a \emph{transitive} non-linear Lie algebroid. (This generalizes transitivity for ordinary Lie algebroids~\cite{mackenzie:book2005}.) Note that the anchor is always a $Q$-map, so intertwines $Q$ on $E$ with ${\rm d}$ on $\Pi TM$. Hence for a transitive non-linear Lie algebroid  we can introduce local coordinates as $x^a, {\rm d}x^a, y^i$, where $y^i$ are fiber coordinates over the base $\Pi TM$, and the  homological vector field $Q$ takes the form
\begin{equation}\label{eq.qtransnonlin}
       Q={\rm d}x^a\der{}{x^a}+Q^i(x,dx,y)\der{}{y^i}\,,
\end{equation}
the first term being de Rham differential on $\Pi TM$. This can be compared with the ``$Q$-bundles'' considered by Kotov and Strobl~\cite{strobl:charclasses-2015}, \cite{strobl:talks}. They assumed a bundle structure over $\Pi TM$ with the extra restriction that $Q$ in a local trivialization splits into $d$ and a homological vector field on the standard fiber. Compared with~\eqref{eq.qtransnonlin} this would mean no dependence on $x,dx$ in the second term.  As we showed in~\cite{tv:napl}, the possibility of such a gauge follows from the ``non-abelian Poincar\'{e} lemma''. This covers some part of transitive Lie algebroid theory~\cite{mackenzie:book2005}. An interesting question would be to consider integration of such non-linear Lie algebroids in an analogy with Mackenzie's theory (for transitive Lie algebroids).

\section{Microformal geometry. Classical thick morphisms}
\label{sec.microclass}

In this and the next section, we give a concise introduction to microformal geometry. The key references are:~\cite{tv:nonlinearpullback}, \cite{tv:oscil}, \cite{tv:microformal}   for main ideas and constructions,  also \cite{tv:qumicro}; and \cite{tv:tangmicro} and \cite{tv:highkosz}  for further development and applications.

\subsection{Main constructions}
\subsubsection{Definition of a microformal (thick) morphism}

Let $M_1$, $M_2$ be   supermanifolds  with local coordinates   $x^a$,  $y^i$.
Let $p_a$ and $q_i$ be   the corresponding  conjugate momenta   (fiber coordinates in $T^*M_1$ and $T^*M_2$)
and let  $\omega_1={\rm d}p_a {\rm d}x^a$ and $\omega_2={\rm d}q_i {\rm d}y^i$ be the symplectic forms on   $T^*M_1$ and $T^*M_2$.

\begin{definition}[\cite{tv:nonlinearpullback}, \cite{tv:microformal}]
A \emph{microformal} (aka \emph{thick}) \emph{morphism} $\F\co M_1\tto M_2$ is a formal Lagrangian submanifold $\F\subset T^*M_2\times T^*M_1$ w.r.t.    $\omega_2-\omega_1$ specified locally by a  {generating function} of the form $S(x,q)$\,:
\begin{equation}
    q_i{\rm d}y^i - p_a{\rm d}x^a = {\rm d}(y^iq_i-S)\quad \text{on}\quad \Phi\,,
   \end{equation}
where $S(x,q)$, regarded as   part of the structure, is   a formal power series in the momentum variables on the target manifold $M_2$\,:
\begin{equation}
\begin{aligned}
    S(x,q)&=S_0(x)+ S^i(x)q_i + \frac{1}{2}\,S^{ij}(x)q_jq_i\,+\\
    &\kern.5cm+\, \frac{1}{3!}\,S^{ijk}(x)q_kq_jq_i +\cdots
    \end{aligned}
\end{equation}
We refer to $S$ as the \emph{generating function} of a thick morphism $\F$.
\end{definition}

\begin{remark}
There is   close similarity between our notion of a microformal (thick) morphism between two manifolds and the notion of a symplectic micromorphism between symplectic micromanifolds of~Cattaneo--Dherin--Weinstein~\cite{cattaneo-dherin-weinstein:one}. A ``symplectic micromanifold'' is defined as the germ of a symplectic manifold at a Lagrangian submanifold and a ``symplectic micromorphism'' between such germs is defined as the germ of  a canonical relation between symplectic manifolds representing the germs. Since by the symplectic tubular neighborhood theorem  every symplectic manifold near a Lagrangian submanifold looks like its cotangent bundle, symplectic micromanifolds can be represented by cotangent bundles and every symplectic micromorphism   defines a thick morphism   between the   Lagrangian manifolds  by ``passing from germs to (infinite) jets''. The big difference   is in ``the morphisms of what'' are the corresponding notions. For symplectic micromorphisms, the objects  are    (the germs of) the cotangent bundles. For thick or microformal morphisms, the objects are the manifolds themselves. Hence, we look for an action of such morphisms on functions by an analog of pullbacks by smooth maps. From the viewpoint of symplectic geometry, this would be an action on functions on Lagrangian submanifolds.
\end{remark}

Now we introduce these pullbacks.

\subsubsection{Pullback by a microformal morphism}

Let $\Phi\co M_1\tto M_2$ be a thick morphism with  a generating function $S$.

\begin{definition}
The \emph{pullback} $\Phi^*$ is  a formal mapping $\Phi^*\co \funn(M_2) \to \funn(M_1)$ of functional supermanifolds defined by the formula (see~\cite{tv:nonlinearpullback})
\begin{equation}\label{eq.pull}
    \boxed{ \Phi^*[g] (x)= g(y) + S(x,q) - y^iq_i\,, \vphantom{\der{S}{q_i}}}
\end{equation}
for $g\in \funn(M_2)$, where $q_i$ and $y^i$ are determined from the equations
\begin{equation} \label{eq.pullequations}
   { q_i=\der{g}{y^i}\,(y)\,, \quad y^i=(-1)^{\itt}\,\der{S}{q_i}(x,q)  }
\end{equation}
(giving $y^i=(-1)^{\itt}\der{S}{q_i}(x,\der{g}{y}(y))$ solvable by iterations).
\end{definition}

\begin{remark}
For ordinary manifolds, we do not have to think about parity of functions. (Though we can consider odd functions on purely even manifolds if needed, but they will be families incorporating odd parameters 
rather than individual functions.) For supermanifolds, we have to distinguish between even and odd functions (or `bosonic' and `fermionic' fields in physical parlance) because they satisfy different commutation rules. So above $\funn(M)$ stands for the supermanifold of even (bosonic) functions. Unlike the familiar case, when pullbacks are linear and can be applied to functions regardless of their parity, the pullbacks defined above work only for even functions. (For odd functions, see~\ref{subsubsec.further}.) We use $\funn(M)$ with boldface for a \emph{supermanifold} of even functions (rather than a set) and distinguish it from the $\Z$-graded vector space $\fun(M)=\fun(M)_0\oplus \fun(M)_1$\,.
\end{remark}

Heuristically, if $f=\F^*[g]$, then
\begin{equation}\label{eq.pullascompos}
     \L_f=\L_g\circ \F
\end{equation}
(composition of relations), where $\L_f=\graph(df)$\,. Note that equation~\eqref{eq.pull} contains more information than~\eqref{eq.pullascompos} because~\eqref{eq.pull} is an equality for functions themselves, not   the  derivatives. More important is that~\eqref{eq.pull} and \eqref{eq.pullequations} give  a constructive procedure for calculating pullbacks.

\subsubsection{Description of  pullbacks}
\begin{example} Let $S(x,q)=S^0(x)+ \f^i(x)q_i$. Then: $\F^*[g]= S^0 + \f^*g$ (an ordinary pullback combined with a  shift by a fixed function).
\end{example}

\begin{remark} Ordinary   maps $M_1\to M_2$ can be identified with thick morphism that have generating functions of the form  $S=\f^i(x)q_i$\,, i.e. linear in momenta.
\end{remark}

Write a general generating function as
\begin{equation}
    S(x,q)=S^0(x)+ \f^i(x)q_i+\dots\,
\end{equation}
(note the notation for the linear term).
Then  the equation
\begin{equation}
    y^i=(-1)^{\itt}\der{S}{q_i}(x,\der{g}{y}(y))
\end{equation}
defines a map $\f_g\co M_1\to M_2$ (depending on a function $g$!) as a formal perturbation of the map $\f=\f_0\co M_1\to M_2$ given by the linear term in $S(x,q)$\,:
\begin{equation}
    y^i=\f_g^i(x)=\f^i(x)+ S^{ij}(x)\p_jg(\f(x))+\dots\,,
\end{equation}
and therefore the formula for the pullback becomes
\begin{equation}
    \F^*[g] (x)= \Bigl(g(y) + S(x,q) - y^iq_i\Bigr)\Bigl|_{y=\f_g(x), q=\der{g}{y}(\f_g(x))}\bigr.\,.
\end{equation}
Note that the function $g$ enters in two ways: explicitly as a summand in $g(y) + S(x,q) - y^iq_i$ and implicitly through   $y^i$ and $q_i$. This is the source of the non-linearity (except for the case of a function $S$ linear in the momenta, where the equations for $y$ and $q$ decouple and the dependence on $g$ disappears).

Therefore, for a general thick morphism $\F\co M_1\tto M_2$ the pullback   $\F^*\co \funn(M_2) \to \funn(M_1)$ is    a  formal non-linear differential operator\,:
\begin{equation}
\begin{aligned}
    \F^*[g](x) &= S^0(x) + g(\f(x))\, +\\
    &\kern.5cm+\, \frac{1}{2}\,S^{ij}(x)\p_ig(\f(x))\p_jg(\f(x)) +\dots
    \end{aligned}
\end{equation}
(Higher order terms can also be calculated~\cite{tv:nonlinearpullback}, but their form is not very elucidating.)

As we shall see, these non-linear differential operators possess special properties, so they are far from being arbitrary.

\subsubsection{Coordinate invariance}
\label{subsubsec.coordinvar}
Generating functions of thick morphisms   are not scalar functions, in the sense that they are geometric objects whose representations depend on coordinate systems. They possess the following non-trivial
transformation law.

\begin{transflaw}[for generating functions]
A  generating function $S(x,q)$  as a geometric object on $M_1\times M_2$ transforms by
\begin{equation}\label{eq.transflaw}
    S'(x',q')=S(x,q) - y^iq_i +y^{i'}q_{i'}\,.
    \vspace{-0.2cm}
\end{equation}
Here $S(x,q)$ is the expression for $S$ in `old' coordinates and $S'(x',q')$ is the expression for $S$ in `new' coordinates. At the r.h.s., the variables $x^a$ and $y^{i'}$   are given  by    substitutions: $x^a=x^a(x')$ and  $y^{i'}=y^{i'}(y)$, while
 $q_i$   and    $y^i$   are determined from
\begin{equation}
   q_i=\der{y^{i'}}{y^i}(y)\,q_{i'}\,, \quad y^i=(-1)^{\itt}\der{S}{q_i}(x,q)\,.
\end{equation}
\end{transflaw}

One can see that the cocycle condition is satisfied by this formula (because it has a ``coboundary'' form). This transformation law can either be postulated as part of the definition of thick morphisms or deduced from the requirement that the corresponding formal canonical relation have the same expression in terms of the generating function in all coordinate systems. In all cases, we have the crucial proposition:

\begin{proposition}
If a generating function $S$ transforms according to the transformation law given by~\eqref{eq.transflaw}, the  canonical relation $\Phi\subset T^*M_2\times (-T^*M_1)$ specified by  $S$  and the  operation of pullback $\F^*:\funn(M_2)\to \funn(M_1)$ do not depend on a choice of coordinates.
\end{proposition}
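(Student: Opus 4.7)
The plan is to prove the proposition in two parts corresponding to the two assertions: (i) the Lagrangian submanifold $\F \subset T^*M_2 \times (-T^*M_1)$ cut out by $S$ is independent of coordinates, and (ii) the functional pullback $\F^*$ is independent of coordinates. The conceptual basis of both is that the Liouville $1$-forms $\theta_1 = p_a\,{\rm d}x^a$ on $T^*M_1$ and $\theta_2 = q_i\,{\rm d}y^i$ on $T^*M_2$ are intrinsic, whereas the ``tautological'' combination $y^i q_i$ is \emph{not} a coordinate-invariant function on $T^*M_2$ unless the change $y\mapsto y'(y)$ is linear. The transformation law is precisely designed to compensate for this discrepancy.

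For (i), I would first rewrite the defining equation of $\F$ as $\theta_2 - \theta_1 = {\rm d}F$ along $\F$, with $F := y^i q_i - S(x,q)$. The transformation law rearranges to $y^i q_i - S = y^{i'} q_{i'} - S'$, so $F$ is an intrinsic function on $\F$; since $\theta_2 - \theta_1$ is also intrinsic, the condition $\theta_2 - \theta_1 = {\rm d}F$ describes $\F$ invariantly. To convert this conceptual argument into an explicit check, I would differentiate the transformation law $S' = S - y^i q_i + y^{i'} q_{i'}$ implicitly with respect to $q_{i'}$ and $x^{a'}$ (using the chain rule and the identity $q_i\,{\rm d}y^i = q_{i'}\,{\rm d}y^{i'}$), and verify that the equations $y^i = (-1)^{\itt}\der{S}{q_i}$, $p_a = \der{S}{x^a}$ are equivalent to their primed counterparts. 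Consistency is in effect built into the ``coboundary'' shape of the transformation, which is also responsible for the cocycle condition.

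For (ii), I would observe that the formula $\F^*[g](x) = g(y) + S(x,q) - y^i q_i$ with $y, q$ determined by \eqref{eq.pullequations} is exactly the critical value of the auxiliary function $H(x,y,q) := g(y) + S(x,q) - y^i q_i$ viewed as a function of $(y,q)$, with $x$ as parameter. Under $y\mapsto y'(y)$, one has $g(y) = g(y')$ because $g$ is a scalar function on $M_2$, while $S - y^i q_i$ transforms to $S' - y^{i'} q_{i'}$ by the transformation law; hence $H(x,y,q) = H'(x',y',q')$ pointwise. By part (i), the critical loci in old and new coordinates describe the same subset of $\F$, so the critical value depends only on $x$ in an invariant way, and $\F^*[g]$ is a well-defined function on $M_1$.

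The main obstacle will be in (i): keeping track of super-signs while implicitly differentiating $S' = S - y^i q_i + y^{i'} q_{i'}$, given that $y, q$ depend on $x, q'$ through the defining equations. I would manage this by working as far as possible with the coordinate-free Liouville-form formulation, which suppresses indices and signs, and only translate into component form at the final step. One also needs a small verification that the class of admissible generating functions (formal power series in momenta with smooth coefficients on the base) is preserved by the transformation; this is automatic since $y^i q_i$ and $y^{i'} q_{i'}$ are polynomial in the momenta and the substitution $y = y(x',q')$ coming from the implicit equations is itself a formal power series in $q'$ over smooth functions of $x'$.
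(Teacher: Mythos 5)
Your proposal is correct, and in fact the paper itself offers no proof of this proposition: it only remarks that the transformation law~\eqref{eq.transflaw} has a ``coboundary'' form (whence the cocycle condition) and that the law can be \emph{deduced} from demanding that the canonical relation have the same description in all coordinates, deferring the verification to the references. Your two-part strategy is exactly the intended one. Part (i) --- the invariance of $\Phi$ via the intrinsic identity $\theta_2-\theta_1={\rm d}F$ with $F=y^iq_i-S$, checked by implicit differentiation of~\eqref{eq.transflaw} --- is essentially the statement the paper says the transformation law is designed to guarantee. Part (ii) is where the real content lies, since the pullback involves the value of $S(x,q)-y^iq_i$ and not only its differentials, and your identification of $\F^*[g](x)$ as the critical value of $H(x,y,q)=g(y)+S(x,q)-y^iq_i$ in $(y,q)$ is the right mechanism (it is also what survives as the stationary-phase evaluation of the quantum pullback~\eqref{eq.quantpull} in Section~\ref{sec.microquant}, which gives an independent route to the same invariance).

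One small imprecision to fix: the equality $H(x,y,q)=H'(x',y',q')$ is \emph{not} pointwise on all of $M_1\times T^*M_2$, because the transformation law only asserts $S'-y^{i'}q_{i'}=S-y^iq_i$ when $y^i$ is constrained to equal $(-1)^{\itt}\lder{S}{q_i}(x,q)$ (that constraint is built into the definition of $S'$). Since the critical locus of $H$ in $(y,q)$ lies inside this constraint set, the critical \emph{values} do agree, and your subsequent appeal to part (i) to match the critical loci closes the argument; just state the equality of $H$ and $H'$ on that locus rather than everywhere. With that adjustment, and the sign bookkeeping you already flag (note $\p_{q_i}(y^jq_j)=(-1)^{\itt}y^i$, which is why the factor $(-1)^{\itt}$ appears in~\eqref{eq.pullequations}), the proof is complete.
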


\subsubsection{Key fact: derivative   of     pullback}

As the pullback  by a thick morphism   is a non-linear mapping  of vector spaces of functions (more accurately, we have to speak about the corresponding infinite-dimen\-sional supermanifolds), it is natural to ask about its derivative or variation for a given function $g\in \funn(M_2)$. The answer is remarkable.

\begin{theorem}
\label{thm.tangentmap}
Let $\Phi\co M_1\tto M_2$ be a thick morphism. Consider the pullback
\begin{equation}
    \Phi^*\co \funn(M_2)\to \funn(M_1)\,.
\end{equation}
Then for every  $g\in \funn(M_2)$, the derivative   $T\Phi^*[g]$    is given by
\begin{equation}
    T\Phi^*[g]=\f_g^*\,,
\end{equation}
where $\f_g^*\co \fun(M_2)\to \fun(M_1)$
is the usual pullback with respect to the   map $\f_g\co M_1\to M_2$    defined by
$y^i=(-1)^{\itt}\der{S}{q_i}(x,\der{g}{y}(y))$ (depending perturbatively on $g$, $\f_g=\f_0+\f_1+\f_2 + \ldots$)\,.
\end{theorem}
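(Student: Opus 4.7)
The plan is to view the pullback formula as a Legendre-type construction in which auxiliary variables $y^i$ and $q_i$ are determined by stationarity conditions, and then apply the envelope/Hamilton--Jacobi principle: when one differentiates an expression evaluated at a critical point of auxiliary variables, only the \emph{explicit} dependence on the parameter contributes.

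First I would introduce the auxiliary function
\begin{equation*}
   F(x,y,q;g) \;:=\; g(y)+S(x,q)-y^iq_i,
\end{equation*}
so that $\F^*[g](x)=F(x,y_g(x),q_g(x);g)$, where $y_g(x),q_g(x)$ are the (formal) solutions of the implicit equations~\eqref{eq.pullequations}. The key observation, which I would state as a short lemma, is that these implicit equations coincide exactly with the stationarity conditions $\der{F}{y^i}=0$ and $\der{F}{q_i}=0$ (with appropriate graded sign conventions). Indeed, $\der{F}{y^i}=\der{g}{y^i}-q_i$ yields $q_i=\der{g}{y^i}(y)$, and $\der{F}{q_i}=\der{S}{q_i}-(-1)^{\itt}y^i$ yields $y^i=(-1)^{\itt}\der{S}{q_i}(x,q)$; the signs are the standard graded Legendre-transform signs and they match the definition of $\F^*$.

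Next I would compute the derivative $T\F^*[g]$ in a direction $h\in\funn(M_2)$ by replacing $g$ with $g+\e h$ and taking $\lder{}{\e}|_{\e=0}$. By the chain rule,
\begin{equation*}
    \lder{}{\e}\Bigl|_{\e=0}\F^*[g+\e h](x)
    = \der{F}{y^i}\,\lder{y^i}{\e}\Bigl|_{\e=0} + \der{F}{q_i}\,\lder{q_i}{\e}\Bigl|_{\e=0} + \der{F}{g}\cdot h,
\end{equation*}
evaluated at $(x,y_g(x),q_g(x);g)$. By the lemma, the first two contributions vanish identically at the critical point. The remaining term is just the explicit variation of $F$ in $g$, namely $h(y_g(x))$. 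Since $y_g(x)=\f_g(x)$ by definition of $\f_g$, this is exactly $\f_g^*h(x)$, proving the theorem.

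The only real obstacle is handling the graded signs carefully and verifying that the stationarity identification is correct term by term, since the generating function $S(x,q)$ is a formal power series and the implicit equations have to be solved iteratively. I would treat this by expanding $y_g$ and $\f_g$ as formal perturbations $\f_g=\f_0+\f_1+\ldots$ as in the passage preceding the theorem, and checking stationarity order by order in the momentum expansion; the envelope argument then holds at each order. As a sanity check, in the linear case $S(x,q)=S^0(x)+\f^i(x)q_i$ the stationary $y$ is independent of $g$, the pullback is affine, and the formula $T\F^*[g]=\f_g^*=\f^*$ reduces to the familiar fact that the derivative of an affine map is its linear part.
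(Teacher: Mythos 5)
Your proposal is correct and is essentially the paper's own argument: the paper does not reproduce the proof here but cites the ``direct proof'' in \cite{tv:nonlinearpullback}, which is exactly this variational computation — the defining equations for $y^i$ and $q_i$ are the stationarity conditions for $g(y)+S(x,q)-y^iq_i$ in the auxiliary variables, so under a variation $g\mapsto g+\e h$ only the explicit term $h(y)$ survives, giving $h(\f_g(x))=(\f_g^*h)(x)$. Your sign bookkeeping for $\der{}{q_i}(y^jq_j)=(-1)^{\itt}y^i$ matches the conventions in~\eqref{eq.pullequations}, and the formal-power-series/iterative solvability point you raise is the only technical care needed.
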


(Explanation of notation: $\funn(M)$ is the supermanifold of functions, whose   `points' are even functions; $\fun(M)$ is a $\Z$-graded vector space, which can   identified with the tangent space $T_g\funn(M)$ to $\funn(M)$, for an arbitrary $g\in \funn(M)$.)

A direct proof of Theorem~\ref{thm.tangentmap} was given in~\cite{tv:nonlinearpullback}. An alternative proof can be obtained by consideration of quantum thick morphisms (see the next Section~\ref{sec.qman}). (This was suggested by H.~Khudaverdian, whom I thank.)

\begin{corollary}
For every $g$, the derivative $T\F^*[g]$ of $\F^*$ is an  algebra homomorphism $\fun(M_2)\to \fun(M_1)$.
\end{corollary}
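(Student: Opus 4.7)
The plan is to obtain the corollary as an immediate consequence of Theorem~\ref{thm.tangentmap}. That theorem identifies the derivative $T\F^*[g]$ with the ordinary pullback $\f_g^*$ along a genuine (perturbatively defined) smooth map $\f_g\co M_1\to M_2$. So the whole task reduces to invoking a standard fact: the pullback of functions along any morphism of supermanifolds is an algebra homomorphism. Once these two ingredients are lined up, the conclusion is one line.

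First I would write down explicitly, from Theorem~\ref{thm.tangentmap}, that $T\F^*[g]=\f_g^*$ where $\f_g\co M_1\to M_2$ is the formal perturbation of the ``underlying'' map $\f=\f_0$ determined by the implicit equation $y^i=(-1)^{\itt}\der{S}{q_i}(x,\der{g}{y}(y))$. The key point to emphasize is that although $\f_g$ depends non-trivially on the chosen point $g\in\funn(M_2)$, for each fixed $g$ the object $\f_g$ is an honest smooth map of supermanifolds (even if only in the sense of formal power series in the perturbation parameters coming from $g$), not a thick morphism. Hence its pullback is the usual linear pullback, not the nonlinear operation $\F^*$.

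Next I would recall, for completeness, why the ordinary pullback by a smooth map $\f\co M_1\to M_2$ of supermanifolds is an algebra homomorphism: by definition $\f^*(fg)(x)=(fg)(\f(x))=f(\f(x))g(\f(x))=\f^*(f)(x)\,\f^*(g)(x)$, with the graded signs taken care of automatically because pullbacks are parity-preserving. Applied to $\f_g$ this gives $\f_g^*(fh)=\f_g^*(f)\f_g^*(h)$ for all $f,h\in\fun(M_2)$, which is precisely the statement that $T\F^*[g]=\f_g^*$ is an algebra homomorphism $\fun(M_2)\to\fun(M_1)$.

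There is essentially no obstacle in this argument; it is a direct unpacking of the theorem. The only conceptual subtlety worth flagging is that one must not conflate $\F^*$ itself (which is nonlinear and therefore not an algebra homomorphism) with its derivative at a point (which is linear and, by the theorem, is in fact an algebra homomorphism). This is the content that makes the corollary nontrivially interesting, even though its proof from Theorem~\ref{thm.tangentmap} is immediate.
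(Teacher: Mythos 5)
Your argument is correct and is exactly the intended one: the paper states this as an immediate corollary of Theorem~\ref{thm.tangentmap}, since $T\F^*[g]=\f_g^*$ is the ordinary pullback by a (formally perturbed) supermanifold morphism, and such pullbacks are algebra homomorphisms by definition. Your remark that the interesting content lies in the nonlinearity of $\F^*$ versus the multiplicativity of its derivative is also exactly the point the paper is making.
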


\subsubsection{Composition law}
Consider thick morphisms $\Phi_{21}\co M_1\tto M_2$ and\linebreak $\Phi_{31}\co M_2\tto M_3$ with   generating functions $S_{21}=S_{21}(x,q)$ and $S_{32}=S_{32}(y,r)$.  Here $z^{\mu}$ are local coordinates on $M_3$ and by $r_{\mu}$ we denoted the corresponding conjugate momenta.

\begin{theorem} The composition $\Phi_{32}\circ \Phi_{21}$  is well-defined as a thick morphism $\Phi_{31}\co M_1\tto M_3$ with the  generating function $S_{31}=S_{31}(x,r)$, where
\begin{equation}
    S_{31}(x,r)= S_{32}(y,r) + S_{21}(x,q) - y^iq_i
\end{equation}
and $y^i$ and $q_i$ are  expressed through $(x^a, r_{\mu})$  from the system
\begin{equation}
    q_i = \der{S_{32}}{y^i}(y,r)\,, \quad y^i =(-1)^{\itt}\,\der{S_{21}}{q_i}\,(x,q)\,,
\end{equation}
which has a unique solution   as a  power series in $r_{\mu}$ and   a functional power series in $S_{32}$.
\end{theorem}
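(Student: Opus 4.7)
The plan is to treat the two microformal morphisms as formal canonical relations $\F_{21}\subset T^*M_2\times(-T^*M_1)$ and $\F_{32}\subset T^*M_3\times(-T^*M_2)$ and compose them in the usual symplectic sense as a fiber product over $T^*M_2$; then show that the resulting formal Lagrangian in $T^*M_3\times(-T^*M_1)$ is again representable by a generating function of the prescribed form $S_{31}(x,r)$, and read off its explicit shape.

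First I would derive the formula at the level of one-forms. A point of $\F_{32}\circ\F_{21}$ is a tuple $(z,r;x,p)$ for which there exists a common $(y,q)\in T^*M_2$ with $(y,q;x,p)\in\F_{21}$ and $(z,r;y,q)\in\F_{32}$. On this intersection the two defining relations
\begin{align*}
q_i\,{\rm d}y^i - p_a\,{\rm d}x^a &= {\rm d}(y^iq_i-S_{21}),\\
r_\mu\,{\rm d}z^\mu - q_i\,{\rm d}y^i &= {\rm d}(z^\mu r_\mu - S_{32})
\end{align*}
add; the $q_i\,{\rm d}y^i$ terms cancel, yielding
\begin{equation*}
r_\mu\,{\rm d}z^\mu - p_a\,{\rm d}x^a = {\rm d}\bigl(z^\mu r_\mu - (S_{32}+S_{21}-y^iq_i)\bigr).
\end{equation*}
This exhibits $S_{31}:=S_{32}+S_{21}-y^iq_i$ as the prospective generating function, provided that it can be re-expressed as a function of $(x,r)$ alone.

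Next I would carry out the elimination. The fiber-product conditions across $T^*M_2$ are exactly the two equations in the statement, $q_i=\der{S_{32}}{y^i}(y,r)$ and $y^i=(-1)^{\itt}\der{S_{21}}{q_i}(x,q)$. Since $S_{32}(y,r)$ is a formal power series in $r$ starting with a function of $y$ plus a linear-in-$r$ term, and $S_{21}(x,q)$ is analogously a power series in $q$, these equations can be solved by iteration: one reads off $q_{i}^{(0)}=\der{S_{32}^{0}}{y^i}(y^{(0)})$ with $y^{(0)i}=(-1)^{\itt}\der{S_{21}}{q_i}(x,q^{(0)})$ to zeroth order in $r$, and at each subsequent order the right-hand sides depend only on previously computed coefficients. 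Existence and uniqueness as formal power series in $r$ with the stated functional dependence on $S_{32}$ follow, and substitution produces $S_{31}(x,r)$.

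The decisive verification is that this $S_{31}$ really generates the composed canonical relation, i.e.\ that $p_a=\der{S_{31}}{x^a}$ and $z^\mu=(-1)^{\tilde\mu}\der{S_{31}}{r_\mu}$. Differentiating
\begin{equation*}
S_{31}(x,r)=S_{32}(y(x,r),r)+S_{21}(x,q(x,r))-y^i(x,r)q_i(x,r)
\end{equation*}
produces, besides the explicit $x$- and $r$-derivatives of $S_{21}$ and $S_{32}$, terms proportional to $\der{y^i}{r_\mu}\bigl(\der{S_{32}}{y^i}-q_i\bigr)$ and $\der{q_i}{r_\mu}\bigl((-1)^{\itt}\der{S_{21}}{q_i}-y^i\bigr)$, which vanish identically on the solution of the elimination equations. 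The main obstacle I anticipate is the careful bookkeeping of super-signs throughout the chain-rule computation, in particular ensuring that the formula is consistent with the transformation law of Section~\ref{subsubsec.coordinvar} under independent coordinate changes on $M_1,M_2,M_3$; once the signs are pinned down, associativity of the operation is obtained by iterating the same manipulation for three successive morphisms, and Theorem~\ref{thm.tangentmap} provides an independent consistency check since the induced pullbacks must then satisfy $(\F_{32}\circ\F_{21})^*=\F_{21}^*\circ\F_{32}^*$.
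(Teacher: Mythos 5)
Your overall architecture is the right one and is exactly what the paper has in mind: compose the two formal canonical relations as a fiber product over $T^*M_2$, add the two defining one-form identities so that the $q_i\,{\rm d}y^i$ terms cancel, read off $S_{31}=S_{32}+S_{21}-y^iq_i$, and check by the chain rule that the implicit-function terms vanish on the solution of the elimination equations. The paper itself only sketches this (deferring details to the original references), and your closing verification that $p_a=\der{S_{31}}{x^a}$ and $z^\mu=(-1)^{\tilde\mu}\der{S_{31}}{r_\mu}$ is the correct way to see that $S_{31}$ generates the composed relation.

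There is, however, a genuine gap in your solvability step. You propose to iterate order by order in $r_\mu$, ``reading off'' the zeroth order first. But at order zero in $r$ the system does \emph{not} decouple: you get the coupled fixed-point problem $q^{(0)}_i=\der{S_{32}^0}{y^i}(y^{(0)})$, $y^{(0)i}=(-1)^{\itt}\der{S_{21}}{q_i}(x,q^{(0)})$, where $\der{S_{32}^0}{y^i}(y)$ is in general not small, so the grading by powers of $r$ alone gives you nothing to start the induction with. This is precisely why the theorem insists that the solution is unique as a \emph{functional} power series in $S_{32}$: the correct iteration treats all of $S_{32}$ as a formal perturbation parameter (replace $S_{32}$ by $\epsilon S_{32}$ and expand in $\epsilon$), so that at order zero one has $q=0$ and $y^i=(-1)^{\itt}\der{S_{21}}{q_i}(x,0)=\f^i(x)$, and each subsequent order is determined by the previous ones. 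This mirrors how the defining system \eqref{eq.pullequations} for the pullback is declared ``solvable by iterations'' only because $g$ enters as a formal functional argument. You name the functional dependence on $S_{32}$ in passing but do not use it where it is actually needed; once the iteration is reorganized around it, the rest of your argument goes through.
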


If we think about thick morphisms as of (formal) canonical relations, there is their composition as ``set-theoretic'' relations. The point of the above statement is that set-theoretic composition leads actually to a relation of the same type, i.e. a thick morphism, and we are able to give a formula for its generating function.

\subsubsection{Further facts}
\label{subsubsec.further}

\paragraph*{(A)  \emph{Formal category}.}

Composition of thick morphisms is associative and   $(\F_{32}\circ \F_{21})^*=\F^*_{21}\circ \F^*_{32}$. In the lowest order, the composition is as in the category $\SMan\rtimes \funn$, whose arrows are pairs $(\f_{21},f_{1})$ with  the  composition   $(\f_{32},f_{2})\circ (\f_{21},f_{1})=(\f_{32}\circ \f_{21}, \f_{21}^*f_{2}+f_{1})$.
Thick morphisms form  a \emph{formal category}  (``formal thickening'' of the category $\SMan\rtimes \funn$).  Notation: $\EThick$.

\paragraph*{(B) \emph{Relation with gradings.}} The notion of a thick morphism as such and the construction of the pullback of functions by a thick morphism do not require   any super or graded structure. We formulated them for supermanifolds with an eye on applications. If necessary, an extra $\ZZ$-grading can be included. One only needs to assume that a generating function $S$ has weight $0$. This would give correct weights for all other quantities in our formulas.

\paragraph*{(C)\emph{``Fermionic version''.}}  There is a    {fermionic  version} based on \underline{anti}cotangent bundles $\Pi T^*M$ and odd generating functions $S(x,y^*)$: ``odd thick morphisms''
\begin{equation}
    \Psi\co M_1\oto M_2
\end{equation}
induce nonlinear pullbacks
\begin{equation}
    \Psi^*\co \pfunn(M_2) \to \pfunn(M_1)
\end{equation}
on \textbf{odd} functions (``fermionic  fields''),  and their composition gives another formal category, $\OThick$, which is a formal thickening of $\SMan\rtimes \pfunn$. What is said above about the possibility of introduction of an extra  $\ZZ$-grading applies in the fermionic case as well.

\subsection{Application to homotopy Poisson brackets}

\subsubsection{$P_{\infty}$- and $S_{\infty}$-structures (homotopy Poisson and Schouten)}

\begin{definition}
A \emph{$P_{\infty}$}-  (resp., \emph{$S_{\infty}$}-) \emph{structure} on   a supermanifold $M$ is an antisymmetric (resp., symmetric) $\Linf$-structure on $\fun(M)$ such that the brackets are multiderivations of the associative product. A supermanifold with a  $P_{\infty}$-structure (resp., an $S_{\infty}$-structure) is called a  \emph{$P_{\infty}$-manifold} (resp., an \emph{$S_{\infty}$-manifold}).
\end{definition}

\begin{enumerate}
  \item A \emph{$P_{\infty}$-structure} on $M$ is specified by an  \textbf{even}  function
  $P\in\fun(\Pi T^*M)$ satisfying $[P,P]=0$, by  the formula
  \begin{equation}\label{eq.pinf}
    \{f_1,\ldots,f_k\}_P :=  [\ldots[P,f_1],\ldots,f_k]|_M\,.
  \end{equation}
  \item An \emph{$S_{\infty}$-structure} on $M$ is specified by an \textbf{odd} function $H\in\fun(T^*M)$ satisfying $(H,H)=0$, by the formula
  \begin{equation} \label{eq.sinf}
    \{f_1,\ldots,f_k\}_H :=  (\ldots(H,f_1),\ldots,f_k)|_M\,.
  \end{equation}
\end{enumerate}
Here   $[-,-]$ stands for the canonical odd Schouten bracket (canonical antibracket) on functions on $\Pi T^*M$ (which can be identified with multivector fields on $M$) and $(-,-)$ stands for the canonical even Poisson bracket on functions on $T^*M$ (i.e. Hamiltonians on $M$). Sometimes we refer uniformly to $H$ or $P$ as to the \emph{master Hamiltonian} of the corresponding $\Sinf$- or $\Pinf$-structure.

It follows that $\Pinf$-brackets have alternating parities: the binary bracket is even, the unary bracket and ternary bracket are odd, etc. A $\Pinf$-structure  on $M$ is a homotopy analog of an ordinary (even) Poisson bracket.

As for $\Sinf$-brackets, they are all odd and an $\Sinf$-structure  on $M$ is a homotopy analog of an odd Poisson (or Schouten or Gerstenhaber) bracket.

Formulas~\eqref{eq.pinf} and \eqref{eq.sinf} are particular cases of ``higher derived brackets''~\cite{tv:higherder,tv:higherderarb}, and the fact that the ``master equations'' $[P,P]=0$ and $(H,H)=0$ imply higher Jacobi identities of $\Linf$-algebras follows from a general theorem from~\cite{tv:higherder}. On the other hand, the universal description of an $\Linf$-algebra is given by a (formal) homological vector field. What are the homological vector fields corresponding to  $\Pinf$- and $\Sinf$-structures?

\begin{theorem}[\cite{tv:nonlinearpullback}]
The homological vector fields corresponding to $\Pinf$- and $\Sinf$-structures with ``master Hamiltonians'' $P\in \fun(\Pi T^*M)$ (even) and $H\in \fun(T^*M)$ (odd) have the Hamilton--Jacobi form:
\begin{equation}\label{eq.qp}
    Q_P=  \int_M Dx\, P\Bigl(x,\der{\psi}{x}\,\Bigr)\var{}{\psi(x)}\in \Vect(\pfunn(M))
\end{equation}
and
\begin{equation}\label{eq.qh}
    Q_H=  \int_M Dx\, H\Bigl(x,\der{f}{x}\,\Bigr)\var{}{f(x)}\in \Vect(\funn(M))\,.
\end{equation}
\end{theorem}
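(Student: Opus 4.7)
The plan is to interpret $Q_H$ and $Q_P$ as (odd) homological vector fields on the infinite-dimensional supermanifolds $\funn(M)$ and $\pfunn(M)=\Pi\funn(M)$ respectively, and to verify in each case the two required properties: (a) $Q^2=0$, and (b) the higher derived brackets generated by $Q$ via the universal formula \eqref{eq.bracklinfodd} reproduce the $\Sinf$- or $\Pinf$-brackets given by \eqref{eq.sinf} or \eqref{eq.pinf}. Since any symmetric $\Linf$-structure on $\fun(M)$ (respectively, antisymmetric one, after the parity-reversal $L\mapsto V=\Pi L$) is uniquely encoded by a formal homological vector field on $\funn(M)$ (respectively on $\pfunn(M)$), it will suffice to check that the proposed formulas both produce the right brackets and square to zero.

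For property (b), I would identify a function $f\in\fun(M)$ with the corresponding constant vector field $\iota(f)=\int_M Dx\,f(x)\var{}{\psi(x)}$ on $\pfunn(M)$, and analogously for $\funn(M)$. A direct functional-derivative computation shows that $[Q_H,\iota(f_i)]$ differentiates the integrand $H(x,\p_x f)$ with respect to each $p_a$ and contracts the result with $\p_{x^a}f_i$. Iterating, $[\ldots[Q_H,\iota(f_1)],\ldots,\iota(f_n)]$ has integrand given by the $n$-fold symmetric momentum derivative of $H$ contracted with $\p f_1,\ldots,\p f_n$; evaluated at the origin $f=0$ (which sets $p=0$) this is exactly the polynomial in the $\p f_i$ obtained by iterating the canonical Poisson bracket $(H,f_i)=\der{H}{p_a}\p_{x^a}f_i$ on $T^*M$ and restricting to the zero section. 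Hence the brackets generated by $Q_H$ via \eqref{eq.bracklinfodd} coincide with \eqref{eq.sinf}, and the same computation with $P$ in place of $H$ and the odd Schouten bracket in place of the Poisson bracket handles the $\Pinf$-case.

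For property (a), the key step is the identity $[Q_H,Q_K]=Q_{(H,K)}$ and its antibracket analogue $[Q_P,Q_R]=Q_{[P,R]}$. To prove it, one notes that when the functional derivative $\var{}{f(x)}$ in $Q_H$ hits a factor $H(y,\p_y f)$ inside $Q_K$, it produces $\der{H}{p_a}(y,\p_y f)\,\p_{y^a}\delta(y-x)$; integration by parts in $y$ together with the Leibniz rule reassembles the double integral into a single Hamilton--Jacobi-type integral with integrand $(H,K)(x,\p_x f)$ (cross terms cancel by the symmetry of the second functional derivative combined with the oddness of $H$ and $K$). Specializing to $K=H$ gives $Q_H^2=\tfrac12 Q_{(H,H)}$, which vanishes precisely when $(H,H)=0$; the analogous argument, with the canonical antibracket on $\Pi T^*M$ replacing the Poisson bracket, yields $Q_P^2=\tfrac12 Q_{[P,P]}=0$. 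The main technical difficulty will be careful bookkeeping of signs, both from the odd functional derivatives and from the parity-reversal in the $\Pinf$-case: $P$ is even but $\p_x\psi$ carries the opposite parity to $\p_x f$ in the $\Sinf$-case, which is what ensures that $Q_P$ is itself odd on $\pfunn(M)$, as required for the homological condition to even make sense.
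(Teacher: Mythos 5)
Your proposal is correct and follows essentially the route the paper intends: the paper itself defers the proof to the cited reference, but the remark immediately after the theorem singles out precisely your key lemma, $[Q_{H_1},Q_{H_2}]=Q_{(H_1,H_2)}$ (and its antibracket analogue), from which $Q_H^2=\tfrac12 Q_{(H,H)}=0$ follows, while your verification that the iterated commutators $[\ldots[Q_H,\iota(f_1)],\ldots,\iota(f_n)]$ evaluated at $f=0$ reproduce the derived brackets \eqref{eq.sinf} is the standard complementary half. The only caution is the sign bookkeeping you already flag, plus a small typo ($\der{H}{p_a}$ should be $\der{K}{p_a}$ when $Q_H$'s functional derivative hits $K$'s integrand).
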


\begin{remark} ``Hamilton--Jacobi'' vector fields such as $Q_H$ live on spaces of functions. They should not be confused with Hamilton vector fields. One can write such a vector field $Q_H$ on $\funn(M)$ (the supermanifold of even functions on $M$)  for any Hamiltonian $H\in\fun(T^*M)$ irrespective of its parity. As it was shown in\cite{tv:nonlinearpullback}, $[Q_{H_1},Q_{H_2}]=Q_{(H_1,H_2)}$ (i.e. the canonical Poisson bracket maps to the commutator of vector fields on $\funn(M)$). The same is true for the fermionic case, i.e. for functions $P$ on $\Pi T^*M$ and vector fields $Q_P$ on $\pfunn(M)$ (the supermanifold of odd functions on $M$).
\end{remark}

\subsubsection{Key theorem: pullback   as an $L_{\infty}$-morphism}
Let $M_1$ and $M_2$ be $S_{\infty}$-manifolds, with   $H_i\in \fun(T^*M_i)$, $i=1,2$.

\begin{definition}[$S_{\infty}$ or ``homotopy Schouten''  thick morphism]
A thick morphism
\begin{equation}
   \Phi\co M_1\tto M_2
\end{equation}
is   \emph{homotopy Schouten} or an $S_{\infty}$ thick morphism   if
\begin{equation}\label{eq.sinfmor}
    \pi_1^*H_1=\pi_2^*H_2\,.
\end{equation}
Here $\pi_i$ are the restrictions on $\Phi$ of the projections of $T^*M_2\times T^*M_1$ on $T^*M_i$.
\end{definition}

Note: condition~\eqref{eq.sinfmor} is expressed by the Hamilton--Jacobi equation for $S(x,q)$
  \begin{equation}\label{eq.schouten}
    H_1\Bigl(x,\der{S}{x}\Bigr)=H_2\Bigl((-1)^q\der{S}{q},q\Bigr)\,. 
\end{equation}

\begin{theorem}  If a thick  morphism of  $S_{\infty}$-manifolds\linebreak $\Phi\co M_1\tto M_2$ is  {$S_{\infty}$}, then the pullback
\begin{equation}
    \Phi^*\co \funn(M_2)\to \funn(M_1)
\end{equation}
is an $L_{\infty}$-morphism of the homotopy Schouten brackets.
\end{theorem}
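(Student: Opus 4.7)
The plan is to reduce the statement to showing that $\F^*$, viewed as a (non-linear) map of the formal $Q$-manifolds $(\funn(M_2),Q_{H_2})$ and $(\funn(M_1),Q_{H_1})$, is a $Q$-morphism. By the discussion of $\Linf$-algebras in~\ref{subsubsec.linfalgebras}, an $\Linf$-morphism in the symmetric version is precisely a $Q$-map of the associated formal $Q$-manifolds; and the homotopy Schouten brackets on $\fun(M_i)$, being the higher derived brackets generated by $H_i$, are assembled into the Hamilton--Jacobi homological vector fields $Q_{H_i}$ of~\eqref{eq.qh}. Hence the theorem reduces to verifying the $\f$-relatedness identity
\begin{equation*}
T\F^*[g]\bigl(Q_{H_2}[g]\bigr)=Q_{H_1}\bigl[\F^*[g]\bigr]\qquad\text{for every }g\in\funn(M_2).
\end{equation*}

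The first step is to apply Theorem~\ref{thm.tangentmap}: the derivative $T\F^*[g]$ equals the ordinary pullback $\f_g^*$ along the smooth map $\f_g\co M_1\to M_2$ sitting inside $\F$ at $g$. This identifies the left-hand side with $\f_g^*\bigl(Q_{H_2}[g]\bigr)$, whose value at $x$ is $H_2\bigl(\f_g(x),\der{g}{y}(\f_g(x))\bigr)$.

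The second step is to compute $\der{\F^*[g]}{x^a}$ by exploiting the variational/envelope structure of~\eqref{eq.pull}: the auxiliary variables $y^i$ and $q_i$ are precisely stationary points of the expression $g(y)+S(x,q)-y^iq_i$ in $(y,q)$ for fixed $x$, so by the usual envelope argument the chain-rule contributions coming from the implicit dependence of $(y,q)$ on $x$ cancel, and differentiation in $x^a$ picks up only the explicit dependence through $S$, giving $\der{\F^*[g]}{x^a}(x)=\der{S}{x^a}(x,q)$ evaluated at $y=\f_g(x)$, $q=\der{g}{y}(\f_g(x))$. Substituting this into $Q_{H_1}[\F^*[g]](x)=H_1\bigl(x,\der{\F^*[g]}{x}(x)\bigr)$ and then invoking the Hamilton--Jacobi equation~\eqref{eq.schouten} at the same critical point converts the right-hand side to $H_2\bigl(\f_g(x),\der{g}{y}(\f_g(x))\bigr)$, matching the left-hand side.

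The main obstacle I anticipate is careful bookkeeping of super-signs, both in the envelope step (where the implicit $x$-dependence of $(y,q)$ must be commuted past $\Z$-graded partial derivatives of $S$ and $g$) and in propagating the sign $(-1)^{\itt}$ from the definition of $\f_g$ through the Hamilton--Jacobi equation~\eqref{eq.schouten}, so that the final matching is an identity of even functions. Once this is in place, the identification of $\Linf$-morphisms with $Q$-morphisms makes the whole argument a direct consequence of Theorem~\ref{thm.tangentmap} and~\eqref{eq.schouten}; alternatively, one could deduce the statement by the $\hbar\to 0$ limit from the quantum counterpart developed in Section~\ref{sec.microquant}.
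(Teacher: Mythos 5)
Your proposal is correct and follows essentially the same route as the paper, which reformulates the claim as the statement that $\F^*$ intertwines the Hamilton--Jacobi vector fields $Q_{H_1}$ and $Q_{H_2}$ and rests on Theorem~\ref{thm.tangentmap} together with the relation $\L_f=\L_g\circ\F$ (your envelope computation $\der{\F^*[g]}{x^a}=\der{S}{x^a}$ at the stationary point is precisely the rigorous form of~\eqref{eq.pullascompos}). The concluding substitution of the Hamilton--Jacobi equation~\eqref{eq.schouten} is exactly how the intertwining is verified in the cited source, and your remark about the $\hbar\to 0$ alternative is likewise the route acknowledged in Section~\ref{sec.microquant}.
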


Explicitly:  if the Hamilton--Jacobi equation~\eqref{eq.schouten} holds, then $\Phi^*$ intertwines the homological vector fields $Q_{H_2}\in \Vect(\funn(M_2))$ and $Q_{H_1}\in \Vect(\funn(M_1))$.

\subsubsection{Analog for $\Pinf$-structures}
Let $M_1$ and $M_2$ be $P_{\infty}$-manifolds, with   $P_i\in \fun(\Pi T^*M_i)$, $i=1,2$.

We have to use the fermionic version of thick morphisms now.

\begin{definition}[$P_{\infty}$ or ``homotopy Poisson'' odd thick morphism]
An odd thick morphism
\begin{equation}
   \Psi\co M_1\oto M_2
\end{equation}
is   \emph{homotopy Poisson} or a  $P_{\infty}$ thick morphism   if
\begin{equation}\label{eq.pinfmor}
    \pi_1^*P_1=\pi_2^*P_2\,.
\end{equation}
Now $\pi_i$ are the restrictions on $\Psi$ of the projections of $\Pi T^*M_2\times \Pi T^*M_1$ on $\Pi T^*M_i$.
\end{definition}

This is expressed by the Hamilton--Jacobi equation for an odd generating function $S(x,y^*)$
  \begin{equation}\label{eq.poisson}
    P_1\Bigl(x,\der{S}{x}\Bigr)=P_2\Bigl(\der{S}{y^*},y^*\Bigr)\,.
\end{equation}

\begin{theorem}
If an odd thick  morphism of  $P_{\infty}$-manifolds $\Phi\co M_1\oto M_2$ is $P_{\infty}$, then the pullback
\begin{equation}
    \Psi^*\co \pfunn(M_2)\to \pfunn(M_1)
\end{equation}
is an $L_{\infty}$-morphism of the homotopy Poisson brackets.
\end{theorem}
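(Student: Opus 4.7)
The plan is to mirror the strategy of the just-proved $S_\infty$ theorem, using the fact that an $L_\infty$-morphism of homotopy Poisson algebras is exactly a $Q$-morphism between the corresponding formal $Q$-manifolds $(\pfunn(M_i), Q_{P_i})$ of odd functions. Since the homological vector fields $Q_{P_1}$, $Q_{P_2}$ have the Hamilton--Jacobi form from~\eqref{eq.qp}, the entire assertion reduces to a pointwise intertwining condition that I can verify using the derivative-of-pullback theorem together with the Hamilton--Jacobi equation~\eqref{eq.poisson}.

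First, I would observe that $\Psi^*\co \pfunn(M_2)\to \pfunn(M_1)$ is an $\Linf$-morphism of the $\Pinf$-brackets if and only if it is a $Q$-morphism intertwining $Q_{P_2}$ and $Q_{P_1}$, i.e.\ for every odd $\psi\in\pfunn(M_2)$,
\begin{equation*}
T\Psi^*[\psi]\bigl(Q_{P_2}[\psi]\bigr)\;=\;Q_{P_1}\bigl[\Psi^*[\psi]\bigr].
\end{equation*}
By the fermionic analog of Theorem~\ref{thm.tangentmap}, the left-hand side equals $\f_\psi^{\,*}\bigl(Q_{P_2}[\psi]\bigr)$, where $\f_\psi\co M_1\to M_2$ is the $\psi$-dependent ordinary smooth map obtained from $S(x,y^*)$. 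Since $Q_{P_2}[\psi]$ is the function $y\mapsto P_2(y,\p\psi/\p y)$ on $M_2$, the required identity becomes the pointwise equality of functions on $M_1$:
\begin{equation*}
P_2\bigl(\f_\psi(x),(\p\psi/\p y)(\f_\psi(x))\bigr)\;=\;P_1\bigl(x,\p\Psi^*[\psi]/\p x\bigr).
\end{equation*}

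Next, I would extract the needed identifications from the defining formula for the fermionic pullback, $\Psi^*[\psi](x)=\psi(y)+S(x,y^*)-y^iy_i^*$, where $y,y^*$ satisfy the stationary conditions $y_i^*=\p\psi/\p y^i$ and $y^i=\pm\,\p S/\p y_i^*(x,y^*)$. Differentiating in $x^a$ and using these stationary conditions (the envelope argument) kills the implicit contributions through $y$ and $y^*$, giving $\p\Psi^*[\psi]/\p x^a=\p S/\p x^a(x,y^*)$ at the critical point. Moreover, by construction $y=\f_\psi(x)$ and $y^*=(\p\psi/\p y)(\f_\psi(x))$ at the critical point. Substituting these identifications into the Hamilton--Jacobi equation~\eqref{eq.poisson},
\begin{equation*}
P_1\bigl(x,\p S/\p x\bigr)\;=\;P_2\bigl(\p S/\p y^*,y^*\bigr),
\end{equation*}
reproduces exactly the pointwise equality displayed above, completing the argument.

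The main obstacle I anticipate is careful sign-tracking in the fermionic setting: the parities of $\psi$, of the conjugate variables $y_i^*$, and of the partial derivatives differ from the bosonic case, and the envelope cancellation relies on the signs in $y^i=\pm\,\p S/\p y_i^*$ matching the signs arising from variational differentiation of $\psi(y)-y^iy_i^*$. A secondary technical point is to justify that the composition $T\Psi^*[\psi]\circ Q_{P_2}[\psi]$ really is an ordinary pullback in the sense that formal power series in $\psi$ converge under iteration and the intertwining can be read off coefficient by coefficient. Once the signs and the envelope identity are confirmed, the structural content of the proof is identical to the $\Sinf$ case: the non-linearity of $\Psi^*$ is absorbed by the fact that its derivative at every point is an ordinary pullback, and the Hamilton--Jacobi condition translates pointwise into the intertwining of the two Hamilton--Jacobi homological fields.
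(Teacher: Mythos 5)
Your proposal is correct and follows exactly the strategy the paper indicates (and that the cited references carry out): reformulate the $\Linf$-morphism property as the intertwining of the Hamilton--Jacobi homological fields $Q_{P_1}$, $Q_{P_2}$, apply the fermionic analog of the derivative-of-pullback theorem to identify $T\Psi^*[\psi]$ with the ordinary pullback $\f_\psi^*$, use the stationarity (envelope) conditions to identify $\p\Psi^*[\psi]/\p x$ with $\p S/\p x$ at the critical point, and then read off the intertwining from the Hamilton--Jacobi equation~\eqref{eq.poisson}. The sign issues you flag are real but routine, and the structural content of your argument matches the paper's.
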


That is: if~\eqref{eq.poisson} holds, then $\Psi^*$ intertwines the homological vector fields $Q_{P_2}\in \Vect(\pfunn(M_2))$ and $Q_{P_1}\in \Vect(\pfunn(M_1))$.

Further application that we have in mind (to $\Linf$-algebroids) requires first a short digression. This is another application of the language of thick morphisms, this time having nothing to do in principle with homotopy brackets, but simply to maps of vector spaces or vector bundles. It is as follows.

\subsection{``Non-linear adjoint''}
We shall show that the notion of the adjoint of a linear transformation has an analog for non-linear transformations, but now as a thick morphism rather than an ordinary map. (Again, there are parallel bosonic and fermionic versions.) We work in the setting of vector bundles over a fixed base to avoid complications for different bases. (See~\cite{mackenzie_and_higgins:duality} for duality for vector bundles by using two categories, with ``morphisms'' and ``comorphisms''.)

\subsubsection{The adjoint for a nonlinear transformation} \label{subsubsec.adjoint}

The construction is based on the following fundamental fact.
\begin{thmm}[Mackenzie--Xu~\cite{mackenzie:bialg}]
For dual vector bundles $E$ and $E^*$, there is a diffeomorphism
\begin{equation}
    T^*E\cong T^*E^*\,,
\end{equation}
defined canonically up to some choice of signs. Depending on this choice, it is either symplectomorphism or antisymplectomorphism.
\end{thmm}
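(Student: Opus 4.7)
My plan is to exhibit the diffeomorphism explicitly in local coordinates, verify that it (anti)preserves the canonical symplectic form, and then show that the local formulas glue together trivialization-independently. I will choose base coordinates $x^a$ on $M$, linear fiber coordinates $u^i$ on $E$, and dual linear fiber coordinates $u_i$ on $E^*$ (normalized so that the pairing $u^i u_i$ is base-invariant). Denote the conjugate momenta on $T^*E$ by $(p_a,p_i)$ and those on $T^*E^*$ by $(\pi_a,\pi^i)$. The candidate map $\F\co T^*E\to T^*E^*$ is defined locally by
\begin{equation*}
   x^a\mapsto x^a,\quad u_i\mapsto p_i,\quad \pi_a\mapsto p_a,\quad \pi^i\mapsto -u^i.
\end{equation*}
The geometric intuition is that a cotangent vector to $E$ along a fiber direction is naturally a point of $E^*$; conversely, the location of a cotangent vector inside a fiber of $E$ is a point of $E$, i.e.\ a fiber-direction cotangent to $E^*$.

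\textbf{Symplectic check.} With $\omega_E=dp_a\wedge dx^a+dp_i\wedge du^i$ and $\omega_{E^*}=d\pi_a\wedge dx^a+d\pi^i\wedge du_i$, substituting the formulas above yields $\F^*\omega_{E^*}=dp_a\wedge dx^a+d(-u^i)\wedge dp_i=\omega_E$ in one line. Flipping the sign convention (for example, taking $\pi^i\mapsto +u^i$ together with $u_i\mapsto -p_i$) reverses the sign of the second summand and produces the anti-symplectomorphism variant; this is the origin of the ``up to some choice of signs'' clause.

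\textbf{Coordinate invariance (the main obstacle).} Under a bundle automorphism $u^{i'}=T^{i'}_j(x)\,u^j$ combined with a base change $x^{a'}=x^{a'}(x)$, I first observe that $p_i=p_{i'}T^{i'}_i$ transforms exactly like $u_i=T^{i'}_i u_{i'}$, and dually $\pi^i$ transforms as $u^i$, so the assignments $u_i=p_i$ and $\pi^i=-u^i$ are manifestly intrinsic. The delicate identification is $\pi_a=p_a$. Writing out the full cotangent transformation laws,
\begin{align*}
   p_a &= p_{a'}\,\partial x^{a'}/\partial x^a\,+\,p_{i'}\,(\partial T^{i'}_j/\partial x^a)\,u^j,\\
   \pi_a &= \pi_{a'}\,\partial x^{a'}/\partial x^a\,-\,\pi^i\,(\partial T^{i'}_i/\partial x^a)\,u_{i'},
\end{align*}
I then substitute $\pi_{a'}=p_{a'}$, $\pi^i=-u^i$, $u_{i'}=p_{i'}$, and the two cross terms match because the two minus signs cancel. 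Hence the local formula for $\F$ is independent of the trivialization and so extends to a global diffeomorphism.

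The bulk of the work, and the main obstacle, is the bookkeeping in this last step. In the super setting this calculation picks up extra $(-1)^{\itt}$ factors whenever one commutes fiber and base derivatives; those signs force the accompanying sign choices in the definition of $\F$, and it is precisely this discrete ambiguity that accounts for whether the resulting diffeomorphism is a symplectomorphism or an antisymplectomorphism.
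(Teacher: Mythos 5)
The paper does not actually prove this statement: it is recalled as a theorem of Mackenzie--Xu with a citation, and the only internal evidence of the construction is the later remark (in the discussion of $\Linf$-algebroids) that ``up to signs, the Mackenzie--Xu transformation is the exchange of $\x^i,\pi_i$ with $\pi^i,\h_i$.'' Your explicit coordinate construction is exactly that standard proof, and your symplectic check and gluing computation are correct in the purely even case: the constraint you derive from matching the cross terms, together with the form of the transformation laws for $p_i$ and $\pi^i$, is precisely what forces the one minus sign and makes the map trivialization-independent.

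There are, however, two concrete problems. First, your example of a sign flip producing the antisymplectomorphism is wrong. Writing the general sign choice as $u_i=\e_1 p_i$, $\pi^i=\e_2 u^i$, $\pi_a=\e_3 p_a$, your own gluing argument forces $\e_3=-\e_1\e_2$, and then $\F^*\omega_{E^*}=\e_3\,(dp_a\wedge dx^a+dp_i\wedge du^i)$, so the map is a symplectomorphism iff $\e_3=+1$. Your proposed flip $(\e_1,\e_2)=(-1,+1)$ still has $\e_3=-\e_1\e_2=+1$: it is again a symplectomorphism, not the anti-variant. The antisymplectomorphisms are the choices with $\e_1\e_2=+1$, e.g.\ $u_i=p_i$, $\pi^i=+u^i$, $\pi_a=-p_a$; so the clause ``either symplectomorphism or antisymplectomorphism'' is not yet established by what you wrote. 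Second, the statement in the paper is about super vector bundles, and that is where the sign ambiguity carries real content: the identifications $u_i=\pm p_i$, $\pi^i=\pm u^i$ acquire parity-dependent factors $(-1)^{\itt}$, and the Koszul signs in $dp_i\wedge du^i$ versus $du^i\wedge dp_i$ depend on $\itt$. You acknowledge this but defer it; to justify the theorem as the paper uses it (and its odd analogue $\Pi T^*E\cong\Pi T^*(\Pi E^*)$), this bookkeeping needs to be carried out rather than asserted, since it is precisely what determines which discrete choices are available and which of them are symplectic versus antisymplectic.
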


The  Mackenzie--Xu diffeomorphism $\kir\co T^*E\to T^*E^*$ plays the key role for our construction of adjoint as a thick morphism.

\begin{theorem}[``adjoint'']
For any  fiberwise, in general nonlinear, map of vector bundles $\Phi\co E_1\to E_2$,
there is   a   {thick morphism}, which we call  the (fiberwise)  \emph{adjoint},
\begin{equation}
    \Phi^*\co E_2^*\tto E_1^*\,.
\end{equation}
The thick morphism $\Phi^*$ is fiberwise in the natural sense. It is an ordinary map and coincides with the usual adjoint  if the map $\Phi$ is fiberwise-linear,  and has the same functorial property $(\F_1\circ \F_2)^*=\F_2^*\circ \F_1^*$.
Construction:
\begin{equation}
    \Phi^*:=\bigl( \kir\times \kir)({\F} \bigr)^{\text{\textrm{op}}}\subset T^*E^*_1\times (-T^*E^*_2)\,,
\end{equation}
where $\kir\co T^*E\to T^*E^*$ is the Mackenzie--Xu diffeomorphism.
\end{theorem}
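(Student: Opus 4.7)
The plan is first to interpret the fiberwise map $\Phi\co E_1 \to E_2$ as a thick morphism $E_1 \tto E_2$ in the sense of the preceding subsection. As for any ordinary morphism of supermanifolds, this yields a thick morphism whose generating function is linear in the target momenta. Writing bundle coordinates $(x^a, u^i)$ on $E_1$ and $(x^a, v^\mu)$ on $E_2$ with conjugate fiber momenta $(p_a, \pi_\mu)$ on $T^*E_2$, this generating function is
\begin{equation*}
    S(x^a, u^i; p_a, \pi_\mu) = x^a p_a + \Phi^\mu(x,u)\,\pi_\mu\,,
\end{equation*}
and the corresponding Lagrangian $\Phi \subset T^*E_2 \times (-T^*E_1)$ is the conormal-type Lagrangian attached to the graph of $\Phi$.

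The second step is to apply the Mackenzie--Xu diffeomorphism $\kir\co T^*E \to T^*E^*$ to both factors. Since $\kir$ is, for a fixed sign convention, either a symplectomorphism or an antisymplectomorphism, the product $\kir \times \kir$ sends Lagrangians to Lagrangians, and so delivers a Lagrangian in $T^*E_2^* \times (-T^*E_1^*)$. Taking the opposite relation then yields a Lagrangian in $T^*E_1^* \times (-T^*E_2^*)$, which by definition is a thick morphism $\Phi^*\co E_2^* \tto E_1^*$. To extract the generating function of $\Phi^*$, I write the Mackenzie--Xu identifications on each factor, $u^i \leftrightarrow \mp\pi^{*i}$ and $\pi_\mu \leftrightarrow v^*_\mu$ (with signs dictated by the chosen convention), and substitute them into the constraints read off from $S$. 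A direct computation then shows that the generating function of $\Phi^*$ takes the form
\begin{equation*}
    T(x^a, v^*_\mu; p^{*a}, \pi^{*i}) = x^a p^{*a} \mp \Phi^\mu(x, \mp \pi^{*i})\,v^*_\mu\,,
\end{equation*}
which is a genuine formal power series in the target momenta $\pi^{*i}$ whenever $\Phi$ is nonlinear along the fibers.

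The four remaining assertions are then checked by inspection of this generating function. The base variables $x^a$ are preserved and the nonlinearity is confined to the fiber variables, so $\Phi^*$ respects the bundle projections and is fiberwise in the natural sense. When $\Phi$ is fiberwise-linear, $\Phi^\mu(x,u) = A^\mu_i(x)\,u^i$, the expression $T$ becomes linear in $\pi^{*i}$, so $\Phi^*$ reduces to an ordinary map, and reading off the coefficient gives $u^*_i = A^\mu_i(x)\,v^*_\mu$, that is, the classical adjoint. Functoriality $(\Phi_1 \circ \Phi_2)^* = \Phi_2^* \circ \Phi_1^*$ follows from the naturality of $\kir$: the set-theoretic composition of the two graph Lagrangians is intertwined by $\kir \times \kir \times \kir$ with the composition of their $E^*$-side images, and the \emph{op} operation reverses the order. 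The principal technical obstacle is the consistent bookkeeping of signs arising from (a) the $\Z$-graded parities of the fiber coordinates in the super setting, (b) the symplectic-versus-antisymplectic ambiguity of $\kir$, and (c) the orientation change effected by passing to the opposite relation; these signs are precisely what is needed to make functoriality hold and must be fixed once and for all at the outset.
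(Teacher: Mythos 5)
Your proposal is correct and follows exactly the route the paper itself takes: the paper's ``proof'' consists precisely of the displayed construction $\Phi^*=\bigl((\kir\times\kir)(\Phi)\bigr)^{\mathrm{op}}$, and your argument simply unpacks it in bundle coordinates, computes the resulting generating function, and verifies the fiberwise, linear-case, and functoriality claims from it. The only caveat worth keeping in mind is that for the pulled-back generating function to be a formal power series in the target momenta one should read $\Phi^\mu(x,u)$ as its fiberwise Taylor expansion, which is implicit in the paper's formal setting.
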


\begin{corollary}[pushforward]
There is a     pushforward of functions on the dual bundles  (as the pullback by the adjoint)
 \begin{equation}
  \Phi_*:=(\Phi^*)^*\co \funn(E_1^*)\to \funn(E_2^*)
 \end{equation}
that maps  the subspace  of sections    $\funn(M,E_1)\subset \funn(E_1^*)$  to    $\funn(M,E_2)$. It   coincides on sections  with  the obvious pushforward
$\vp\mapsto \Phi\circ \vp$.
\end{corollary}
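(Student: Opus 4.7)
The plan is to work in adapted local coordinates, write down the generating function of the adjoint thick morphism $\Phi^*$ via Mackenzie--Xu, then apply the pullback formula to a fiberwise-linear function and observe that the implicit equations admit an exact (non-perturbative) solution in this special case, so that the pushforward of a section is computed in closed form.

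Choose local coordinates $x^a$ on $M$, fiber coordinates $u^i$ on $E_1$ and $v^\mu$ on $E_2$, with dual fiber coordinates $u^*_i$ on $E_1^*$ and $v^*_\mu$ on $E_2^*$, and denote by $\bar p_a$, $U^i$, $V^\mu$ the momenta conjugate to $x^a$, $u^*_i$, $v^*_\mu$ respectively on $T^*E_1^*$ and $T^*E_2^*$. Write $\Phi$ fiberwise as $v^\mu=\Phi^\mu(x,u)$ (base map identity). A section $\vp\co M\to E_1$ corresponds to the fiberwise-linear function $f_\vp(x,u^*):=(-1)^{\itt\vpt}\vp^i(x)\,u^*_i$ on $E_1^*$, and the subspace $\funn(M,E_1)\subset\funn(E_1^*)$ is precisely this linear locus.

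Since $\Phi$ regarded as a thick morphism $E_1\tto E_2$ has generating function $S_\Phi=x^a\bar p_a+\Phi^\mu(x,u)\,Q_\mu$, an application of $\kir\times\kir$ (which swaps each vector-bundle coordinate with its conjugate momentum, relabelling them on the dual side) transports the associated Lagrangian into one whose generating function as a thick morphism $\Phi^*\co E_2^*\tto E_1^*$ reads
\begin{equation*}
\tilde S(x,v^*;\bar p_a,U^i)=x^a\bar p_a+\Phi^\mu(x,U)\,v^*_\mu\,.
\end{equation*}
The pullback formula~\eqref{eq.pull}--\eqref{eq.pullequations} applied to $\Phi^*$ gives, for any $f\in\funn(E_1^*)$,
\begin{equation*}
\Phi_*[f](x,v^*)=f(x,u^*)+\Phi^\mu(x,U)\,v^*_\mu-u^*_iU^i\,,
\end{equation*}
where $U^i$ and $u^*_i$ are determined from $U^i=(-1)^{\itt}\der{f}{u^*_i}(x,u^*)$ and $u^*_i=(-1)^{\itt}\der{\Phi^\mu}{U^i}(x,U)\,v^*_\mu$.

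Specialise to $f=f_\vp$. Then $\partial f_\vp/\partial u^*_i$ depends only on $x$, so the implicit equation for $U$ is solved \emph{exactly} (no iteration) by $U^i=\vp^i(x)$; consequently $u^*_i$ is the explicit linear expression $u^*_i=(-1)^{\itt}\der{\Phi^\mu}{u^i}(x,\vp(x))v^*_\mu$, and the terms $f_\vp(x,u^*)$ and $u^*_iU^i$ cancel identically. What remains is $\Phi^\mu(x,\vp(x))\,v^*_\mu$, which is precisely $f_{\Phi\circ\vp}(x,v^*)$, so $\Phi_*[f_\vp]=f_{\Phi\circ\vp}$. This shows both that $\Phi_*$ sends $\funn(M,E_1)$ into $\funn(M,E_2)$ and that it agrees on sections with the obvious pushforward $\vp\mapsto\Phi\circ\vp$.

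The only real obstacle is bookkeeping of the signs $(-1)^{\itt}$ in the pullback formula together with the sign/symplectic-vs-antisymplectic ambiguity of the Mackenzie--Xu diffeomorphism; these must be chosen consistently so that the cancellation $f_\vp(x,u^*)-u^*_iU^i=0$ holds on the nose. Once a convention is fixed, the argument is algebraic: the non-perturbative collapse happens because fiberwise-linear $f$ has $u^*$-independent $u^*$-derivatives, which truncates the formal iteration defining the pullback to a single step.
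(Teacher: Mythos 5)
Your computation is correct, and it is essentially the intended argument: the paper states this corollary without proof (deferring to the references), and the natural verification is exactly the one you give --- write the generating function of the adjoint produced by the Mackenzie--Xu swap, note that it is linear in the source momenta $v^*_\mu$ but generally non-linear in the target fiber momenta $U^i$, and then observe that for a fiberwise-linear test function $f_\vp$ the implicit system in the pullback formula closes after one step, with $f_\vp(x,u^*)-u^*_iU^i$ cancelling and leaving $\Phi^\mu(x,\vp(x))v^*_\mu=f_{\Phi\circ\vp}$. Your closing remark is the right caveat: the only content beyond bookkeeping is that the sign conventions in the Mackenzie--Xu diffeomorphism and in the pullback equations must be fixed compatibly (which the paper's "up to some choice of signs" in the Mackenzie--Xu theorem permits), and with that done the argument also immediately gives the claimed preservation of the subspace of sections, since the output is again fiberwise linear.
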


\subsubsection{The fermionic analog: parity reversed adjoint}

For the fermionic version (for adjoint combined with parity reversion in vector bundles), we need the following analog of the Mackenzie--Xu theorem:

\begin{theorem}[\cite{tv:graded}]
For a vector bundle  $E$, there is a diffeomorphism
\begin{equation}\label{eq.oddmx}
    \Pi T^*E\cong\Pi T^*(\Pi E^*)\,,
\end{equation}
defined canonically up to a choice of signs, and which  is an (anti)symplectomorphism.
\end{theorem}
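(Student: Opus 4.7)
The plan is to exhibit the diffeomorphism explicitly in local coordinates and then verify invariance under changes of trivialization, mirroring the classical Mackenzie--Xu construction in the odd setting. Fix coordinates $x^a$ on $M$ and linear fiber coordinates $u^i$ on $E$, of parities $\tilde a$ and $\tilde\imath$. These induce canonical coordinates $(x^a,u^i,x^*_a,u^*_i)$ on $\Pi T^*E$, where $u^*_i$ is the antiparity-reversed conjugate momentum of parity $\tilde\imath+1$; canonical fiber coordinates $\eta_i$ on $\Pi E^*$ of parity $\tilde\imath+1$; and canonical coordinates $(x^a,\eta_i,\tilde x^*_a,\eta^{*i})$ on $\Pi T^*(\Pi E^*)$, with $\eta^{*i}$ of parity $\tilde\imath$.

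First I would define the candidate map by the local substitution $\eta_i:=u^*_i$, $\eta^{*i}:=(-1)^{\varepsilon_1}u^i$, $\tilde x^*_a:=(-1)^{\varepsilon_2}x^*_a$, with signs $\varepsilon_1,\varepsilon_2$ to be fixed. The parities are consistent: $(\eta_i,u^*_i)$ are both of parity $\tilde\imath+1$ and $(\eta^{*i},u^i)$ are both of parity $\tilde\imath$, so the identification is well-defined on each chart. Under a change of fiber trivialization $u^i=T^i_{i'}(x)u^{i'}$, the $\Pi T^*E$-coordinates transform by $u^*_{i'}=T^i_{i'}u^*_i$ together with a cross term $x^*_{a'}=(\partial x^a/\partial x^{a'})x^*_a+(\partial T^i_{i'}/\partial x^{a'})\,u^{i'}u^*_i$; on $\Pi T^*(\Pi E^*)$ the corresponding laws read $\eta_{i'}=T^i_{i'}\eta_i$, $\eta^{*i}=T^i_{i'}\eta^{*i'}$, with a cross term for $\tilde x^*_{a'}$ involving $\partial_{x^{a'}}(T^{-1})^{i'}_i$. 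The elementary identity $\partial_{x^{a'}}(T^{-1})^{i'}_i=-(T^{-1})^{i'}_j(\partial_{x^{a'}}T^j_{j'})(T^{-1})^{j'}_i$ converts the latter cross term into exactly the former under the identifications $u^i\leftrightarrow\eta^{*i}$, $u^*_i\leftrightarrow\eta_i$; matching signs then pins down $\varepsilon_1,\varepsilon_2$ and shows compatibility with the transition functions, so the local map patches to a globally defined diffeomorphism.

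Next I would verify the symplectic statement by comparing canonical odd $1$-forms. Writing them locally as $\lambda_E=x^*_a\,dx^a+(-1)^{\tilde\imath}u^*_i\,du^i$ on $\Pi T^*E$ and $\lambda_{\Pi E^*}=\tilde x^*_a\,dx^a+(-1)^{\tilde\imath+1}\eta^{*i}\,d\eta_i$ on $\Pi T^*(\Pi E^*)$, the substitution above yields $\lambda_{\Pi E^*}=\pm\lambda_E$ modulo an exact term; hence the odd symplectic forms $\omega=d\lambda$ agree up to an overall sign, giving either a symplectomorphism or an antisymplectomorphism. The residual sign freedom visible in the statement is precisely the freedom in choosing conventions for the canonical $1$-form on an antitangent bundle, and also accounts for the ``choice of signs'' under which the diffeomorphism is canonical.

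The main obstacle is the coordinate-invariance step: sign tracking in the odd setting is more delicate than in the classical Mackenzie--Xu case, because each reordering of odd coordinates introduces a sign governed by parities $\tilde\imath$ and $\tilde\imath+1$, and the derivatives $\partial T^i_{i'}/\partial x^{a'}$ themselves carry parity $\tilde a+\tilde\imath+\tilde\imath\,'$. Once these signs are correctly bookkept, the consistency of the choice of $\varepsilon_1,\varepsilon_2$ on triple overlaps reduces to the cocycle identity satisfied by $T^i_{i'}$, so no new obstruction arises beyond what is already encoded in the vector bundle structure of $E$.
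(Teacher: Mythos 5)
Your proposal is correct and follows essentially the same route as the cited construction: the paper does not reprove this theorem but describes the even Mackenzie--Xu map in exactly these terms (``the exchange of $\x^i,\pi_i$ with $\pi^i,\h_i$''), and your odd version --- identifying fiber coordinates with the conjugate antimomenta, using $\partial_{a'}(T^{-1})=-T^{-1}(\partial_{a'}T)T^{-1}$ to match the cross terms in the base antimomenta, and comparing Liouville forms up to an exact term $d(u^iu^*_i)$ to get the (anti)symplectomorphism --- is the standard argument. No gaps.
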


\begin{theorem}[``antiadjoint'']
For any  fiberwise, in general nonlinear, map of vector bundles $\Phi\co E_1\to E_2$,
there is   an  odd  thick morphism, which we call  the (fiberwise)  \emph{antiadjoint},
\begin{equation}
    \Phi^{*\Pi}\co \Pi E_2^*\oto \Pi E_1^*\,.
\end{equation}
It is an ordinary map and coincides with the usual adjoint combined with parity reversion if the map $\Phi$ is fiberwise-linear. The equality      $(\F_1\circ \F_2)^{*\Pi}=\F_2^{*\Pi}\circ \F_1^{*\Pi}$ holds.
Construction:
\begin{equation}
    \Phi^{*\Pi}:=\bigl( \okir\times \okir)({\F} \bigr)^{\text{op}}\subset \Pi T^*(\Pi E^*_1)\times (-\Pi T^*(\Pi E^*_2))\,,
\end{equation}
where $\okir\co \Pi T^*E\to \Pi T^*(\Pi E^*)$ is the odd analog~\eqref{eq.oddmx} of the Mackenzie--Xu diffeomorphism.
\end{theorem}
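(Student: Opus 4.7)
The plan is to mirror the proof of the bosonic ``adjoint'' theorem stated just above, with the Mackenzie--Xu diffeomorphism $\kir$ replaced systematically by its odd counterpart $\okir\co \Pi T^*E\to \Pi T^*(\Pi E^*)$. The structural ingredients are already in place: $\okir$ is an (anti)symplectomorphism of odd symplectic manifolds, so pushforward by $\okir\times\okir$ sends Lagrangian submanifolds w.r.t.\ $\o_2-\o_1$ to Lagrangian submanifolds of the appropriate difference form, and the operation $(-)^{\mathrm{op}}$ simply swaps the roles of source and target at the level of canonical relations.

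First I would observe that every fiberwise (not necessarily linear) smooth map $\F\co E_1\to E_2$ canonically defines an odd thick morphism $\widetilde \F\co E_1\oto E_2$, namely the ``odd graph'' Lagrangian $\widetilde\F\subset \Pi T^*E_2\times(-\Pi T^*E_1)$ specified by the odd generating function $S(x,y^*)=\F^i(x)\,y^*_i$ that is linear in the anticotangent momenta $y^*_i$. Applying $\okir\times\okir$ transports $\widetilde\F$ into a Lagrangian submanifold of $\Pi T^*(\Pi E_2^*)\times(-\Pi T^*(\Pi E_1^*))$, and reversal of factors yields a Lagrangian submanifold of $\Pi T^*(\Pi E_1^*)\times(-\Pi T^*(\Pi E_2^*))$, which is the required candidate for $\F^{*\Pi}\co \Pi E_2^*\oto \Pi E_1^*$. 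The verification then splits into three independent checks. (i) The image Lagrangian admits an odd generating function of the admissible form, so that $\F^{*\Pi}$ is indeed an odd thick morphism in the sense of the definition; this is extracted from the explicit coordinate expression for $\okir$ (using coordinates $x^a,u^i,p_a,\pi_i$ on $\Pi T^*E$ versus $x^a,u^*_i,p_a,\pi^{*i}$ on $\Pi T^*(\Pi E^*)$) combined with the fact that $\F$ is fiberwise. (ii) In the fiberwise-linear case $\F(x,u)=(x,\F^i_j(x)u^j)$, a direct computation recovers the classical fiberwise adjoint composed with parity reversion, by exactly the calculation that gives the bosonic statement. (iii) Functoriality $(\F_1\circ\F_2)^{*\Pi}=\F_2^{*\Pi}\circ\F_1^{*\Pi}$ follows because $\okir$ is natural with respect to morphisms of vector bundles, so set-theoretic composition of canonical relations commutes with $\okir\times\okir$, while the operation $(-)^{\mathrm{op}}$ converts composition into its reverse.

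The main obstacle I anticipate is step (i): one must verify that the transported Lagrangian is still of the ``graph plus deformation'' type admitting an odd generating function $S^{*\Pi}(\x,u^*,\dots)$ formal power series in the momenta conjugate to the fibre coordinates on $\Pi E_1^*$. The fiberwise hypothesis on $\F$ is essential here, since otherwise the image Lagrangian need not project diffeomorphically onto the required factor. A secondary difficulty is sign tracking: the odd symplectic form on $\Pi T^*(\Pi E^*)$ differs in parity from its bosonic cousin, and the natural conventions for $\okir$ carry sign ambiguities that must be fixed consistently so that the functoriality equation holds on the nose rather than up to an overall sign. Once these conventions are pinned down, the argument is formally parallel to the bosonic one, and all nontrivial content is already encoded in the odd Mackenzie--Xu theorem quoted from \cite{tv:graded}.
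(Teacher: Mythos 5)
Your proposal follows exactly the route the paper takes: the construction $\Phi^{*\Pi}=\bigl(\okir\times\okir\bigr)(\F)^{\mathrm{op}}$ is precisely the one given in the theorem statement, obtained by transporting the odd graph of $\F$ via the odd Mackenzie--Xu diffeomorphism and reversing factors, and your checks (i)--(iii), including the observation that the fiberwise hypothesis is what guarantees an admissible odd generating function after the transport, are the standard verifications paralleling the bosonic case. This matches the paper's approach; no further comment is needed.
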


\begin{corollary}[pushforward     of functions on the antidual bundles]
For the antidual bundles, there is a     pushforward
 \begin{equation}
  \Phi_*^{\Pi}:=(\Phi^{*\Pi})^*\co \pfunn(\Pi E_1^*)\to \pfunn(\Pi E_2^*)\,.
 \end{equation}
It maps  the subspace  of sections   $\funn(M,E_1)\subset \pfunn(\Pi E_1^*)$  to    $\funn(M,E_2)$. It   coincides on sections  with
$\vp\mapsto \Phi\circ \vp$.
\end{corollary}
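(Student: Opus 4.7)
The plan is to reduce the statement to a direct local-coordinate computation using the definition $\Phi_*^{\Pi} = (\Phi^{*\Pi})^*$ and the generating-function formula for the odd pullback. First I would fix local coordinates $x^a$ on $M$, fiber coordinates $u^i$ on $E_1$, $v^{\mu}$ on $E_2$, writing $\Phi$ fiberwise as $v^{\mu} = \Phi^{\mu}(x,u)$. Under the identification $\funn(M,E_1)\hookrightarrow \pfunn(\Pi E_1^*)$, a section $\varphi=\varphi^i(x)e_i$ corresponds to the odd, $u^*$-linear function $\tilde\varphi(x,u^*) = u^*_i\varphi^i(x)$, and similarly for sections of $E_2$. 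So the claim is equivalent to showing that $(\Phi^{*\Pi})^*\bigl[u^*_i\varphi^i(x)\bigr] = v^*_{\mu}\Phi^{\mu}(x,\varphi(x))$.

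Next I would pin down the generating function of $\Phi^{*\Pi}$. The odd Mackenzie--Xu map $\okir\co \Pi T^*E \to \Pi T^*(\Pi E^*)$ exchanges the fiber coordinates of $E$ with the fiber antimomenta of $\Pi E^*$. Applied to the Lagrangian $\graph(\Phi)\subset \Pi T^*E_2\times(-\Pi T^*E_1)$ (with generating function $\Phi^{\mu}(x,u)\tilde v_{\mu}$), this produces, after taking the opposite, a Lagrangian in $\Pi T^*(\Pi E_1^*)\times(-\Pi T^*(\Pi E_2^*))$ whose generating function, in the sense of an odd thick morphism $\Pi E_2^*\oto \Pi E_1^*$, takes the form
\begin{equation}
    S(x,v^*;\tilde y_a,\tilde u^i) \;=\; \tilde y_a x^a \,+\, \Phi^{\mu}(x,\tilde u)\,v^*_{\mu}\,,
\end{equation}
where $\tilde y_a,\tilde u^i$ are the antimomenta on the target $\Pi E_1^*$. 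This reduces to the usual adjoint in the fiberwise-linear case $\Phi^{\mu} = A^{\mu}_i(x)u^i$, providing a sanity check.

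With the generating function in hand, I apply the pullback formula $\Psi^*[g] = g + S - y^a\tilde y_a - u^*_i\tilde u^i$ to $g = u^*_i\varphi^i(y)$. The implicit equations yield $\tilde u^i = \partial g/\partial u^*_i = \varphi^i(y)$ and $y^a = x^a$ (from the $\tilde y_a x^a$ term in $S$). Two clean cancellations then occur: after substitution $u^*_i\tilde u^i = u^*_i\varphi^i(x) = g(y,u^*)$, so these terms cancel in the pullback formula; and the base contributions $\tilde y_a x^a - y^a\tilde y_a$ cancel because $y^a = x^a$. What survives is precisely $\Phi^{\mu}(x,\varphi(x))\,v^*_{\mu}$, which is $v^*$-linear and corresponds to the section $\Phi\circ\varphi$ of $E_2$. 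This simultaneously establishes that sections are mapped to sections and identifies the restriction.

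The main obstacle I expect is bookkeeping of signs and parities in setting up $S$: the odd Mackenzie--Xu identification and the conventions for left-versus-right derivatives in the equations $\tilde u^i = \partial g/\partial u^*_i$ and $u^*_i = \pm\partial S/\partial\tilde u^i$ each carry sign ambiguities, and these signs must combine consistently to produce the cancellations above. A practical way to fix them is to calibrate on the fiberwise-linear case, where $\Phi^{*\Pi}$ is an ordinary map and the answer is forced to be the usual adjoint; once the conventions are locked in there, the general computation proceeds as outlined. The functorial identity $(\Phi_1\circ\Phi_2)_*^{\Pi} = (\Phi_1)_*^{\Pi}\circ(\Phi_2)_*^{\Pi}$ on sections then follows either from the analogous property of the antiadjoint or, more directly, from the explicit formula $\widetilde{\Phi\circ\varphi}$ obtained.
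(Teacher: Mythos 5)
Your proof is correct and is essentially the argument the statement is meant to follow from (the paper itself states this corollary without proof, deferring to the references): one unpacks the generating function of the antiadjoint via the odd Mackenzie--Xu identification, getting $S=\tilde y_a x^a+\Phi^{\mu}(x,\tilde u)v^*_{\mu}$, and applies the odd pullback formula to the fiberwise-linear odd function $u^*_i\varphi^i$ representing a section. The two cancellations you identify are exactly the right ones, and calibrating the residual signs on the fiberwise-linear case, where the antiadjoint must reduce to the ordinary adjoint, is a sound way to fix the conventions.
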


\subsection{Application to $\Linf$-algebroids}

In this subsection, we construct a homotopy analog of the familiar relation between Lie algebras and linear Poisson brackets. Recall that a Lie algebra structure for a vector space  (i.e. a Lie bracket defined for its elements) is equivalent to a linear Poisson structure on the dual space (i.e. a Poisson bracket on functions on the dual space), known variably as ``Lie--Poisson'' or ``Berezin--Kirillov'' bracket. Also, a linear map of vector spaces is a Lie algebra homomorphism if and only if its adjoint (which is the map of the dual spaces in the opposite direction) is a Poisson map. In the supercase, to that one can add the similar statements for odd Poisson bracket on the antidual space. What we will do, we will give an analog for $\Linf$-algebroids. This will use $\Pinf$- and $\Sinf$-structures and require  the language of thick morphisms.

\subsubsection{Recollection: manifestations of an   $L_{\infty}$-algebroid structure}
\label{subsubsec.manifestlinf}

Let a (super) vector bundle $E\to M$ have a structure of a $\Linf$-algebroid. Recall from~\ref{subsubsec.linf-algd} that this means a sequence of brackets and a sequence of anchors satisfying certain properties, namely that the brackets define in the space of sections an $\Linf$-algebra structure    and the anchors appear in the Leibniz type formulas for the brackets (with respect to multiplication by functions). We have seen in~\ref{subsubsec.linf-algd}  that such a structure is encoded by a formal homological vector field on the supermanifold $\Pi E$. We can now include into consideration also the bundles $E^*$ and $\Pi E^*$:
\begin{center}
{ \unitlength=3pt
\begin{picture}(0,35)
\put(0,30)
    {\begin{picture}(0,0)
    \put(0,0){$E$}
    \put(-12,-20){$\Pi E$}
    \put(10,-20){$\Pi E^*$}
    \put(0,-30){$E^*$}
    {\thicklines
    \put(1,-2){\line(0,-1){25}} }
    \put(-5,-19){\line(1,0){5}}
    \put(2,-19){\line(1,0){8}}
    \put(0,-2){\line(-1,-2){7.5}}
    \put(2,-2){\line(1,-2){7.5}}
    \put(-7.5,-22){\line(1,-1){6}}
    \put(10,-22){\line(-1,-1){6}}
    \end{picture}}
\end{picture}
}
\end{center}

\begin{theorem} The following structures are equivalent:
\begin{enumerate}[i)]
  \item $\Linf$-algebroid structure in vector bundle $E\to M$
  \item $\Pinf$-structure on supermanifold $E^*$
  \item $\Sinf$-structure on supermanifold $\Pi E^*$
  \item $Q$-structure (homological vector field) on supermanifold $\Pi L$
\end{enumerate}
(The $\Pinf$- and $\Sinf$-structures must have certain weights, see below.)
\end{theorem}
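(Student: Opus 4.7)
The plan is to reduce everything to a single object, the homological vector field $Q$ on $\Pi E$ of (iv), and then transport it via the two Mackenzie--Xu isomorphisms to obtain the master Hamiltonians of (ii) and (iii). The equivalence (i) $\Leftrightarrow$ (iv) is already established in~\ref{subsubsec.linf-algd} (the Vaintrob-type theorem for $\Linf$-algebroids), so the substance of the proof lies in matching (iv) with (ii) and (iii) through canonical (anti)symplectomorphisms of doubled cotangent-type bundles.

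First I would reformulate a vector field $Q\in\Vect(\Pi E)$ as a Hamiltonian function in two ways: as an odd function $H_Q$ on $T^*(\Pi E)$, linear in the fiber momenta, and as an even function $\phi_Q$ on $\Pi T^*(\Pi E)$, linear in the fiber antimomenta. In both reformulations, the condition $Q^2=0$, equivalent to $[Q,Q]=0$, becomes the vanishing of the canonical self-bracket: $(H_Q,H_Q)=0$ for the canonical even Poisson bracket on $T^*(\Pi E)$, and $[\phi_Q,\phi_Q]=0$ for the canonical odd Schouten bracket on $\Pi T^*(\Pi E)$.

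Next, I would apply the Mackenzie--Xu theorem to $\Pi E$ in place of $E$, giving a symplectomorphism $\kir\co T^*(\Pi E)\to T^*((\Pi E)^*)=T^*(\Pi E^*)$. Transporting $H_Q$ along $\kir$ produces an odd function $H$ on $T^*(\Pi E^*)$ with $(H,H)=0$, i.e.\ an $\Sinf$-master Hamiltonian, which is (iii). Parallelly, the odd Mackenzie--Xu isomorphism from~\eqref{eq.oddmx}, applied again to $\Pi E$, yields $\okir\co \Pi T^*(\Pi E)\to \Pi T^*(\Pi(\Pi E)^*)=\Pi T^*(E^*)$. Transporting $\phi_Q$ along $\okir$ produces an even function $P$ on $\Pi T^*(E^*)$ with $[P,P]=0$, i.e.\ a $\Pinf$-master Hamiltonian on $E^*$, which is (ii). In both directions the construction is reversible because $\kir$ and $\okir$ are diffeomorphisms, and the linearity of $H_Q$ and $\phi_Q$ in the respective fiber coordinates translates, through the explicit coordinate description of the Mackenzie--Xu maps, into precisely the bundle-theoretic weight conditions on $H$ and $P$: namely $H$ is of weight $+1$ with respect to the natural grading of $\Pi E^*$, and $P$ is of weight $-1$ with respect to the natural grading of $E^*$ (these are the ``certain weights'' alluded to in the statement).

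The main obstacle I anticipate is bookkeeping: one must track parities and signs through both Mackenzie--Xu (anti)symplecto\-morphisms and verify that the master equations $(H,H)=0$ and $[P,P]=0$ are equivalent to $Q^2=0$ rather than to some twisted version. This amounts to checking that the (anti)Poisson brackets on the two sides of $\kir$ and $\okir$ are indeed intertwined up to the prescribed sign choice. The most efficient way is to write $Q$ in local coordinates $(x^a,\xi^i)$ on $\Pi E$, compute $H_Q=p_a Q^a(x,\xi)+\pi_i Q^i(x,\xi)$ and $\phi_Q$ analogously, apply the explicit coordinate formulas for $\kir$ and $\okir$ (which swap fiber coordinates with conjugate momenta up to parity shift), and read off that the resulting expressions are exactly the generic master Hamiltonians producing, via~\eqref{eq.sinf} and~\eqref{eq.pinf}, the higher brackets and anchors of the $\Linf$-algebroid on sections of $E$. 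The weight constraints emerge as an automatic by-product of this coordinate calculation, completing the four-way equivalence.
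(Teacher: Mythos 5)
Your proposal follows essentially the same route as the paper: the equivalence (i)\,$\Leftrightarrow$\,(iv) is taken from the Vaintrob-type description of $\Linf$-algebroids, the field $Q$ on $\Pi E$ is lifted to Hamiltonians linear in the (anti)momenta on $T^*(\Pi E)$ and $\Pi T^*(\Pi E)$, and these are transported by the even and odd Mackenzie--Xu (anti)symplectomorphisms to master Hamiltonians on $T^*(\Pi E^*)$ and $\Pi T^*(E^*)$, with $Q^2=0$ translating into the master equations because the lifts and the Mackenzie--Xu maps respect the canonical brackets up to sign. The only inaccuracy is in your statement of the ``certain weights'': the paper records them as the condition that the $n$-ary $\Sinf$- and $\Pinf$-brackets have weight $-n+1$ with respect to the vector-bundle grading (equivalently, the master Hamiltonians have weight $+1$ in the appropriate component of the double-vector-bundle bi-grading), whereas your claim that $P$ has weight $-1$ in the natural grading of $E^*$ does not hold, since $P$ is not homogeneous in that grading.
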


The quickest way to see that and to obtain explicit formulas is to use the Mackenzie--Xu theorem and its odd analog discussed above.
The cotangent bundle $T^*(\Pi E)$ and the anticotangent bundle $\Pi T^*(\Pi E)$ are double vector bundles:
\begin{equation}
    \begin{CD} T^*(\Pi E) @>>> \Pi E^*\\
                @VVV         @VVV\\
                \Pi E @>>>  M
    \end{CD}
\end{equation}
(and similarly for $\Pi T^*(\Pi E)$) and hence are naturally bi-graded. If   $x^a, \x^i$ are local coordinates on $\Pi E$ (where $\x^i$ are linear coordinates in the fibers), natural coordinates on $ T^*(\Pi E)$ will be $x^a, \x^i, p_a, \pi_i$ (here $p_a, \pi_i$ are the corresponding conjugate momenta). Similarly for $\Pi E^*$: $x^a,\h_i$ and $x^a,\h_i,p_a,\pi^i$. Up to signs, the Mackenzie--Xu transformation is the exchange of $\x^i,\pi_i$ with $\pi^i,\h_i$. The bi-grading is given by the two non-negative weights: $w_1=\#p_a+\#\pi_i=\#p_a+\#\h_i$ and $w_2=\#p_a+\#\x^i=\#p_a+\#\pi^i$. (Incidentally, their difference $w_2-w_1=\#\x^i-\#\pi_i$ is the physicists' ``ghost number'' $\gh$.)

A formal homological vector field
\begin{equation}
    Q=Q^a(x,\x)\der{}{x^a}+ Q^i(x,\x)\der{}{\x^i}
\end{equation}
on $\Pi E$ lifts to an odd Hamiltonian $H\in \fun(T^*(\Pi E))$, where $H=Q\cdot p$, i.e.
\begin{equation}
    H=Q^a(x,\x)p_a+ Q^i(x,\x)\pi_i\,.
\end{equation}
It has weights $w_1(H)=+1$ and $w_2(H)\geq 0$. The application of the Mackenzie--Xu transformation turns $H$ into $H^*\fun(T^*(\Pi E^*))$ of the same weights. Now $w_2$ is the grading by the degrees of the momenta on $\Pi E^*$, so $H^*$ generates an infinite number of odd brackets. Lifting of vector fields maps commutator to the canonical Poisson bracket and the Mackenzie--Xu transformation preserves the Poisson brackets (possibly up to a sign). Hence $[Q,Q]=0$ is equivalent to $(H,H)=0$ and   $(H^*,H^*)=0$. So we get an $\Sinf$-structure on $\Pi E^*$, which is the homotopy analog of the familiar Lie--Schouten bracket for Lie algebras or Lie algebroids. A function $f$ lifted from $\Pi E^*$ to $T^*(\Pi E^*)$ will have weights $w_1(f)=w(f)$ (its degree in $\h_i$) and $w_2(f)=0$. Note the canonical Poisson bracket on $T^*(\Pi E^*)$ has bi-weight $(-1,-1)$. From this we can see that $\Sinf$-brackets on $\Pi E^*$ induced by an $\Linf$-algebroid structure in $E$ will have weights $-n+1$ for an $n$-ary bracket, i.e. $+1$ for the $0$-bracket, $0$ for the $1$-bracket, $-1$ for the $2$-bracket, $-2$ for a $3$-bracket, and so on. (So actually we need to include these weights into the theorem.)

Similar argument applies to $E^*$, where a $\Pinf$-structure, which is  the homotopy  analog of the usual Lie--Poisson bracket, is obtained. Now there is a sequence of brackets of alternating parities, but they again will have  weights $-n+1$ for a bracket with $n$ arguments.

We observe that all these equivalent structures on the three neighbors, $\Pi E$, $\Pi E^*$ and $E^*$ are described basically by one geometric object, which only gets different manifestations.

\subsubsection{$L_{\infty}$-morphisms of Lie--Poisson and Lie--Schouten brackets}


Consider an $L_{\infty}$-morphism $\Phi\co E_1 \infto E_2$ of $L_{\infty}$-algebroids over a base $M$. It is given by a $Q$-map $\Phi\co \Pi E_1 \to \Pi E_2$. To simplify notation, we shall  suppress indications on parity reversion, i.e. write $\F$ instead of $\F^{\Pi}$ and $\F^*$ instead of $\F^{*\Pi}$.

\begin{theorem}
An $L_{\infty}$-morphism $\Phi\co E_1 \infto E_2$ over a base $M$ induces morphisms of the homotopy structures:
\begin{enumerate}[i)]
  \item $\Sinf$ thick morphism $\Phi^*\co \Pi E_2^* \tto \Pi E_2^*$
  \item $\Pinf$ odd thick morphism $\Phi^*\co   E_2^* \oto   E_2^*$
\end{enumerate}
This gives $L_{\infty}$-morphisms of the  homotopy Lie--Schouten  and homotopy Lie--Poisson brackets, respectively (by pushforward):
\begin{equation}
    \Phi_*\co \funn(\Pi E_1^*) \to \funn(\Pi E_2^*)
\end{equation}
and
\begin{equation}
    \Phi_*\co \pfunn(E_1^*) \to \pfunn(E_2^*)\,.
\end{equation}
\end{theorem}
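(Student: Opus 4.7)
The plan is to build both thick morphisms from the data of $\Phi$ by applying, respectively, the (classical) adjoint and the antiadjoint constructions of the preceding subsection to $\Phi\co \Pi E_1\to \Pi E_2$, and then to verify that the resulting (odd) thick morphisms satisfy the Hamilton--Jacobi conditions \eqref{eq.schouten} and \eqref{eq.poisson} characterising $\Sinf$- and $\Pinf$-thick morphisms. Once that is done, the $\Linf$-morphism statements for $\Phi_*$ are immediate from the two key theorems on pullbacks by homotopy Schouten/Poisson thick morphisms, combined with the definition $\Phi_*=(\Phi^*)^*$ of pushforward from the adjoint construction.

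For step one, I would encode the two $Q$-structures as odd Hamiltonians $H_i=Q_i\cdot p\in\fun(T^*(\Pi E_i))$ (and their parity-shifted counterparts in $\fun(\Pi T^*(\Pi E_i))$). As explained in~\ref{subsubsec.manifestlinf}, the Mackenzie--Xu diffeomorphism $\kir\co T^*(\Pi E)\to T^*(\Pi E^*)$ transports $H_i$ to the master Hamiltonian of the $\Sinf$-structure on $\Pi E_i^*$, while its odd analogue $\okir$ transports $H_i$ to the master Hamiltonian of the $\Pinf$-structure on $E_i^*$. Simultaneously, $\Phi$ viewed as a fibrewise (non-linear) map between super vector bundles over $M$ has a classical adjoint $\Phi^*\co \Pi E_2^*\tto \Pi E_1^*$ and an antiadjoint $\Phi^{*\Pi}\co E_2^*\oto E_1^*$, given by applying $\kir\times\kir$ and $\okir\times\okir$ to the canonical relation underlying $\Phi$ and taking the opposite, exactly as in the adjoint/antiadjoint theorems above.

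The key algebraic input is the observation that $\Phi\co \Pi E_1\to \Pi E_2$ is a $Q$-map precisely when
\begin{equation}
\pi_2^*H_2\big|_{T^*\Phi}=\pi_1^*H_1\big|_{T^*\Phi}
\end{equation}
on the canonical relation $T^*\Phi\subset T^*(\Pi E_2)\times(-T^*(\Pi E_1))$; this is just the coordinate-free rewording of~\eqref{eq.qmorphcoord}, since in local coordinates the relation reads $p_a=q_i\,\p_a\Phi^i$, and substituting into $H_1=Q_1^ap_a$ gives $H_2=Q_2^iq_i$ iff $Q_1^a\p_a\Phi^i=Q_2^i\circ\Phi$. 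Because $\kir$ and $\okir$ are (anti)symplectomorphisms, they preserve this ``equals on the canonical relation'' statement; hence the transported Hamiltonians $\kir_*H_i$ and $\okir_*H_i$ agree on the Lagrangians that define $\Phi^*$ and $\Phi^{*\Pi}$ respectively. Written in terms of the generating function $S$ of the resulting thick morphism, this is exactly~\eqref{eq.schouten} (resp.~\eqref{eq.poisson}), so $\Phi^*$ is $\Sinf$ and $\Phi^{*\Pi}$ is $\Pinf$. Invoking the key theorems of this section, the pullbacks $(\Phi^*)^*=\Phi_*$ and $(\Phi^{*\Pi})^*=\Phi_*$ are then $\Linf$-morphisms of the homotopy Lie--Schouten and homotopy Lie--Poisson brackets, respectively, which is the desired conclusion.

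The main obstacle, in my view, is bookkeeping of signs, weights and parity conventions. The Mackenzie--Xu diffeomorphisms $\kir$ and $\okir$ are defined only up to signs, and one must fix these consistently so that they are genuine (anti)symplectomorphisms and so that lifts of vector fields to Hamiltonians are carried to lifts of vector fields (not to their negatives); similarly, the interaction of the duality $(\Pi E)^*\cong\Pi E^*$ and $\Pi(\Pi E)^*\cong E^*$ with parity reversion in the antiadjoint must be tracked carefully to make sure the arrows land on $\Pi E_i^*$ and $E_i^*$ as stated. A secondary subtlety is checking that, since $\Phi$ is only fibrewise (and generally non-linear), the image of the canonical relation $T^*\Phi$ under $\kir\times\kir$ is still of the generating-function type $S(x,q)$ used in the definition of a thick morphism, rather than merely an abstract Lagrangian; this reduces to the fact that $\Phi$ has a well-defined underlying map of bases, equal to $\id_M$, so the projection of the transported Lagrangian onto $E_1^*\times E_2^*$ (or its parity-reversed version) is a graph-like Lagrangian in the required sense.
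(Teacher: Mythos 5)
Your proposal is correct and follows exactly the route the paper intends: the theorem is stated without a written-out proof precisely because the preceding subsections (the adjoint/antiadjoint as thick morphisms via the Mackenzie--Xu diffeomorphisms, and the transport of the master Hamiltonians $H=Q\cdot p$ to the $\Sinf$/$\Pinf$ master Hamiltonians on $\Pi E^*$ and $E^*$) are assembled for this purpose, and your key observation --- that the $Q$-map condition is the equality of the lifted Hamiltonians on the canonical relation, which the (anti)symplectomorphisms $\kir\times\kir$ and $\okir\times\okir$ carry to the Hamilton--Jacobi conditions~\eqref{eq.schouten} and~\eqref{eq.poisson} for the (anti)adjoint --- is the intended argument. (You also correctly read the target of the arrows as $\Pi E_1^*$ and $E_1^*$, fixing an evident typo in the statement.)
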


\subsubsection{Example: $L_{\infty}$-morphisms induced by the  anchor}
Recall that the ``higher anchors''
\begin{equation}
    E\times_M\cdots \times_M E\to TM
\end{equation}
for an  $L_{\infty}$-algebroid $E\to M$ assemble into a single (nonlinear, in general) bundle map
\begin{equation}
    a\co \Pi E \to \Pi TM
\end{equation}
over $M$ (which we also refer to as \textbf{anchor}).
\begin{corollary}
\label{coro1}
 The anchor for an $L_{\infty}$-algebroid $E\to M$ induces    $L_{\infty}$-morphisms
 \begin{equation}
   \funn(\Pi E^*)\to \funn(\Pi T^*M)
 \end{equation}
of the homotopy  Schouten brackets,
and
\begin{equation}
   \pfunn(E^*)\to \pfunn(T^*M)\,.
 \end{equation}
of the homotopy  Poisson brackets.
\end{corollary}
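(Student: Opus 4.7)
The plan is to exhibit the anchor itself as an $\Linf$-morphism of $\Linf$-algebroids over $M$ and then to invoke the theorem on $\Linf$-morphisms of homotopy Lie--Schouten and Lie--Poisson brackets from the previous subsubsection. First I would equip $TM\to M$ with its canonical Lie algebroid structure: the commutator bracket on vector fields together with the identity anchor. This is in particular an $\Linf$-algebroid, and the corresponding homological vector field on $\Pi TM$ is just the de~Rham differential $d$. Under the correspondence recalled in~\ref{subsubsec.manifestlinf}, the induced $\Sinf$-structure on $\Pi T^*M$ is the canonical Schouten bracket on multivector fields and the induced $\Pinf$-structure on $T^*M$ is the canonical Poisson bracket on the cotangent bundle, so that these are exactly the target brackets appearing in the statement.

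Next I would recall the example preceding the previous theorem: the collection of higher anchors $E\times_M\cdots\times_M E\to TM$ of an $\Linf$-algebroid $E\to M$ assembles into an $\Linf$-morphism $a\co E\infto TM$, i.e.\ the (in general non-linear) fiberwise bundle map $a\co \Pi E\to \Pi TM$ is a $Q$-map intertwining the homological vector field on $\Pi E$ (which encodes the $\Linf$-algebroid structure on $E$) with the de~Rham differential on $\Pi TM$. Granting this, the corollary is immediate from the previous theorem applied to $\Phi=a$: the adjoint $a^*\co \Pi T^*M\tto \Pi E^*$ is an $\Sinf$ thick morphism and the antiadjoint $a^{*\Pi}\co T^*M\oto E^*$ is a $\Pinf$ odd thick morphism, and their pullbacks produce the required $\Linf$-morphisms
\begin{equation*}
a_*\co \funn(\Pi E^*)\to \funn(\Pi T^*M) \quad\text{and}\quad a_*\co \pfunn(E^*)\to \pfunn(T^*M)
\end{equation*}
of homotopy Schouten and homotopy Poisson brackets, respectively.

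The only step that requires genuine work, and the one I expect to be the main obstacle, is verifying that the collection of higher anchors really does assemble into a $Q$-map $a\co \Pi E\to \Pi TM$. The cleanest approach is to unpack in local coordinates the Leibniz-type identities satisfied by the higher anchors and to match them against the $\f$-relatedness condition~\eqref{eq.qmorphcoord} for the homological vector field on $\Pi E$ and the de~Rham differential on $\Pi TM$. This is the $\Linf$-algebroid analogue of the classical fact that the anchor of an ordinary Lie algebroid is a Lie algebra homomorphism into the vector fields on $M$; once it is in place, no further computation is needed, since the (anti)adjoint construction and the preceding theorem already package the transition from a $Q$-map of parity-reversed algebroid bundles to the associated $\Linf$-morphism of homotopy brackets.
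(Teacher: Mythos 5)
Your proposal is correct and takes essentially the same route as the paper: the corollary is obtained by applying the theorem of the preceding subsubsection to the anchor $a\co \Pi E\to \Pi TM$, which the paper has already recorded (as the Example in~\ref{subsubsec.linf-algd}) as an $\Linf$-morphism of $\Linf$-algebroids, with the canonical Schouten and Poisson brackets on $\Pi T^*M$ and $T^*M$ arising as the induced structures for the standard Lie algebroid $TM$. The verification you single out as the main obstacle is precisely that cited example, so no further work is needed beyond invoking it.
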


\subsubsection{Application to higher Koszul brackets for a homotopy Poisson manifold}
Let $M$ be a $\Pinf$-manifold.
By applying the above to the $\Linf$-algebroid structure induced in $T^*M$ (see~\cite{tv:higherpoisson}) we arrive at the following statement.
\begin{corollary}
On a homotopy Poisson manifold $M$, there is an $L_{\infty}$-morphism
 \begin{equation}
 \boldsymbol\Omega(M)= \funn(\Pi TM)\to \funn(\Pi T^*M)=\boldsymbol{\mathfrak{A}}(M)\,,
 \end{equation}
between the higher Koszul brackets on  forms  induced by the homotopy Poisson structure    and  the canonical Schouten bracket on multivector fields.
\end{corollary}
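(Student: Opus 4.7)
The plan is to obtain the statement as a direct specialization of Corollary~\ref{coro1} applied to the $\Linf$-algebroid structure on $E=T^*M$ canonically determined by the homotopy Poisson structure on $M$.

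First, I would invoke the construction of~\cite{tv:higherpoisson}: any $\Pinf$-manifold $M$, with even master Hamiltonian $P\in\fun(\Pi T^*M)$ satisfying $[P,P]=0$, carries a canonical $\Linf$-algebroid structure on $T^*M\to M$ that generalizes the Koszul--Lichnerowicz--Magri cotangent Lie algebroid of an ordinary Poisson manifold. Under the dictionary of Subsection~\ref{subsubsec.manifestlinf}, this $\Linf$-algebroid corresponds to the odd master Hamiltonian on $T^*(\Pi T^*M)$ obtained from $P$ by the (odd) Mackenzie--Xu transformation; its higher anchors assemble into the nonlinear bundle map $a\co\Pi T^*M\to\Pi TM$ whose Taylor components in the fiber momenta are read off from $P$.

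Next, I would apply Corollary~\ref{coro1} to $E=T^*M$. Since $E^*=TM$ and hence $\Pi E^*=\Pi TM$, one gets at once an $\Linf$-morphism of homotopy Schouten brackets
\begin{equation*}
\funn(\Pi TM)\to \funn(\Pi T^*M),
\end{equation*}
namely the pushforward along the $\Sinf$ thick morphism obtained as the non-linear adjoint of the anchor $a$. Its source and target are exactly $\boldsymbol\Omega(M)$ and $\boldsymbol{\mathfrak{A}}(M)$.

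It remains only to identify the two $\Sinf$-structures. On the target side, the algebroid in question is the tangent Lie algebroid $TM$ with identity anchor; the associated $\Sinf$-structure on $\funn(\Pi T^*M)$ is concentrated in the binary bracket, which is the canonical Schouten bracket on multivector fields (the higher brackets vanish). On the source side, the $\Sinf$-structure on $\funn(\Pi TM)$ attached to the $\Linf$-algebroid $T^*M$ must be matched with the higher Koszul brackets associated to $P$; this is the only substantive step, and it reduces to comparing two generating Hamiltonians on $T^*(\Pi TM)$ and tracking how $P$ is transformed under the odd Mackenzie--Xu diffeomorphism. This matching is essentially the content of the construction in~\cite{tv:higherpoisson}, and once it is in place the corollary is immediate from Corollary~\ref{coro1}.
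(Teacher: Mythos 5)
Your proposal is correct and follows essentially the same route as the paper, which obtains the corollary precisely by applying Corollary~\ref{coro1} to the $\Linf$-algebroid structure induced on $T^*M$ by the homotopy Poisson structure (citing~\cite{tv:higherpoisson} for that structure and for the identification of the resulting homotopy Schouten brackets on $\funn(\Pi TM)$ with the higher Koszul brackets). Your extra remarks\,---\,that $\Pi E^*=\Pi TM$ for $E=T^*M$, that the target bracket is the canonical Schouten bracket coming from the tangent Lie algebroid, and that the substantive identification is the matching of the two master Hamiltonians under the odd Mackenzie--Xu transformation\,---\,correctly flesh out what the paper leaves implicit.
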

This gives solution for the problem posed in~\cite{tv:higherpoisson} (where higher Koszul brackets were introduced) and which was the  initial motivation that led us to thick morphisms.
See more details in~\cite{tv:microformal} and~\cite{tv:highkosz}.

\section{Quantum thick morphisms}
\label{sec.microquant}

So far, the statements about bosonic and fermionic thick morphisms were completely parallel to each other. This cannot  remain always   the case because of the substantial difference between even and odd symplectic geometry (see e.g.~\cite{hov:deltabest}, \cite{hov:semi} and \cite{schwarz:bv}; also \cite{tv:laplace1} \cite{tv:laplace2}).
In this section we will see that bosonic thick morphisms governed by even generating functions $S(x,q)$ have quantum counterparts which are special type Fourier integral operators specified by certain ``quantum generating functions'' $S_{\hbar}(x,q)$. In the same way as (classical) bosonic thick morphisms give $\Linf$-morphisms for $\Sinf$-brackets, there is a construction of ``quantum'' $\Sinfh$-brackets (equivalent to a higher order quantum ``BV operator''), which are not $\Sinf$, but tend  to $\Sinf$ when $\hbar\to 0$, and we show how to obtain $\Linf$-morphisms for $\Sinfh$-brackets using quantum thick morphisms. The main references here are~\cite{tv:oscil} and~\cite{tv:microformal}.

\subsection{Main construction}

We treat Planck's constant $\hbar$ as a formal parameter.

\subsubsection{Quantum   pullbacks and quantum  thick morphisms}
We need first to introduce suitable classes of functions. Besides $\fun(M)[[\hbar]]$,   smooth functions on a manifold $M$ which are formal power series in $\hbar$ (``formal power series'' for us always means non-negative  powers), we introduce the algebra of (formal) \emph{oscillatory wave functions}, which we denote $\ofun(M)$, obtained by adjoining to $\fun(M)[[\hbar]]$ formal oscillating exponentials $e^{\frac{i}{\hbar}f(x)}$, where $f\in \fun(M)[[\hbar]]$. The usual rules of manipulating with exponentials are assumed to hold.

\begin{definition} Consider supermanifolds $M_1$ and $M_2$. A \emph{quantum pullback} $\hat\Phi^*$ is a linear operator\linebreak $\hat\Phi^*\co \ofun(M_2)\to \ofun(M_1)$    defined  by the integral formula
\begin{equation}\label{eq.quantpull}
    (\hat\Phi^* [w])(x)=  \int_{T^*M_2} Dy \Dbar q \,\, e^{\frac{i}{\hbar}(S_{\hbar}(x,q)-y^iq_i)}\,w(y)\,,
\end{equation}
where a function  $S_{\hbar}(x,q)$ is called   \emph{quantum generating function}. It is a formal power series in momentum variables on $M_2$\,:
\begin{multline}\label{eq.quantgenfun}
    S_{\hbar}(x,q)=S_{\hbar}^0(x)+
    \f^i_{\hbar}(x)q_i+\\
    + \frac{1}{2}\,S^{ij}_{\hbar}(x)q_jq_i + \frac{1}{3!}\,S^{ijk}_{\hbar}(x)q_kq_jq_i + \ldots
\end{multline}
with coefficients formal power series in $\hbar$.
A   \emph{quantum thick}  (or \emph{microformal}) \emph{morphism}  $\hat\Phi\co M_1\ttoq \, M_2$ is defined as the corresponding arrow in the dual category.
\end{definition}

(In the integral,  $\Dbar q:=(2\pi\hbar)^{-n}(i\hbar)^{m}Dq$  in  dimension   $n|m$.)

There is a question, in which sense to understand oscillatory integrals such as~\eqref{eq.quantpull}. This is achieved by a formal version of the stationary phase formula~\cite{tv:microformal}. An axiomatic theory of formal  oscillatory integrals is developed   by A.~Karabegov in~\cite{karabegov:formal}.

The outward appearance~\eqref{eq.quantgenfun} of a quantum generating function $S_{\hbar}(x,q)$ seems the same as our previous functions $S(x,q)$ apart from a dependence on $\hbar$. We shall see however, that there is a difference: namely, in the respective transformation laws.

\subsubsection{Classical limit}
\begin{theorem}
\label{thm.classlim}
Let $\hat\Phi\co M_1\ttoq \, M_2$ be a quantum thick morphism with a quantum generating function $S_{\hbar}$. Consider $S_0(x,q):=\lim\limits_{\hbar\to 0}S_{\hbar}(x,q)$ as the (classical) generating function  of a (classical) thick morphism $\Phi\co M_1\tto  M_2$.
Then for any   oscillatory wave function of the form  $w(y)=e^{\frac{i}{\hbar}g(y)}$ on $M_2$,
the quantum pullback  is given by
\begin{equation}
\hat\Phi^*\bigl[e^{\frac{i}{\hbar}g}\bigl]= e^{\frac{i}{\hbar}f_{\hbar}(x)}\,,
\end{equation}
where  $f_{\hbar}=\F^*[g](1+O(\hbar))$   and $\F^*$ is the pullback by  the classical microformal morphism $\Phi\co M_1\tto  M_2$   defined by  $S_0(x,q)$.
\end{theorem}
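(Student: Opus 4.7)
My plan is to prove Theorem~\ref{thm.classlim} by a direct application of the (formal) stationary phase method to the defining integral~\eqref{eq.quantpull} with the oscillatory wave function $w(y)=e^{(i/\hbar)g(y)}$. Substituting, the integral becomes
\begin{equation}
    (\hat\Phi^*[e^{\frac{i}{\hbar}g}])(x) =\int_{T^*M_2} Dy\,\Dbar q\, \exp\Bigl(\tfrac{i}{\hbar}\,\bigl(S_{\hbar}(x,q)-y^iq_i+g(y)\bigr)\Bigr)\,,
\end{equation}
so the entire integrand is a single oscillating exponential with large parameter $\hbar^{-1}$ and with phase
\begin{equation}
    \Psi_{\hbar}(x;y,q):= S_{\hbar}(x,q)-y^iq_i+g(y)\,.
\end{equation}
First I would compute the critical point of $\Psi_0=\lim_{\hbar\to 0}\Psi_{\hbar}$ in the integration variables $(y,q)$ with $x$ fixed. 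Vanishing of $\partial \Psi_0/\partial y^i$ and $\partial \Psi_0/\partial q_i$ gives exactly
\begin{equation}
    q_i=\der{g}{y^i}(y)\,,\qquad y^i=(-1)^{\itt}\der{S_0}{q_i}(x,q)\,,
\end{equation}
which are precisely the equations~\eqref{eq.pullequations} determining the classical pullback $\F^*[g]$. Moreover, the value of $\Psi_0$ at this critical point equals
\begin{equation}
    g(y)+S_0(x,q)-y^iq_i\Bigl|_{\text{crit}}\Bigr. = \F^*[g](x)\,,
\end{equation}
by the very definition~\eqref{eq.pull}.

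Next I would appeal to the formal stationary phase asymptotics (in the axiomatic sense of Karabegov~\cite{karabegov:formal}, as used already in~\cite{tv:microformal}), which yields an expansion of the form
\begin{equation}
    (\hat\Phi^*[e^{\frac{i}{\hbar}g}])(x) = e^{\frac{i}{\hbar}\F^*[g](x)}\cdot A_{\hbar}(x)\,,
\end{equation}
where $A_{\hbar}(x)=A_0(x)+\hbar A_1(x)+\cdots$ is an ordinary (non-oscillatory) formal power series in $\hbar$, with $A_0(x)$ given by the Gaussian prefactor coming from the Hessian of $\Psi_0$ at its critical point and the higher $A_k$ coming from the usual perturbative expansion around the critical point (``loop corrections'') together with the contributions of the subleading terms of $S_{\hbar}=S_0+\hbar S_1+\cdots$. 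The key observation is that $A_{\hbar}$ starts with a non-zero constant term (after a proper choice of branch of the determinant arising in stationary phase), so $\log A_{\hbar}$ is a well-defined formal power series in $\hbar$.

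Finally, I would write the result as a single oscillatory exponential by setting
\begin{equation}
    f_{\hbar}(x):=\F^*[g](x)+\frac{\hbar}{i}\,\log A_{\hbar}(x)\,,
\end{equation}
so that $(\hat\Phi^*[e^{\frac{i}{\hbar}g}])(x)=e^{\frac{i}{\hbar}f_{\hbar}(x)}$. Since $\frac{\hbar}{i}\log A_{\hbar}=O(\hbar)$ as a formal series, we obtain $f_{\hbar}=\F^*[g]+O(\hbar)$, which is the claimed formula. The main obstacle is the bookkeeping for the stationary phase expansion in the presence of the $\hbar$-series inside $S_{\hbar}$ and the verification that the resulting amplitude $A_{\hbar}$ has a non-vanishing leading term so that the passage to $\log A_{\hbar}$ and thence to a single oscillating exponential is legitimate; this is precisely what the formal stationary phase formalism is designed to handle, so once that machinery is invoked, the rest is a formal computation.
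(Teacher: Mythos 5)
Your proposal is correct and follows essentially the same route as the paper's (the survey itself defers the proof to the cited references, but explicitly indicates that quantum pullbacks are handled by the formal stationary phase formula, which is exactly what you invoke): the critical-point equations of the phase $S_{\hbar}(x,q)-y^iq_i+g(y)$ reproduce~\eqref{eq.pullequations}, the critical value reproduces~\eqref{eq.pull}, and the $\hbar$-corrections in $S_{\hbar}$ together with the Hessian prefactor are absorbed into an amplitude $A_{\hbar}$ with invertible leading term, so that $\frac{\hbar}{i}\log A_{\hbar}=O(\hbar)$. This yields $f_{\hbar}=\F^*[g]+O(\hbar)$ as claimed.
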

We say that $\F=\lim\limits_{\hbar\to 0}\hat \F$\,.

To be able to   regard legitimately the limit $S_0(x,q)=\lim\limits_{\hbar\to 0}S_{\hbar}(x,q)$ as a classical generating function, we need to know of course that it possesses the required transformation law. We will see that shortly.

\subsubsection{Explicit formula for quantum pullbacks}
Suppose
\begin{equation}
    S_{\hbar}(x,q)=S_{\hbar}^0(x)+\f^i_{\hbar}(x)q_i+S^{+}_{\hbar}(x,q)\,,
\end{equation}
where   $S^{+}_{\hbar}(x,q)$ is the sum of all  terms of order $\geq 2$ in $q_i$.

\begin{theorem}
The action of $\hat \Phi^*$ defined by $S_{\hbar}(x,q)$
can be expressed as follows:
\begin{equation}
     \bigl(\hat \Phi^*w\bigr)(x)=e^{\frac{i}{\hbar}S_{\hbar}^0(x)}
    \left(e^{\frac{i}{\hbar}S^{+}_{\hbar}\left(x,\frac{\hbar}{i}\der{}{y}\right)}w(y)\right)\left|_{y^i=\f^i_{\hbar}(x)}\right.\,.
\end{equation}
\end{theorem}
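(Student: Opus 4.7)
The plan is to evaluate the defining oscillatory integral~\eqref{eq.quantpull} by (i) factoring the $q$-independent piece, (ii) converting the higher-order-in-$q$ part of the phase into a differential operator in $y$, and (iii) performing the resulting $q$- and $y$-integrations, which collapse via a Fourier inversion.

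First I would write $S_{\hbar}(x,q)=S_{\hbar}^0(x)+\f_{\hbar}^i(x)q_i+S_{\hbar}^{+}(x,q)$ and pull the $(y,q)$-independent factor $e^{\frac{i}{\hbar}S_{\hbar}^0(x)}$ outside the integral. The remaining integrand is
\begin{equation*}
e^{\frac{i}{\hbar}\bigl(\f_{\hbar}^i(x)-y^i\bigr)q_i}\,e^{\frac{i}{\hbar}S_{\hbar}^{+}(x,q)}\,w(y).
\end{equation*}
Expanding $e^{\frac{i}{\hbar}S_{\hbar}^{+}(x,q)}$ as a formal series makes the dependence on $q$ polynomial at each order; by the formal-stationary-phase interpretation of oscillatory integrals (in the sense of~\cite{karabegov:formal}), term-by-term integration is legitimate.

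Next I would use the elementary identity
\begin{equation*}
q_i\,e^{\frac{i}{\hbar}\bigl(\f_{\hbar}^j(x)-y^j\bigr)q_j}
=-\frac{\hbar}{i}\der{}{y^i}\,e^{\frac{i}{\hbar}\bigl(\f_{\hbar}^j(x)-y^j\bigr)q_j},
\end{equation*}
applied repeatedly, to move every factor of $q_i$ appearing in $S_{\hbar}^{+}(x,q)$ out of the phase and onto $\frac{\partial}{\partial y^i}$ acting on the exponential. Since the $q_i$ mutually (anti)commute, and so do the $\partial/\partial y^i$, no ordering ambiguity arises, and the substitution $q_i\rightsquigarrow\frac{\hbar}{i}\partial/\partial y^i$ is unambiguous inside the formal power series $S_{\hbar}^{+}$. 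One formal integration by parts in $y$ per derivative\,---\,justified formally, with no boundary contributions\,---\,transfers the differential operator onto $w(y)$ and removes the minus sign, producing
\begin{equation*}
e^{\frac{i}{\hbar}S_{\hbar}^0(x)}\int Dy\,\Dbar q\;
e^{\frac{i}{\hbar}\bigl(\f_{\hbar}^i(x)-y^i\bigr)q_i}\,
\Bigl[e^{\frac{i}{\hbar}S_{\hbar}^{+}(x,\frac{\hbar}{i}\partial/\partial y)}w(y)\Bigr].
\end{equation*}
Finally, the $q$-integration is a Fourier inversion: with the chosen normalization of $\Dbar q$, one has $\int\Dbar q\,e^{\frac{i}{\hbar}(\f_{\hbar}-y)q}=\delta(y-\f_{\hbar}(x))$ (in the super case, interpreted as a product of ordinary and Berezin delta functions), and the subsequent $y$-integration simply substitutes $y^i=\f_{\hbar}^i(x)$. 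This yields the claimed formula.

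The main obstacles are the formal-analytic ones rather than algebraic: one must justify termwise integration of the oscillatory integral, the integration by parts, and the suppression of boundary terms within the adopted formal framework, and one must keep track of Koszul signs in the super case so that both the key identity $q_i e^{\frac{i}{\hbar}(\f-y)q}=-\frac{\hbar}{i}\partial_{y^i}e^{\frac{i}{\hbar}(\f-y)q}$ and the Berezin Fourier inversion produce the signs consistent with the convention for $\Dbar q$. Once these are handled in accordance with the formal oscillatory integral calculus, the derivation is a routine rearrangement; a secondary consistency check is that each factor of $q$ contributes $\hbar\cdot\partial_y$, so the overall $\hbar^{-1}$ in $e^{\frac{i}{\hbar}S_{\hbar}^{+}}$ is compensated (since $S_{\hbar}^{+}$ begins at quadratic order in $q$), guaranteeing that $e^{\frac{i}{\hbar}S_{\hbar}^{+}(x,\frac{\hbar}{i}\partial_y)}$ is a well-defined formal power series in $\hbar$, consistent with Theorem~\ref{thm.classlim} in the limit $\hbar\to 0$.
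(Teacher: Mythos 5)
Your derivation is correct and is essentially the intended one (the paper itself omits the proof, deferring to the reference~\cite{tv:microformal}): splitting off $S^0_{\hbar}$ and the linear term, trading each $q_i$ in the expansion of $e^{\frac{i}{\hbar}S^{+}_{\hbar}}$ for $\frac{\hbar}{i}\der{}{y^i}$ via differentiation of the linear phase $e^{\frac{i}{\hbar}(\f^i_{\hbar}(x)-y^i)q_i}$ and formal integration by parts, and then collapsing $\int \Dbar q$ to a delta function supported at $y^i=\f^i_{\hbar}(x)$ is exactly how the explicit formula is obtained within the formal oscillatory-integral calculus. Your closing observation that $S^{+}_{\hbar}$ being of order $\geq 2$ in $q$ compensates the overall $\hbar^{-1}$, so that the resulting operator is a genuine formal power series in $\hbar$, is also the right consistency check.
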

Hence the quantum pullback $\hat \Phi^*$   is a special type formal linear differential operator   over the `quantum-perturbed' map $\f_{\hbar}\co M_1\to M_2$.
Here   $S_{\hbar}^0(x)$ gives the phase factor,   $\f^i_{\hbar}(x)q_i$  gives the map, and the term $S^{+}_{\hbar}(x,q)$ is responsible for ``quantum corrections''.

\subsubsection{Further facts}

\paragraph*{ (A) \emph{Transformation law.}} Quantum generating functions transform under changes of coordinates by the following formula:
\begin{equation}\label{eq.translawqu}
    e^{\frac{i}{\hbar}S'_{\hbar}(x',q')}=\int Dy \,\Dbar q \, e^{\frac{i}{\hbar}\bigl(S_{\hbar}\left(x(x'),q\right)-yq+y'(y)q'\bigr)}\,.
\end{equation}
Here $x^a=x^a(x')$,  $x^{a'}=x^{a'}(x)$ and  $y^i=y^i(y')$, $y^{i'}=y^{i'}(y)$ are mutually inverse changes of local coordinates on $M_1$ and $M_2$ respectively. In particular, a corollary is that in the limit $\hbar\to 0$, the classical transformation law from~\ref{subsubsec.coordinvar} is recovered. This justifies taking the classical limit in Theorem~\ref{thm.classlim}.

\paragraph*{ (B) \emph{Composition.}} Quantum thick morphisms can be composed. The composition is given by an integral formula similar to that defining quantum pullbacks.


More details see in~\cite{tv:microformal}.

\subsection{Higher BV-structures}

\subsubsection{Digression: brackets generated by an operator}
Let $A$ be   a commutative algebra with $1$ over $\mathbbm{C}[[\hbar]]$.
Let $\D$ be a linear operator on   $A$. Consider two sequences of multilinear operations (of parity $\tilde \D$ and symmetric   in the supersense):
\begin{definition}[a modification of Koszul's~\cite{koszul:crochet85}; see~\cite{tv:higherder}]
\emph{Quantum brackets}  generated by  $\D$\,:
\begin{equation}
    \{a_1,\ldots,a_k\}_{\D,\hbar}:=(-i\hbar)^{-k}[\ldots [\D,a_1],\ldots,a_k](1)\,;
\end{equation}
\emph{classical  brackets} generated by  $\D$\,:
\begin{equation}
    \{a_1,\ldots,a_k\}_{\D,0}:=\lim_{\hbar\to 0}\;(-i\hbar)^{-k}[\ldots [\D,a_1],\ldots,a_k](1)
\end{equation}
\end{definition}

We say that:
\begin{enumerate}[i)]
  \item $\D$ is  a \emph{formal $\hbar$-differential operator} if all quantum brackets are defined;
  \item $\D$ is  an  \emph{$\hbar$-differential operator of order $\leq n$} if all quantum brackets   vanish for $k>n$.
\end{enumerate}

 \subsubsection{More on brackets generated by $\Delta$}

\begin{proposition}[Explicit formulas for quantum brackets] We have:
\begin{enumerate}[i)]
  \item for $k=0$, $\{\varnothing\}_{\D,\hbar} = \D(1)$\,;
  \item for $k=1$, $\{a\}_{\D,\hbar}= (-i\hbar)^{-1}\bigl(\D(a)-\D(1)a\bigr)$\,;
  \item for $k=2$,
   $\{a,b\}_{\D,\hbar}= (-i\hbar)^{-2}\bigl(\D(ab)-\D(a)b-\linebreak(-1)^{\at\bt}\D(b)a+ \D(1)ab\bigr)$\,;
  \item for general $k$,
  the expression for the $k^{\text{th}}$ bracket generated by $\D$ is
\begin{multline*}
    \{a_1,\ldots,a_k\}_{\D,\hbar}=\\
    (-i\hbar)^{-k}
    \sum_{s=0}^k(-1)^s\!\!\!\!\!\!\!\sum_{\text{$(k-s,s)$-shuffles}}\!\!\!\!\!\!\!
    (-1)^{\a}\,\D(a_{\tau(1)}\ldots \\
    { a_{\tau(k-s)})}\,a_{\tau(k-s+1)}\ldots a_{\tau(k)}\,,
\end{multline*}
where  {$(-1)^{\a}= (-1)^{\a(\tau;\at_1,\ldots,\at_k)}$}  is the  Koszul sign   for  permutation of commuting
factors of given parities.
\end{enumerate}
\end{proposition}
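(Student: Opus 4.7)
The plan is to prove (i)--(iii) by direct expansion of the iterated graded commutator, then to handle (iv) by induction on $k$. Unpacking the definition, each $a\in A$ is identified with its operator of left multiplication, so $[X,a]=Xa-(-1)^{\tilde X\tilde a}\,aX$. For $k=0$ there is nothing to unfold and the result is $\D(1)$. For $k=1$, applying $[\D,a]$ to $1$ gives $\D(a)-(-1)^{\tilde\D\tilde a}\,a\,\D(1)$; since $\D(1)$ has parity $\tilde\D$, graded commutativity of $A$ converts this into $\D(a)-\D(1)a$. The $k=2$ case follows by repeating the procedure on the outer commutator and then using graded commutativity on each of the four resulting monomials to collect all multiplicators to the right of the $\D$-term; a quick check shows the residual signs are precisely the $(-1)^{\tilde a\tilde b}$ appearing in (iii).

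For the general statement (iv), I would argue by induction on $k$. Set $Y_{k-1}:=[\ldots[\D,a_1],\ldots,a_{k-1}]$ and expand
\begin{equation*}
[\ldots[\D,a_1],\ldots,a_k](1)=[Y_{k-1},a_k](1)=Y_{k-1}(a_k)-(-1)^{(\tilde\D+\tilde a_1+\cdots+\tilde a_{k-1})\tilde a_k}\,a_k\,Y_{k-1}(1).
\end{equation*}
By the inductive hypothesis applied to $Y_{k-1}(a_k)$ and to $Y_{k-1}(1)$, each of the two summands is itself a shuffle sum over partitions of $\{1,\dots,k-1\}$ into an ``inside-$\D$'' group and an ``outside-$\D$'' group. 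The combinatorial core is to show that the first summand contributes exactly those terms of the claimed shuffle sum for $k$ in which $a_k$ belongs to the inside-$\D$ group (since $a_k$ is absorbed into the argument of $\D$), while the second summand contributes the terms in which $a_k$ belongs to the outside-$\D$ group; the extra minus sign produced by the outer commutator is precisely the increment in $(-1)^s$ caused by moving one more index outside.

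The main obstacle will be the Koszul sign bookkeeping. In the second term one must pass $a_k$ past each already-placed outside factor $a_{\tau(j)}$ in order to slot it into its prescribed shuffle position, accumulating a product of $(-1)^{\tilde a_k\tilde a_{\tau(j)}}$; similarly in the first term, pulling $a_k$ inside the argument of $\D$ requires passing it through the outside block produced by the induction hypothesis, generating further Koszul signs. A clean packaging is to label each term in the expansion of $Y_{k-1}$ by a function $\epsilon\colon\{1,\dots,k-1\}\to\{\mathrm{in},\mathrm{out}\}$, extend it to $\epsilon'\colon\{1,\dots,k\}\to\{\mathrm{in},\mathrm{out}\}$ in the two natural ways dictated by the expansion above, and verify slot-by-slot that the Koszul signs accumulated during this re-ordering add up to the shuffle sign $(-1)^{\alpha(\tau';\tilde a_1,\ldots,\tilde a_k)}$ for the enlarged shuffle $\tau'$. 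The prefactor $(-i\hbar)^{-k}$ is inherited verbatim from the definition and requires no argument.
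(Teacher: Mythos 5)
Your computations for $k=0,1,2$ are correct, and the paper itself offers no proof of this proposition --- it is stated as a direct unwinding of the definition $\{a_1,\ldots,a_k\}_{\D,\hbar}=(-i\hbar)^{-k}[\ldots[\D,a_1],\ldots,a_k](1)$ --- so your direct-expansion strategy is exactly the intended one. The use of graded commutativity of $A$ to push all multiplicators to the right of the $\D$-term, and the observation that the parities $\tilde\D+\sum\tilde a_j$ of the $\D(\cdots)$ factors conspire to cancel all occurrences of $\tilde\D$ from the final signs, is the essential content, and you handle it correctly in the low-order cases.

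There is one point in the general step that needs repair before the induction closes. As written, you invoke ``the inductive hypothesis applied to $Y_{k-1}(a_k)$'', but the proposition for $k-1$ only describes $Y_{k-1}(1)$, not $Y_{k-1}$ evaluated on an arbitrary element; so the first summand in your expansion is not covered by the hypothesis you are inducting on. The fix is standard: strengthen the inductive claim to a formula for $Y_{m}(c)$ with arbitrary $c\in A$ (equivalently, prove the operator identity
\begin{equation*}
[\ldots[\D,a_1],\ldots,a_m]\;=\;\sum_{s=0}^{m}(-1)^{s}\sum_{\text{shuffles}}\pm\; a_{\tau(m-s+1)}\cdots a_{\tau(m)}\cdot \D\cdot a_{\tau(1)}\cdots a_{\tau(m-s)}
\end{equation*}
with left-multiplication operators, by induction on $m$), and only at the very end set $c=1$ and commute the outside block to the right. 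With that strengthening your bookkeeping of the $(-1)^s$ increment and of the Koszul signs accumulated while transporting $a_k$ to its shuffle position goes through verbatim, and the argument is complete.
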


\begin{remark}
The notion of an  {$\hbar$-differential operator} can be defined by induction:
 $\ord_{\hbar}\D \leq k$ if  for all $a\in A$,  $[\D,a]=i\hbar B$, where $\ord_{\hbar}B \leq k-1$
 (and $\ord_{\hbar}\D =0$ if $\D$ commutes with multiplication by all $a\in A$).
\end{remark}

\begin{example} \label{ex.hdo}
On a supermanifold $M$, an arbitrary  $\hbar$-differential operator of order $n$ has  in local coordinates the form
\begin{equation}
\begin{aligned}
    &\D=      (-i\hbar)^n  A^{a_1\cdots a_n}_{\hbar}(x)\,\p_{a_1}\cdots\p_{a_n} +  \\
    &\kern.5cm+(-i\hbar)^{n-1} A^{a_1\cdots a_{n-1}}_{\hbar}(x)\,\p_{a_1}\cdots\p_{a_{n-1}}+\cdots + A^0_{\hbar}(x)\,.
\end{aligned}
\end{equation}
For such operators, the \emph{principal symbol} is
\begin{equation}
\begin{aligned}
    \sigma(\D)&=      A^{a_1\cdots a_n}_{0}(x)\,p_{a_1}\cdots p_{a_n}+\\
    &\kern.5cm+ A^{a_1\cdots a_{n-1}}_{0}(x)\,p_{a_1}\cdots p_{a_{n-1}}+     \cdots + A^0_{0}(x)
\end{aligned}
\end{equation}
(subscript $0$ means substituting $\hbar=0$), which is a well-defined inhomogeneous fiberwise polynomial function on $T^*M$.
It is the master Hamiltonian for the classical brackets generated by $\D$.
(For a formal $\hbar$-differential operator, such a master Hamiltonian is a formal power series in momenta.)
\end{example}

 \subsubsection{$S_{\infty,\hbar}$-algebras}
 Let an operator $\D$ on $A$ be odd.  (We assume it is formal $\hbar$-differential.)

\begin{proposition}
If    $\D^2=0$, then  the quantum brackets  define  an $L_{\infty}$-algebra (in the odd symmetric version).
\end{proposition}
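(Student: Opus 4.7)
The idea is to realise the quantum brackets as a rescaling of higher derived brackets and then invoke the theorem from Section~\ref{subsubsec.higherder}.

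First, I would set up the ambient Lie superalgebra. Take $L := \mathrm{End}_{\mathbbm{C}[[\hbar]]}(A)$ with the super-commutator $[D_1, D_2] := D_1 D_2 - (-1)^{\tilde D_1 \tilde D_2} D_2 D_1$. Embed $A \hookrightarrow L$ as left-multiplication operators $a \mapsto L_a$, let $V$ denote the image, and set $K := \{D \in L : D(1) = 0\}$. One checks directly that: (i) $V$ is an abelian subalgebra, because supercommutativity of $A$ gives $L_a L_b = (-1)^{\tilde a \tilde b} L_b L_a$; (ii) $K$ is a Lie subalgebra, since if $D_1(1)=D_2(1)=0$ then $[D_1,D_2](1)=D_1(D_2(1)) \pm D_2(D_1(1))=0$; and (iii) $L = K \oplus V$, with projection $P(D) = L_{D(1)}$. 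Note that $\Delta$ itself is an odd element of $L$ (not required to lie in either summand).

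Next, apply the higher-derived-bracket theorem to the odd element $\Delta \in L$ satisfying $\Delta^2 = \tfrac12[\Delta,\Delta] = 0$. This yields an odd symmetric $L_\infty$-structure on $V \cong A$ with brackets
\begin{equation*}
    \mu_k(a_1,\ldots,a_k) := P[\ldots[\Delta, L_{a_1}], \ldots, L_{a_k}] = [\ldots[\Delta, a_1], \ldots, a_k](1),
\end{equation*}
where in the rightmost expression each $a_i$ is tacitly identified with $L_{a_i}$ and the resulting operator is evaluated at $1$. Symmetry of $\mu_k$ in its arguments (in the super sense) follows from the Jacobi identity in $L$ together with commutativity of $V$, exactly as in the general derived-bracket theorem.

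Finally, I would reconcile this with the quantum brackets. By definition, $\{a_1,\ldots,a_k\}_{\Delta,\hbar} = (-i\hbar)^{-k}\mu_k(a_1,\ldots,a_k)$, and the hypothesis that $\Delta$ is a \emph{formal $\hbar$-differential operator} is precisely the statement that $\mu_k$ is divisible by $\hbar^k$, so the rescaling lands in $A[[\hbar]]$. The main point to verify is that the rescaling is compatible with the $L_\infty$ identities: in the $n$-th identity every summand is a composition $\mu_{s+1} \circ \mu_r$ with $r+s=n$, carrying the scaling factor $(-i\hbar)^{-(s+1)} \cdot (-i\hbar)^{-r} = (-i\hbar)^{-n-1}$, which is independent of $s$. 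Pulling out this common factor reduces the quantum $L_\infty$ relation to the one for the $\mu_k$, which holds by the derived-bracket theorem. The only delicate point is precisely this rescaling compatibility (it depends on the very specific grading of the factors $(-i\hbar)^{-k}$ in the definition of the quantum brackets); the rest is a direct application of machinery already developed in the paper.
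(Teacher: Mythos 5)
Your proof is correct and takes essentially the same route as the paper, whose entire argument is the one-line citation ``This follows from general theory \cite{tv:higherder}'', i.e.\ the higher-derived-brackets theorem of \ref{subsubsec.higherder} applied with exactly your decomposition $\mathrm{End}(A)=\{D: D(1)=0\}\oplus A$ and projection $P(D)=L_{D(1)}$. Your explicit checks of the splitting, of the identification $\mu_k(a_1,\dots,a_k)=[\dots[\Delta,a_1],\dots,a_k](1)$, and of the $(-i\hbar)^{-k}$ rescaling (whose factor in each term of the $n$-th Jacobi identity depends only on $n$) merely supply the details the paper delegates to the reference.
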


This follows from general theory~\cite{tv:higherder}.

The quantum brackets  additionally satisfy the modified Leibniz identity
\begin{multline}
    \{a_1,\ldots,a_{k-1},ab\}_{\D,\hbar} =\{a_1,\ldots,a_{k-1},a\}_{\D,\hbar}b  +\\
    (-1)^{\att}a\{a_1,\ldots,a_{k-1},b\}_{\D,\hbar}
     +\underbrace{(-i\hbar)\{a_1,\ldots,a_{k-1},a,b\}_{\D,\hbar}}_{\text{extra term}}
\end{multline}
where $\att=  {\at(1+\at_1+\cdots+\at_{k-1})}$.

We   call such an algebraic structure an   \emph{$\Sinfh$-algebra}.

Note that an operator $\D$ and the   $\Sinfh$-brackets generated by it contain the same data, and they both
are fully defined by the $0$-bracket and $1$-bracket.

Since an $\Sinfh$-algebra is in particular  an $\Linf$-algebra, we may ask about
the corresponding homological vector field (which should live on $A$). The answer is in the following statement.

\begin{lemma} The quantum  brackets generated by $\D$ correspond to the ``Batalin-Vilkovisky homological vector field''  on $A$ (regarded as a supermanifold) 
\begin{equation}
    Q= e^{-\frac{i}{\hbar}a}\D\bigl(e^{\frac{i}{\hbar}a}\bigr)\,\var{}{a}\,.
\end{equation}
\end{lemma}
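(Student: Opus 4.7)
The plan is to apply the universal description of a symmetric $\Linf$-structure given in \ref{subsubsec.linfalgebras}: the corresponding homological vector field is reconstructed from the brackets by \eqref{eq.assemblq}, namely $Q(\x)=\sum_{n\geq 0}\tfrac{1}{n!}\{\x,\ldots,\x\}$ evaluated on $n$ coinciding even arguments. Specialising to the quantum brackets generated by $\D$, with underlying supermanifold $A$ and coordinate $a$, one obtains
\begin{equation*}
    Q(a) \;=\; \sum_{n=0}^{\infty} \frac{1}{n!}(-i\hbar)^{-n}\,[\,\ldots[\D,a],\ldots,a\,](1)\,\var{}{a},
\end{equation*}
so the content of the lemma reduces to the closed-form summation
\begin{equation*}
    \sum_{n=0}^{\infty} \frac{1}{n!}(-i\hbar)^{-n}\,[\,\ldots[\D,a],\ldots,a\,](1) \;=\; e^{-\frac{i}{\hbar}a}\,\D\bigl(e^{\frac{i}{\hbar}a}\bigr).
\end{equation*}

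I would prove this identity by regarding multiplication by $a$ and by $e^{\pm \frac{i}{\hbar}a}$ as operators on $A$, rewriting the right-hand side as $\bigl(e^{-\frac{i}{\hbar}a}\,\D\,e^{\frac{i}{\hbar}a}\bigr)(1)$, and expanding the conjugation by the Hadamard (BCH) formula
\begin{equation*}
    e^{-X}\,\D\,e^{X} \;=\; \sum_{n=0}^{\infty} \frac{(-1)^n}{n!}\,\ad_X^n(\D)
\end{equation*}
with $X=\tfrac{i}{\hbar}L_a$. Since $a$ is even and $\D$ is odd, the graded commutator satisfies $[L_a,\D]=-[\D,L_a]$, so iterating produces $n$ extra signs that cancel the $(-1)^n$; pulling out the scalar $(i/\hbar)^n=(-i\hbar)^{-n}$ and evaluating at $1$ gives the desired series term by term.

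The conclusion $Q^2=0$ then follows either from the general higher-derived-brackets theorem of \ref{subsubsec.higherder} applied to $\D^2=0$, which guarantees that the quantum brackets form an $\Linf$-algebra whose universal homological vector field is automatically of square zero, or by a direct verification from the closed form, using $\D^2=0$ together with the fact that the directional derivative of $e^{\frac{i}{\hbar}a}$ along a variation $\delta a$ produces $\frac{i}{\hbar}\,\delta a \cdot e^{\frac{i}{\hbar}a}$, after which the two conjugated copies of $\D$ combine.

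The main obstacle is not the algebraic identity itself but justifying $Q$ as a genuine formal vector field on $A$ viewed as an infinite-dimensional supermanifold: one has to work within the functor-of-points framework so that the formal power series in $a$ makes invariant sense, and one must check that the sign bookkeeping — obtained by taking $a$ even and recovering the general case by linearity, as in \ref{subsubsec.linfalgebras} — is compatible with the odd symmetric version of the $\Sinfh$-structure being encoded.
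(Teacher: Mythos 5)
Your proposal is correct and follows exactly the route the paper leaves implicit: assemble the quantum brackets into the universal homological vector field via $Q(a)=\sum\frac{1}{n!}\{a,\ldots,a\}_{\D,\hbar}$ as in \eqref{eq.assemblq}, and recognize the resulting series as the Hadamard expansion of $e^{-\frac{i}{\hbar}a}\,\D\,e^{\frac{i}{\hbar}a}$ applied to $1$, with the sign from $[\D,L_a]=-\ad_{L_a}(\D)$ (for even $a$) cancelling the $(-1)^n$ and $(-i\hbar)^{-n}=(i/\hbar)^n$ absorbing the scalar factors. The check of $Q^2=0$ via the higher-derived-brackets theorem and the remark on recovering general parities by linearity are likewise consistent with the paper's framework, so nothing is missing.
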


 \subsection{BV-manifolds and BV quantum morphisms}

\subsubsection{Definitions}

We introduce the following terminology which may be non-standard, but is convenient for our present purpose.

\begin{definition}
(1) A   \emph{BV-manifold} is a supermanifold $M$ equipped with an odd formal $\hbar$-differential operator  $\D$,    $\D^2=0$.
The operator $\D$ is called the \emph{BV-operator}.

(2) A \emph{(quantum) BV-morphism} of BV-manifolds\linebreak $(M_1, \D_1)$ and $(M_2, \D_2)$ is    a quantum thick morphism\linebreak $\hat\F\co M_1\ttoq M_2$ such that
\begin{equation}
    \D_1\circ \hat \F^* = \hat\F^*\circ \D_2\,.
\end{equation}
\end{definition}

Since a BV-operator $\D$ induces a sequence of quantum brackets,
and is defined by the $0$- and $1$-brackets,   a BV-structure and an $S_{\infty,\hbar}$-structure on a manifold $M$ are  equivalent. In particular, the space of functions on a BV-manifold is an $\Linf$-algebra with respect to quantum brackets generated by $\D$.

A natural question:  how to obtain an $\Linf$-morphism of quantum brackets from a quantum BV-morphism? Note that unlike the classical case, the quantum pullback operator $\hat\F^*$ is linear, so cannot be the answer. It turns out that the solution is given by a formula motivated by the stationary phase method but without passing to the classical limit!

\subsubsection{$L_{\infty}$-morphism of quantum brackets induced by a quantum BV-morphism}
\label{subsubsec.phishriek}
Define a non-linear transformation $\hat \F^!\co \funnh(M_2)\to$\linebreak $ \funnh(M_1)$ by the formula
\begin{equation}
    \hat \F^! :=
     \frac{\hbar}{i} \,\ln\circ\,\hat\Phi^* \circ\exp \frac{i}{\hbar} \,,
\end{equation}
or $\hat \F^!(g)=\frac{\hbar}{i} \,\ln\,\hat\Phi^* \bigl(e^{\frac{i}{\hbar}g}\bigr)$\,, for a $g\in \funnh(M_2)$\,.

\begin{theorem}
If $\hat\F\co M_1\ttoq M_2$ is a BV quantum  morphism, then $\hat \F^!$ is an $L_{\infty}$-morphism of the  $S_{\infty,\hbar}$-algebras of functions.
In greater detail:  $\hat \F^!$
is a morphism of infinite-dimensional $Q$-manifolds $\funnh(M_2)\to \funnh(M_1)$   with the homological vector fields $Q_{\D_1}$ and $Q_{\D_2}$, where
\begin{equation}
    Q_{\Delta}= \int Dx\; e^{-\frac{i}{\hbar}f}\D\bigl(e^{\frac{i}{\hbar}f}\bigr)\,\var{}{f(x)}\,.
\end{equation}
\end{theorem}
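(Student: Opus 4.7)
The plan is to verify directly that the pair of homological vector fields $(Q_{\D_2}, Q_{\D_1})$ on $\funnh(M_2)$ and $\funnh(M_1)$ are $\hat\F^!$-related, i.e., $T\hat\F^!\circ Q_{\D_2}=Q_{\D_1}\circ \hat\F^!$. The key observation that makes this feasible is that although $\hat\F^!$ is nonlinear, it is built from the linear operator $\hat\F^*$ by the nonlinear twist $g\mapsto \tfrac{\hbar}{i}\ln\hat\F^*(e^{\frac{i}{\hbar}g})$, which is precisely the twist that conjugates an odd operator $\D$ to its associated BV homological vector field $Q_\D$. So the computation should reduce to a one-line application of the intertwining property $\D_1\circ\hat\F^*=\hat\F^*\circ\D_2$.

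The first concrete step is to compute the derivative $T\hat\F^!|_g$ at an arbitrary point $g\in\funnh(M_2)$. Expanding to first order in a tangent vector $h$,
\begin{equation}
e^{\frac{i}{\hbar}(g+\e h)}=e^{\frac{i}{\hbar}g}\bigl(1+\tfrac{i\e}{\hbar}h+O(\e^2)\bigr),
\end{equation}
and using the linearity of the quantum pullback $\hat\F^*$ together with the formal expansion of $\ln$, I would obtain
\begin{equation}
T\hat\F^!\big|_g(h)=e^{-\frac{i}{\hbar}f}\,\hat\F^*\!\bigl(h\cdot e^{\frac{i}{\hbar}g}\bigr),\qquad f:=\hat\F^!(g).
\end{equation}
This is the crucial formula; it says that the derivative of the nonlinear map $\hat\F^!$ at $g$ is conjugation of $\hat\F^*$ by the exponentials $e^{\pm\frac{i}{\hbar}g}$, $e^{\pm\frac{i}{\hbar}f}$.

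Next, I would substitute the tangent vector of $Q_{\D_2}$ at $g$, whose value as a function on $M_2$ is $h_g=e^{-\frac{i}{\hbar}g}\D_2(e^{\frac{i}{\hbar}g})$. Plugging into the formula for $T\hat\F^!|_g$ and using $h_g\cdot e^{\frac{i}{\hbar}g}=\D_2(e^{\frac{i}{\hbar}g})$ gives
\begin{equation}
T\hat\F^!\big|_g\bigl(Q_{\D_2}|_g\bigr)=e^{-\frac{i}{\hbar}f}\,\hat\F^*\!\bigl(\D_2(e^{\frac{i}{\hbar}g})\bigr).
\end{equation}
Now I would invoke the defining BV-morphism property $\hat\F^*\circ\D_2=\D_1\circ\hat\F^*$ to rewrite the right-hand side as $e^{-\frac{i}{\hbar}f}\D_1(\hat\F^*(e^{\frac{i}{\hbar}g}))=e^{-\frac{i}{\hbar}f}\D_1(e^{\frac{i}{\hbar}f})$, which is exactly the value of $Q_{\D_1}$ at $f=\hat\F^!(g)$. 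This is the required $\hat\F^!$-relatedness.

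The main obstacle I anticipate is not the algebraic identity itself, which is essentially a one-liner, but the justification of the formal manipulations in the infinite-dimensional/oscillatory-integral setting: one must make sense of $T\hat\F^!$ as a genuine tangent map between the functional supermanifolds $\funnh(M_i)$, ensure that the formal $\ln/\exp$ twist yields a well-defined element of $\funnh(M_1)$ for $g\in\funnh(M_2)$, and keep track of signs when $g$ is odd (or when $\D$ acts on odd arguments). Once these formalities are dispatched (using the framework of formal oscillatory integrals as in the earlier sections), the statement follows. As a sanity check, I would verify in the classical limit $\hbar\to 0$ that the formula recovers the already proved classical statement that a bosonic $\Sinf$-thick morphism induces an $\Linf$-morphism of homotopy Schouten brackets, with $\hat\F^!\to \F^*$ by a stationary-phase argument.
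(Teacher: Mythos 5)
Your proposal is correct and follows exactly the intended route: the paper's preceding Lemma identifying the quantum brackets with the Batalin--Vilkovisky vector field $Q_{\D}=e^{-\frac{i}{\hbar}a}\D\bigl(e^{\frac{i}{\hbar}a}\bigr)\lvar{}{a}$ is set up precisely so that $\hat\F^!$-relatedness reduces, via your derivative formula $T\hat\F^!\big|_g(h)=e^{-\frac{i}{\hbar}f}\,\hat\F^*\bigl(h\,e^{\frac{i}{\hbar}g}\bigr)$, to a one-line application of $\D_1\circ\hat\F^*=\hat\F^*\circ\D_2$. The only caveat worth recording is that your closing sanity check is weaker than the paper's actual ``Egorov-type'' theorem, which asserts that the classical limit of a BV quantum morphism is an $\Sinf$ thick morphism (a statement about generating functions, not merely about the induced $\Linf$-morphism of classical brackets).
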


Since in the limit $\hbar\to 0$, quantum brackets generated by $\D$ become classical brackets, and the transformation $\F^!$ in the classical limit gives the  pullback $\F^*$ by the corresponding classical thick morphism, as a corollary we obtain that $\F^*$ gives an $\Linf$-morphism of the classical $\Sinf$-brackets. In fact, we can prove more than that.

\subsubsection{From a quantum BV morphism to a classical $S_{\infty}$ thick morphism}
Let $M$    be a BV-manifold with a BV-operator $\D$. In the limit $\hbar\to 0$, $\D$ gives an $\Sinf$-structure.

\begin{lemma}
The  master Hamiltonian of the $\Sinf$-structure generated by $\D$ is
\begin{equation}
    H(x,p)=\lim_{\hbar\to 0} e^{-\frac{i}{\hbar}x^ap_a}\D(e^{\frac{i}{\hbar}x^ap_a})\,.
\end{equation}
\end{lemma}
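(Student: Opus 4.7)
The plan is to reduce the claim to a direct local-coordinate calculation using the description of $\hbar$-differential operators in Example~\ref{ex.hdo}, which already identifies the classical master Hamiltonian of the brackets generated by $\D$ with its principal symbol $\sigma(\D)$. So it suffices to show that the expression
\begin{equation*}
    H(x,p)\;=\;\lim_{\hbar\to 0} e^{-\frac{i}{\hbar}x^a p_a}\D\bigl(e^{\frac{i}{\hbar}x^a p_a}\bigr)
\end{equation*}
computes that principal symbol. Once this is done, the fact that the $S_{\infty}$-structure from the statement coincides with the classical brackets generated by $\D$ in the limit $\hbar\to 0$ (by definition of the classical brackets and by the construction of the corresponding master Hamiltonian in~\cite{tv:nonlinearpullback}) will finish the proof.

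First I would work in a local chart with coordinates $x^a$ and conjugate fiber coordinates $p_a$ on $T^*M$, and expand $\D$ as in Example~\ref{ex.hdo}:
\begin{equation*}
    \D=\sum_{k=0}^{n} (-i\hbar)^k A^{a_1\cdots a_k}_{\hbar}(x)\,\p_{a_1}\cdots \p_{a_k}.
\end{equation*}
The key computational step uses the elementary identity $\p_a\bigl(e^{\frac{i}{\hbar}x^b p_b}\bigr)=\tfrac{i}{\hbar}\,p_a\,e^{\frac{i}{\hbar}x^b p_b}$ (with the usual sign conventions for odd coordinates), iterated $k$ times, to give
\begin{equation*}
    e^{-\frac{i}{\hbar}xp}\,\p_{a_1}\cdots \p_{a_k}\bigl(e^{\frac{i}{\hbar}xp}\bigr)=\Bigl(\frac{i}{\hbar}\Bigr)^{k}\!p_{a_1}\cdots p_{a_k}.
\end{equation*}
The crucial cancellation $(-i\hbar)^k\!\cdot\!(i/\hbar)^k=1$ then yields
\begin{equation*}
    e^{-\frac{i}{\hbar}xp}\,\D\bigl(e^{\frac{i}{\hbar}xp}\bigr)=\sum_{k=0}^{n}A^{a_1\cdots a_k}_{\hbar}(x)\,p_{a_1}\cdots p_{a_k},
\end{equation*}
a genuine polynomial in $p$ with coefficients in $\fun(M)[[\hbar]]$. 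Passing to the limit $\hbar\to 0$ leaves exactly $\sum_{k} A^{a_1\cdots a_k}_{0}(x)\,p_{a_1}\cdots p_{a_k}=\sigma(\D)$, which by Example~\ref{ex.hdo} is the master Hamiltonian of the classical brackets.

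To complete the argument I would check coordinate invariance: the left-hand side is well-defined on $T^*M$ because the right-hand side is, $\sigma(\D)$ being a well-defined inhomogeneous polynomial function on $T^*M$; alternatively one may observe directly that the expression $e^{-\frac{i}{\hbar}xp}\D(e^{\frac{i}{\hbar}xp})$ transforms in the classical limit precisely by the rule making it a function of $(x,p)$, since the subleading terms arising from the chain rule carry extra factors of $\hbar$ and vanish. The only real obstacle is the case where $\D$ is an infinite-order formal $\hbar$-differential operator: then the sum above becomes a formal power series in $p$, and one needs to justify the termwise $\hbar\to 0$ passage. This is handled either inside Karabegov's formal framework~\cite{karabegov:formal} or, more elementarily, by noting that the coefficient of each monomial $p_{a_1}\cdots p_{a_k}$ is already the $\hbar\to 0$ limit of a single coefficient $A^{a_1\cdots a_k}_{\hbar}(x)$, so no analytic subtlety is involved. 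A brief parity bookkeeping for odd coordinates $x^a$ (and correspondingly odd $p_a$) shows the same formula holds in the super case with the standard Koszul signs absorbed in the definition of $\p_a$, which finishes the proof.
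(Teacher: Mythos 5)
Your proposal is correct and follows exactly the route the paper intends: the paper states this lemma without an explicit proof, but Example~\ref{ex.hdo} already identifies the principal symbol $\sigma(\D)$ as the master Hamiltonian of the classical brackets, and your coordinate computation with the cancellation $(-i\hbar)^k(i/\hbar)^k=1$ shows that $\lim_{\hbar\to 0}e^{-\frac{i}{\hbar}x^ap_a}\D(e^{\frac{i}{\hbar}x^ap_a})$ is precisely that symbol. The remarks on coordinate invariance and on the formal (infinite-order) case are appropriate and do not introduce any gap.
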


\begin{theorem}[``analog of Egorov's theorem'']
\label{thm.egorovtype}
Let $M_1$ and $M_2$  be BV-manifolds and let $\hat\F\co M_1\ttoq M_2$ be a BV quantum  thick morphism. Then its classical limit $\F\co M_1\tto M_2$ is an $\Sinf$ thick morphism for the induced $S_{\infty}$-structures.
\end{theorem}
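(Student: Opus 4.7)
The strategy I would follow is to probe the intertwining relation $\D_1 \circ \hat\F^* = \hat\F^* \circ \D_2$ with the family of test oscillating wave functions $w_q(y) = e^{\frac{i}{\hbar}q_i y^i}$ on $M_2$, depending on a covector parameter $q$. The key preliminary observation is that, by Fourier inversion (orthogonality of plane waves in the integral formula~\eqref{eq.quantpull}), the quantum pullback of such a test exponential collapses to a pure phase whose exponent is the quantum generating function itself:
\begin{equation*}
\hat\F^*\bigl[e^{\frac{i}{\hbar}q_i y^i}\bigr](x) \;=\; e^{\frac{i}{\hbar}S_\hbar(x,q)}\,.
\end{equation*}
This reduces everything to a single identity at the level of generating functions.

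Next I would use a standard WKB expansion: since $\D$ is an $\hbar$-differential operator of the form described in Example~\ref{ex.hdo}, for any (even) phase $f(x)$ one has
\begin{equation*}
\D\bigl(e^{\frac{i}{\hbar}f(x)}\bigr) \;=\; e^{\frac{i}{\hbar}f(x)}\Bigl(H\bigl(x,\tfrac{\partial f}{\partial x}\bigr) + O(\hbar)\Bigr)\,,
\end{equation*}
where $H$ is the principal symbol, i.e.\ the master Hamiltonian of the induced $\Sinf$-structure. This is an infinitesimal generalization of the lemma expressing $H$ as a limit. Applying this to the LHS of the intertwining relation after the pullback collapse gives
\begin{equation*}
\D_1\bigl(\hat\F^*[w_q]\bigr)(x) \;=\; e^{\frac{i}{\hbar}S_\hbar(x,q)}\Bigl(H_1\bigl(x,\tfrac{\partial S_\hbar}{\partial x}\bigr)+O(\hbar)\Bigr)\,.
\end{equation*}

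For the RHS, I would first apply $\D_2$ to $w_q$, obtaining $e^{\frac{i}{\hbar}q_i y^i}(H_2(y,q)+O(\hbar))$, and then compute $\hat\F^*$ of this oscillatory wave function using the integral formula and formal stationary phase. The stationary point equations in the variables $(y,q')$ are exactly $q'=q$ and $y^i=(-1)^{\itt}\partial S_\hbar/\partial q_i(x,q)$, i.e.\ the same equations~\eqref{eq.pullequations} that govern the classical pullback; the critical value of the phase is $S_\hbar(x,q)$, and the leading-order symbol is $H_2$ evaluated at the critical point. This yields
\begin{equation*}
\hat\F^*\bigl(\D_2[w_q]\bigr)(x) \;=\; e^{\frac{i}{\hbar}S_\hbar(x,q)}\Bigl(H_2\bigl((-1)^{\itt}\tfrac{\partial S_0}{\partial q},q\bigr)+O(\hbar)\Bigr)\,.
\end{equation*}
Equating the two leading-order expressions, dividing by the common exponential, and letting $\hbar\to 0$ so that $S_\hbar\to S_0$, I obtain the identity
\begin{equation*}
H_1\bigl(x,\tfrac{\partial S_0}{\partial x}(x,q)\bigr) \;=\; H_2\bigl((-1)^{\itt}\tfrac{\partial S_0}{\partial q}(x,q),q\bigr)\,,
\end{equation*}
which, since $q$ is an arbitrary formal covector variable, is exactly the Hamilton--Jacobi equation~\eqref{eq.schouten} characterizing a classical $\Sinf$ thick morphism. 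Hence $\F = \lim_{\hbar\to 0}\hat\F$ is $\Sinf$ with respect to the induced $S_\infty$-structures.

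The main technical point I expect to require care is the justification of the formal stationary phase expansion for the oscillatory integral defining $\hat\F^*$ acting on symbols of the form $e^{\frac{i}{\hbar}qy}\cdot(\text{slow in }\hbar)$; this is exactly the setting formalized in~\cite{karabegov:formal} and used already to validate Theorem~\ref{thm.classlim}, so the work reduces to invoking that framework and tracking the Berezinian signs from the odd $y^i$, which account for the $(-1)^{\itt}$ appearing in the stationary condition.
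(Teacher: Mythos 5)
Your argument is correct and is essentially the intended one: the paper defers the proof to~\cite{tv:microformal}, but the supporting Lemma expressing $H(x,p)=\lim_{\hbar\to 0}e^{-\frac{i}{\hbar}x^ap_a}\D(e^{\frac{i}{\hbar}x^ap_a})$ and the integral formula~\eqref{eq.quantpull} are set up precisely for probing the intertwining relation with plane waves $e^{\frac{i}{\hbar}q_iy^i}$, collapsing the quantum pullback to $e^{\frac{i}{\hbar}S_{\hbar}(x,q)}$, and comparing leading WKB symbols on both sides. Your identification of the stationary point $(q'=q,\ y^i=(-1)^{\itt}\lder{S_{\hbar}}{q_i})$, the unit Hessian factor absorbed by the normalization of $\Dbar q$, and the resulting Hamilton--Jacobi equation~\eqref{eq.schouten} in the limit $\hbar\to 0$ are all as they should be.
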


Explicitly: the intertwining relation $\D_1\circ \hat \F^* = \hat\F^*\circ \D_2$ 
implies the Hamilton-Jacobi equation for  
the classical thick morphism $\F=\lim\limits_{\hbar\to 0}\hat\F$\,:
\begin{equation}\label{eq.classthicksinfegor}
    H_1\Bigl(x,\der{S}{x}\,\Bigr)=H_2\Bigl(\der{S}{q},q\Bigr)\,.
\end{equation}

Note: that $\F^*$ is an $\Linf$-morphism of classical brackets follows \emph{per se} from the statement for quantum brackets and $\hat \F^*$. However Theorem~\ref{thm.egorovtype} is a subtler statement: that a  BV  quantum  thick morphism (the intertwining condition for $\D$-operators) induces a classical $\Sinf$ thick morphism (the condition expressed by the Hamilton--Jacobi equation~\eqref{eq.classthicksinfegor}. One can see an analogy with the famous Egorov theorem~\cite{egorov:canonicpdo1971}, which   was one of motivating examples for H\"{o}rmander's theory of Fourier integral operators~\cite{hoer:fio1-1971}. This poses the question about a possibility of quantization for the whole picture: i.e.   lifting of an $\Sinf$-structure to a $S_{\infty,\hbar}$- (= quantum BV) structure and lifting of a classical $\Sinf$ thick morphism to a BV quantum thick morphism. See more in~\cite{tv:microformal}.

\section{Potential further development. Some problems and open questions}
\label{sec.questions}

\subsection{``Non-linear algebra-geometry duality''}
\begin{enumerate}[i)]
\item Define a \emph{non-linear  homomorphism} of (super)al\-gebras to be a non-linear map $A_1\to A_2$ (variant: formal map) such that its derivative at every element $a\in A_1$ is an algebra homomorphism. Question: how to describe such maps?
    \item In particular, is it true that all such non-linear homomorphisms between   algebras $\fun(M)$ are pullbacks by thick morphisms?
\end{enumerate}

 \subsection{Other questions}
\begin{enumerate}[i)]
    \item ``Thick manifolds'': if we consider thick diffeomorphisms, what can be obtained by gluing? Other ''thick`` notions?
    \item Action of thick morphisms on forms, cohomology, etc. ...  (See the  action on tangent bundles~\cite{tv:tangmicro}.)
    \item Analyze analogy between pullbacks by thick morphism with spinor or metaplectic  representation.
    \item Find a characterization of quantum pullbacks among general Fourier integral operators. Find the derivative of the non-linear map $\hat\F^{!}$ (see~\ref{subsubsec.phishriek})
    \item Find a description of quantum and classical pullbacks by generating functions depending on arbitrary variables (as standard in F.I.O. theory). Possibly obtain this way a coordinate-free formulation. (Work in this direction has been very recently initiated by A.~Karabegov.)

    \item Develop more seriously the idea of ``nonlinear homological algebra'' taking graded $Q$-manifolds as a basis and connecting this with the framework of derived algebraic geometry on one hand and practical needs of physics (managing arbitrary choices in BRST formalism) on the other. (Some work relating derived geometry and $Q$-manifolds has started, see e.g.~\cite{pridham:outline2018} and \cite{Carchedi:1211.6134, carchedi-royt:homalg2012}.)
    \item Merge $Q$-manifolds (and the more general picture with derived geometry) with thick morphisms and microformal geometry.
\end{enumerate}

\bibliography{allbibtex}

\bibliographystyle{prop2015}

\end{document}